\documentclass[USenglish,onecolumn]{article}
\usepackage[utf8]{inputenc}

\def\thick#1{\hbox{\rlap{$#1$}\kern0.25pt\rlap{$#1$}\kern0.25pt$#1$}}

\def\smbalpha{\boldsymbol{{\scriptstyle{\alpha}}}}

\def\smbalpha{\widehat{\smbalpha}}

\def\hbar{\bar{ h}}

\def\E{\mbox{E}}

\def\inv{^{-1}}

\def\mybox#1{\vskip1mm \begin{center}
        \hspace{.0\textwidth}\vbox{\hrule\hbox{\vrule\kern6pt
\parbox{.9\textwidth}{\kern6pt#1\vskip6pt}\kern6pt\vrule}\hrule}
        \end{center} \vskip-5mm}
\def\lboxit#1{\vbox{\hrule\hbox{\vrule\kern6pt
      \vbox{\kern6pt#1\vskip6pt}\kern6pt\vrule}\hrule}}

\def\thickboxit#1{\vbox{{\hrule height 1mm}\hbox{{\vrule width 1mm}\kern6pt
          \vbox{\kern6pt#1\kern6pt}\kern6pt{\vrule width 1mm}}
               {\hrule height 1mm}}}

\def\fat#1{\hbox{\rlap{$#1$}\kern0.25pt\rlap{$#1$}\kern0.25pt$#1$}}

\newcommand{\I}{{\sf I}}

\usepackage{amsthm}
\usepackage{xcolor}
\newtheorem{remark}{Remark}
\newtheorem{proposition}{Proposition}
\usepackage{graphicx,verbatim,array,multicol, subfigure, color, lscape, mathrsfs, dsfont}
\usepackage{psfrag, amsmath, amsfonts, epsfig, fancybox, setspace, soul, amsthm, longtable, threeparttable, threeparttablex, makecell, xcolor, pdflscape, tabu, booktabs, natbib, amssymb}
\usepackage{booktabs}  
\usepackage{multirow}  
\usepackage{array}     
\usepackage[margin=1in]{geometry}  
\usepackage{authblk}
\usepackage[normalem]{ulem}
\usepackage{cancel}

\def\P{\text{Pr}}

\def\inv{{-1}}

\def\E{\mathbb{E}}
\def\P{\mathbb{P}}
\def\V{\mathbb{V}}
\def\I{\mathbb{I}}

\newtheorem{assumption}{Assumption}
\newtheorem{lemma}{Lemma}
\newtheorem{corollary}{Corollary}
\newtheorem{definition}{Definition}

\newtheorem{intassumption}{Assumption}
\numberwithin{intassumption}{assumption}
\makeatletter
\@addtoreset{intassumption}{assumption}
\makeatother

\usepackage{setspace}
\begin{document}

\title{State policy heterogeneity analyses: considerations and proposals}

\author[1]{Max Rubinstein\thanks{Corresponding author: mrubinstein@rand.org}}
\author[1]{Megan S. Schuler}
\author[2]{Elizabeth A. Stuart}
\author[1]{Bradley D. Stein}
\author[1]{Max Griswold}
\author[3]{Elizabeth M. Stone}
\author[1]{Beth Ann Griffin}

\affil[1]{RAND Corporation}
\affil[2]{Department of Biostatistics, Bloomberg School of Public Health, Johns Hopkins University}
\affil[3]{Rutgers Institute for Health, Health Care Policy, and Aging Research, Rutgers University}

\maketitle
\begin{abstract}
State-level policy studies often conduct heterogeneity analyses that quantify how treatment effects vary across state characteristics. These analyses may be used to inform state-specific policy decisions, or to infer how the effect of a policy changes in combination with another policy or state characteristics. However, in state-level settings with varied contexts and policy landscapes, multiple versions of similar policies, and differential policy implementation, the causal quantities implicitly targeted by these analyses may not align with the inferential goals. This paper clarifies these issues by distinguishing several causal estimands relevant to heterogeneity analyses in state-policy settings, including state-specific treatment effects, conditional average treatment effects, and controlled direct effects. We argue that the conditional average treatment effect is often the easiest to identify and estimate, but may not be the most policy relevant target of inference. Moreover, the widespread practice of coarsening distinct policies or implementations into a single treatment indicator further complicates the interpretation of these analyses. Motivated by these limitations, we propose bounding state-specific treatment effects as an alternative inferential goal, yielding ranges for each state's policy effect under explicit assumptions that quantify deviations from the ideal identifying conditions. These bounds target a well-defined and policy-relevant quantity, what the effect would be for specific states. We develop this approach within a difference-in-differences framework and discuss how the required sensitivity parameters may be informed using pre-treatment data. Through simulations we demonstrate that bounding state-specific effects can more reliably determine the sign of the state-specific effects than conditional average treatment effect estimates. We then illustrate this method to examine the effect of the Affordable Care Act Medicaid expansion on high-volume buprenorphine prescribing.
\end{abstract}

\section{Introduction}

State-policy analyses often use differential adoption of similar policies across U.S. states to quantify the average effect of different policies. Beyond estimating average effects, another important aim is to understand effect \emph{heterogeneity}; that is, to understand whether and why effects differ across states. These analyses may be used to inform context-specific decision making, including whether specific states are likely to benefit from a policy, or whether some combination of policies is more effective than another (see, e.g., \cite{garthwaite2019all, antonelli2024autoregressive, smart2024investigating}). A common strategy to estimate effect heterogeneity is by regressing outcomes on policy indicators and interactions between the policy indicators and observed state characteristics using generalized linear models. Analysts may then use the coefficients on the interaction terms both to quantify effects among states with different characteristics, and to determine which factors may drive apparent effect heterogeneity.

We argue that in state-level settings these kinds of heterogeneity analyses often target causal quantities that do not align with the questions researchers and policymakers intend to answer. For example, policy-relevant questions such as ``What was or would be the effect of adopting this policy in \emph{this state}?'' or ``Would this policy work better if combined with another policy?'' are not generally answered by the types of associational statements, such as ``Effects were estimated to be larger/smaller in states with these characteristics'', typically made in heterogeneity analyses \cite{hettinger2025causal, vegetabile2021distinction}. That is, effect heterogeneity analyses typically only reveal \emph{associations} between causal effects and state characteristics, not causal effects themselves. These interpretative challenges are compounded in state policy settings by the common practice of collapsing distinct policies or implementations into a single treatment indicator, or treatment coarsening, further obscuring the link between heterogeneity analyses and policy-relevant causal quantities \cite{vanderweele2013causal, heiler2024heterogeneous}.

This paper makes three primary contributions. First, we organize common reasons for apparent effect heterogeneity in state-policy studies, including population composition, contextual environment, policy co-occurrence, policy grouping, and policy implementation, and explain why standard causal estimands cannot distinguish among all these factors separately. Second, we distinguish several causal estimands that may be conflated in applied research: the state-specific treatment effect (ITE), the conditional average treatment effect (CATE), and the controlled direct effect (CDE), and we discuss how treatment coarsening further complicates the definition of these quantities. Third, motivated by the conceptual challenges and data limitations in this setting, we propose bounding state-specific treatment effects as an alternative inferential goal. Building on \cite{manski2018right,rambachan2023more}, we develop one bounding approach within a difference-in-differences framework, discuss how sensitivity parameters can be informed by pre-treatment data, and illustrate the performance of this method in simulations. Finally, we apply this method to estimate state-specific effect bounds for the Affordable Care Act Medicaid expansion on high-volume buprenorphine prescribing.

We begin by discussing why state policy effects may vary across states at all, both cross-sectionally and over time. 

\section{Potential factors driving effect heterogeneity}\label{sec:2}

We propose five broad sources of effect heterogeneity that may apply to any study of state policy changes that operate cross-sectionally at a specific time-period. These categories are not mutually exclusive, and some proposed categories may be reframed as another. Moreover, our proposed categories fall into two different conceptual groups: the first three discuss heterogeneity under a well-defined intervention, while the second relate to heterogeneity induced by grouping similar policies as a single intervention. We then discuss factors that drive heterogeneity across time, again separating genuine time-variation from analyst-induced apparent heterogeneity. Throughout, we assume no interference across states: that is, each state's outcome depends only on its own policy, so that a state-specific effect is well defined. Were this to fail, for instance through cross-border spillovers, differential exposure across states would itself generate apparent heterogeneity, both cross-sectionally and over time, though further exploring this is beyond the scope of this paper.

\subsection{Cross-sectional effect heterogeneity}

\subsubsection{Variation in state population composition}

When individual‑level traits exist that modify a policy’s impact and that are distributed unevenly across states, aggregate state-level policy effect differences may emerge.

\textbf{Example:}
\begin{itemize}
    \item Suppose a naloxone education policy more strongly influences younger adults’ behavior. For an analysis that averages over the entire population, states with younger populations may appear to experience larger effects even though the policy mechanism operates through individual-level responses.
\end{itemize}

\subsubsection{Variation in state contextual factors}
Differing physical, social, and economic contexts across states may alter a policy's efficacy. Crucially, these are features of a person's \emph{environment}, not of the person themselves, and so they may modify a policy's effect even holding individual characteristics fixed.

\textbf{Examples:}
\begin{itemize}
    \item Suppose a naloxone education policy works partly through social diffusion: an individual who learns about naloxone shares that knowledge with peers. The policy’s effect may then be more pronounced in a more socially connected community, where information propagates more readily across social networks. Here the policy's effect on a given person depends not only on their own characteristics but on the social environment around them---a contextual effect distinct from the population-composition channel discussed above.
    \item Good Samaritan Laws may work better in densely populated states where bystanders are more likely to witness overdoses.
    \item Variability in the illicit drug market, for instance prevalence of fentanyl or other adulterants, may change the apparent effectiveness of overdose‑prevention policies.
\end{itemize}

\begin{remark}
    When only state-level data are available, as we assume throughout, individual-level effect modification cannot be separately identified from contextual effect modification operating through the same aggregated characteristic. Unless one is willing to assume, in a given application, that either the individual-level or contextual channel is absent, a state-level heterogeneity analysis captures their combined contribution. In this sense, the aggregate characteristic may be useful for describing effect heterogeneity, but state-level data alone do not identify the mechanism through which that heterogeneity arises.

    We nevertheless retain the distinction because the two mechanisms are conceptually different: a state-level effect modifier can be viewed as combining an \emph{individual-response} channel, that is, how an individual responds to the policy as a function of their own characteristics, and a \emph{contextual} channel, or how an individual responds as a function of their environment, such as that same characteristic aggregated across the state. State-level data may identify both effects together, not each component (or their interaction). Nevertheless, naming the two channels clarifies what an estimated association may, and may not, reveal about the mechanism driving heterogeneity.\footnote{Formally, for individuals $i$ in state $j$, if $Y_{ij}(1)-Y_{ij}(0) = \beta X_{ij} + \gamma \bar X_j^{-i}$, where $\bar X_j^{-i}$ is the state-average without individual $i$, then the state-level effect is $\bar Y_j(1)-\bar Y_j(0) = (\beta+\gamma)\bar X_j$. In other words, State-level data may identify $\beta+\gamma$ but neither component separately (though, of course, with individual-level data we could feasibly separately estimate these components). This model could readily be extended to include interactions or non-linear terms, but is kept simple here for illustration.}
\end{remark}

The preceding sources describe genuine heterogeneity for a well-defined policy. The following two sources instead reflect heterogeneity induced by analyst choices about treatment definition that may manufacture apparent heterogeneity, even when the underlying effects are constant.

\subsubsection{Policy climate and co-occurring policies}
State policymakers rarely change one law at a time. The effect of a single policy in the presence of another policy may differ \citep{schuler2020state}, leading to effect heterogeneity when different states have different policy environments.\footnote{State policy climate could be considered part of a state's contextual environment. However, in the context of conducting policy evaluation studies, state policy climate is a key factor worth explicitly highlighting.} 

\textbf{Examples:}
\begin{itemize}
    \item Good Samaritan Laws may amplify the benefits of Naloxone Access Laws (NALs) by reducing the fear of calling emergency services.
    \item During the COVID‑19 pandemic, restrictions on in‑person appointments may have dampened the impact of continuing-education mandates intended to improve clinicians’ knowledge of addiction treatment, as well as other policies designed to expand treatment access.
\end{itemize}

\subsubsection{Policy grouping}
In contrast to randomized experiments, where interventions are ideally operationalized as a uniform set of instructions or a manual \cite{almirall2016adaptive}, uniformity of policy interventions in observational settings, including state-level policy settings, is even less common. Instead, grouping related policies together as a single intervention is a common practice, and is sometimes explicitly recommended \citep{seewald2024target}. While this may increase the statistical power of an analysis when the grouped policies have similar effects, it may reduce power when they do not, due to additional noise induced by underlying effect differences.\footnote{Grouping continuous treatment variables into categories is also sometimes recommended in the difference-in-differences literature as a way to increase power \cite{callaway2024difference,hettinger2025multiply,zhang2025continuous}. Again these groupings, in either a difference-in-differences or more general setting, may suffer from the same problems if treatment effects in fact vary substantially within categories.} Moreover, in this latter case, policy grouping may mask meaningful legal differences that correlate with policy efficacy, leading to apparent heterogeneity with regard to the aggregated, or coarsened, policy indicator \citep{heiler2022heterogeneity}.

\textbf{Examples:}
\begin{itemize} 
    \item Binary indicators for Prescription Drug Monitoring Program (PDMP) adoption conceal meaningful differences between so-called mandatory‑use programs, which require prescribers and/or pharmacists to check the PDMP prior to prescribing or dispensing a controlled substance such as opioids, and voluntary programs, which require PDMP registration but not their use \citep{schuler2021methodological}. Using a binary indicator to group these PDMP policies may obscure true variation in PDMP effectiveness, as studies have suggested that mandatory use PDMPs have a stronger impact on opioid prescribing than voluntary use PDMPs. 
    \item NALs include diverse provisions, including standing orders, third‑party prescribing, co‑prescribing mandates, and community distribution, that vary widely in scope across states and may also result in varying efficacy of the overall policy.
\end{itemize}

\subsubsection{State differences in policy implementation and enforcement}
Even identical statutes may differ in execution. Financing, administrative capacity, the timing of implementation, and enforcement intensity influence realized effects \citep{mcginty2024scaling}. This can be viewed as a form of the policy-grouping problem discussed above: the written statute may be the same, but the policy as delivered differs across states. In this sense, implementation is part of the intervention ``manual'' that would ideally define the policy change, alongside the formal legal provisions themselves.

\textbf{Examples:}
\begin{itemize}
    \item PDMPs vary in integration with electronic health records and in proactive monitoring practices.
    \item Regulatory environments for legal recreational cannabis vary widely across states in terms of local regulations for licensing, retail density, product potency, and taxation; these factors impact both access and consumption.
\end{itemize}

We again emphasize that the first three sets of factors -- demographic and contextual factors and the policy environment -- pertain to the efficacy of a distinct policy and how that may change depending on other features of the state environment. The second two factors pertain to an analyst's decisions about how to define a policy. In fact, as we will later show formally, the underlying policies may have constant effects across states, but the analyst's decision to group them as a distinct intervention will result in apparent heterogeneity. Regardless, for a given analysis, all aforementioned sources of variation may operate simultaneously and interact. 

\subsection{Time-varying effect heterogeneity}\label{ssec:timevarying}

So far we have considered heterogeneity across states at a single post-treatment time-period. When effects are instead measured over multiple periods, an additional layer of heterogeneity arises: the realized effect of a policy may differ depending on \emph{when} it is measured and \emph{how long} it has been in force. As in the cross-sectional case, it is useful to separate sources that reflect genuine variation in the effect of a well-defined policy from those that reflect the analyst's choices about how effects are aggregated. We discuss each in turn, and then highlight a conceptual challenge specific to staggered adoption.

\subsubsection{Dynamic effects and calendar-time variation in realized effects}

A key source of time-variation is the dependence of a policy's effect on the duration it has been in effect, often called dynamic effects. This may arise as implementers learn to administer a policy more effectively, as individuals and institutions gradually adjust their behavior, or simply because effects accumulate, decay, or otherwise change with exposure length \citep{schuler2020state}.

\textbf{Examples:}
\begin{itemize}
    \item As people become more aware of and comfortable with naloxone following the adoption of a NAL, the law's efficacy may grow with time since adoption.
    \item Some have hypothesized that NALs may increase opioid misuse insofar as they make opioids appear safer \cite{tas2019should}. If true, such behavioral responses may increase as a function of time since NAL adoption.
\end{itemize}

Second, the realized effect of a policy may differ across calendar periods because the conditions under which the policy operates change over time. State demographic and contextual composition may shift, complementary policies may be enacted or repealed, and administrations may change how policies are implemented and enforced. Consequently, even if the effect of a policy were stable conditional on these factors, the realized state-level effect could depend on the calendar period in which it is measured.

\textbf{Examples:}
\begin{itemize}
    \item Suppose again that a naloxone education policy influences only younger adults' behavior, and that young people are disproportionately moving to a particular state over time. We may then expect larger effects in later periods.
    \item A change in administration may alter the implementation of a state's social services program, shifting the health care context in which any policy operates and thereby changing its effect across time periods.
\end{itemize}

\begin{remark}[A conceptual challenge specific to staggered adoption]\label{rmk:sa}
    When all treated states adopt at a single time $T$, calendar time and exposure duration are collinear by construction: there is a single post-treatment horizon, and, without strong assumptions, the two cannot be told apart simply because they do not vary separately. Under staggered adoption, by contrast, states adopt at different times, so for a state adopting at $T_i$ and observed at $t > T_i$ the realized effect reflects both the calendar period $t$ (through time-varying factors) and the exposure duration $t - T_i$ (through dynamic effects). These two channels now vary separately across cohorts, but they are generally confounded: without further assumptions, a state's effect at $t$ cannot be decomposed into ``the policy operates differently in this calendar period'' versus ``the policy has been in effect longer.'' This mirrors the cross-sectional non-identifiability of individual- versus contextual-level effect modification discussed in Section \ref{sec:2}: in both cases state-level data are typically thought to identify a combination of conceptually distinct sources, not their separate contributions.
\end{remark}

\subsubsection{Analyst-induced apparent time-variation: aggregation across cohorts}

Just as treatment coarsening and differential implementation can produce apparent cross-sectional heterogeneity (Section \ref{sec:2}), analytic choices of how to aggregate effects over time and across adoption cohorts can generate apparent time-varying dynamics that do not reflect change in the underlying policy effects.\footnote{Beyond these estimand-definition issues, particular estimators may 
introduce apparent time-variation as an artifact of bias. For example, two-way fixed-effects and event-study specifications under staggered adoption can produce spurious apparent dynamic effects when treatment effects are heterogeneous, because already-treated units serve as contaminated controls for later adopters 
\citep{goodman2021difference, gardner2022two}. However, these are estimation rather than identification concerns, and can be addressed through alternative estimation strategies.}

\textbf{Examples:}
\begin{itemize}
    \item States that adopted NALs early (e.g., the first wave around 2013–-2015) may differ systematically from later adopters in baseline opioid burden and treatment infrastructure. An estimand that averages the effect of NAL adoption across this changing set of adopters may then display apparent time ``dynamics'' that reflect who is adopting in each period rather than any change in a given state's response over time.
\end{itemize}

\section{Causal estimands quantifying heterogeneity}\label{sec:1}

This section formalizes several causal estimands that are commonly invoked in state-policy heterogeneity analyses but are often conflated in practice: the state-specific treatment effect (ITE), the conditional average treatment effect (CATE), and the controlled direct effect (CDE). We use these definitions to clarify what kinds of policy questions each estimand can and cannot answer, and to explain how treatment coarsening further complicates their interpretation. For conceptual clarity, we focus on effects at a single post-treatment time-period.

\subsection{Setup}

We observe $i = 1, \dots, N$ states over $t = 1, \dots, T$ time periods, where treatment occurs at time $T$. Let $M_i \in \{0, 1, \dots, k\}$, where $M_i = 0$ represents no policy adoption, and $M_i = m > 0$ represents the adoption of a specific policy or implementation. We define the coarsened treatment indicator $A_i = \I(M_i > 0)$. Let $Y_{it}$ denote the observed outcome and $Y_i(m)$ denote the potential outcome under treatment $M_i = m$ evaluated at time $T$ (we suppress the dependence on time for simplicity). We assume SUTVA -- in particular, the no-interference-between-states assumption noted in Section \ref{sec:2}, so that $Y_i = Y_i(M_i)$ depends only on state $i$'s own policy. We also assume no anticipatory effects, so that $Y_{it}(m) = Y_{it}(0) = Y_{it}$ for all $m$ when $t < T$. Finally, let $S_{it} =(Z_{it}, X_{it})$ denote observed state-level covariates, which may summarize individual-level variables, contextual features, or other aspects of the policy environment. We assume $S_{it}$ is unaffected by policy adoption ($S_{it} = S_{it}(m), \forall m$). We partition $S_{it}$ simply to recognize that we may be interested in heterogeneity with respect to some subset of the covariates $X_{it}$, rather than all available covariates $S_{it}$. We therefore focus our discussion on heterogeneity with respect to $X_{it}$.

Finally, we let $\P_N(O)$ denote the empirical average of some variable $O$ over the $N$ states, ($N^{-1}\sum_{i=1}^NO_i$), and $\Pr_N$ to denote the probability of some discrete variable taking on some value over the $N$ states $(\Pr_N(O_i = o) = N^\inv \sum_{i=1}^N\I(O_i = o)$ for a discrete variable $O$).

\subsection{Causal estimands: well-defined treatment}

We first define causal quantities, or estimands, relevant to heterogeneity analyses, noting that these are not the only estimands one may consider (see, e.g., \cite{hettinger2025causal}). The first is the state-specific treatment effect (ITE).\footnote{ITE conventionally stands for individual-treatment effect, where the individual is the unit. We use the term ``state-specific'' to clarify that we are not talking about individuals in a population at the sub-state level.}

\begin{definition}[State-specific treatment effect]
    For state $i$ and policy version $m$,
\begin{align*}
    \psi_{i,m} = Y_i(m) -Y_i(0).
\end{align*}    
\end{definition}

\noindent The ITE represents the effect of implementing policy $M = m$ versus no policy adoption for a specific state at the observed covariate level $S_i$. $\psi_{i,m}$ may differ across states for the first three reasons noted in Section \ref{sec:2}, including individual-characteristics, contextual factors, or the state policy environment. However, by definition we preclude heterogeneity due to treatment coarsening or differential implementation, since $\psi_{i,m}$ is by definition a function of a fixed and well-defined policy. 

Even if we knew $\psi_{i,m}$ for all units, we would not know what factors either cause or associate with variation in $\psi_{i,m}$ across units. The next two estimands attempt to quantify this. First, we consider the conditional average treatment effect (CATE).

\begin{definition}[Conditional average treatment effect]
    For policy $m$ and covariate value $x$,
    \begin{align*}
        \tilde\psi_m(x)= \E[Y(m)-Y(0)\mid X=x].
    \end{align*}
\end{definition}

The value $\tilde\psi_m(x)$ is a pointwise conditional causal contrast that describes the average effect of policy $m$ among units with $X=x$. Effect heterogeneity with respect to $X$, however, is characterized by the shape of the CATE function across values of $x$. Comparisons along this curve reveal how treatment effects are \emph{associated} with $X$, but do not by themselves imply that intervening on $X$ would change the treatment effect.

We may alternatively define the controlled direct effect (CDE) \cite{pearl2022direct},

\begin{definition}[Controlled direct effect]
    For policy $m$ and covariate value $x$,
    \begin{align*}
        \psi_m(x) = \E[Y(m,x)-Y(0,x)].
    \end{align*}
\end{definition}

Analogously, $\psi_m(x)$ is a pointwise causal contrast that describes the average effect of policy $m$ under an intervention that sets $X=x$. The shape of the CDE function across values of $x$ describes how the CDE varies across possible interventions on $X$. Unlike the CATE curve, variation in the CDE curve has a \emph{causal} interpretation with respect to $X$.\footnote{We could alternatively consider the conditional estimands $\psi_m(x,z) = \E[Y(m,x)-Y(0,x)\mid Z = z]$ or $\tilde\psi_m(x,z) = \E[Y(m)-Y(0) \mid X = x, Z = z]$ with no additional complications in the remaining discussion.} In what follows, we distinguish between evaluating an estimand at a fixed value of $x$ and comparing the estimand across values of $x$. The former gives a pointwise causal contrast, while the latter describes the shape of the effect curve over the covariate space.

\begin{remark}[Longitudinal extensions]
    We may also define analogues to these estimands in a more general longitudinal setting. Letting $\bar M_{it}$ be the observed sequence of policies from time $T_0 + 1,\dots, t$ that takes any value $\bar m_t$, we may define the potential outcome $Y(\bar m_t)$ and $Y(\bar 0_t)$ to define causal contrasts. In cases where $X_{it}$ is also time-varying, we may use the same notation. The discussion below will also apply in this setting.
\end{remark}

The ITE, CATE, and CDE are thus conceptually distinct and need not coincide. To illustrate, consider the case where the intervention is NALs and PDMPs is the covariate. The state-specific effect asks: what is the effect of a particular kind of NAL for a specific state? The CATE asks: what is the average effect of a particular kind of NAL among states that have PDMPs? The CDE asks, what would the average effect of a particular kind of NAL be if we could force all states to adopt PDMPs? Comparisons of CDEs at different combinations of $x, m$ could justify causal claims such as, ``A particular kind of NAL works better on average in combination with PDMPs'' (e.g., for binary $X$, $\psi_m(1)-\psi_m(0)$). By contrast, a similar comparison of CATEs ($\tilde\psi_m(1) - \tilde\psi_m(0)$), only supports the associational statement ``The average effect of a particular kind of NAL is larger among states with PDMPs than among states without PDMPs.'' In this second case the larger effect may or may not be due to PDMPs or some other factor associated with states having PDMPs. Finally, the ITE need not equal the CDE or the CATE, even in sign, unless there is no unobserved variation in treatment effects beyond that explained by the particular NAL and PDMPs. However, this is unlikely ever to hold, which highlights an important point: while some refer to the CATE as an ``individualized treatment effect,'' such effects are in truth an average that may differ even in sign from the true ITE \cite{vegetabile2021distinction}.

In brief, the CATE function blends causal and associative components, and may appear to vary across $X$ even when the quantity $Y_i(m,x)-Y_i(0,x)$ is constant.\footnote{This differs from the previously defined ITE, which evaluates the outcome at the realized values of $X_i$, so that $\psi_{i,m} = Y_i(m, X_i) - Y_i(0, X_i)$, and therefore $Y_i(m) = \sum_x Y_i(m, x)\I(X_i = x)$.} In state‑policy settings, this kind of confounding is common: unmeasured demographic or political contexts often correlate with other measured effect modifiers such as income, rurality, or various state policy indicators. Consequently, we argue that the CDE function, which reveals strictly causal relationships, typically has greater policy relevance than the CATE \cite{hettinger2025causal}. The ITE, on the other hand, also gives policy relevant information -- what was or would be the causal effect for a specific state -- however, it does not suggest what factors may modify the treatment effect, for which even the CATE provides some, albeit associational, evidence. Finally, we can rarely generalize from the CATE or the CDE to the ITE, in the same way we cannot generalize averages to make inferences about states or individuals \cite{vegetabile2021distinction}. 

We conclude by discussing a more abstract but commonly estimated quantity: a \textit{linear projection} of the CATE. Because data in state-policy analyses are limited, it is common to specify a working model of the CATE function, such as $\tilde\psi_m(x) = \tilde\delta_m + \tilde\beta_mx$, and estimate this model using OLS. However, like all parametric models, these models are likely to be misspecified -- that is, the true CATE function is not likely to be a simple line with respect to $x$, but some more complex function. Fortunately, we may interpret corresponding OLS parameter estimates as the best linear projection of the true underlying CATE: heuristically, this can be thought of as a line that best approximates the CATE \cite{angrist2009mostly}. This projection may or may not be a desirable quantity to estimate, depending on how well the specified line fits the true function. While we may not truly be interested in the projection parameters, from a practical standpoint, unless $X$ represents a small number of binary indicators, projections are often statistically easier to estimate than the true CATE \cite{angrist2009mostly, buja2019models, rubinstein2023heterogeneous}.

\begin{definition}[OLS projection of the CATE]\label{def:proj}
    We define the least-squares linear projection of the CATE as,
    \begin{align*}
       x^\top\beta_m, \qquad \beta_m=\arg \min_{b} \E\bigl[(\tilde\psi_m(X) - X^\top b)^2\bigr].
    \end{align*}
\end{definition}

\noindent Any of the aforementioned factors -- individual, contextual, and policy differences -- may drive apparent heterogeneity across either the ITE, CATE, CDE, or the linear projection of the CATE. However, we have so far not considered how treatment coarsening or differential implementation may affect the interpretation of an analysis.

\begin{remark}
    The use of the expectation function $\E$ in the definitions above suggests that the units are drawn from some larger population. In this case, these estimands are population effects. However, we may also define analogous estimands solely with respect to the observed sample, replacing expectations with empirical averages under $\P_N$. In this case, these quantities are sample effects. For state-level policy analyses, this latter concept is often more natural, since there is no literal super-population of states. It may nevertheless be useful to conceive of such a super-population, both to define causal quantities and to conduct statistical inference. For example, inference for population quantities can be conservative for the corresponding sample analogues when only treatment assignment is viewed as random \citep{imbens2004nonparametric}. Moreover, some sample analogues may not be well-defined for all covariate values. For instance, the sample CATE is not defined at unobserved values of $X$, which is especially relevant when $X$ is continuous. For simplicity, the remaining discussion uses notation that suggests a super-population framework, but we emphasize that sample averages are often the more natural conceptual targets in state-level policy analyses.
\end{remark}

\subsection{Causal estimands: coarsened treatment}\label{ssec:coarsening}

Distinct policy versions or implementations are often grouped into a single treatment indicator $A$ in many applications (see, e.g., \cite{rubinstein2023balancing}). For treated states, we may take $Y(A = 1) = Y(M)$, and define the ITE, CDE, and CATE among treated states as before. However, without imposing further structure on this problem, $Y(A=1)$, and therefore the CATE and CDE are not well-defined among non-treated states. In Appendix \ref{appsec:coarsening}, we consider one solution to this problem, largely following \cite{vanderweele2013causal}. In short, we view $(A, M)$ as a two-stage randomization process. First, $A$ is randomized to determine whether $M = 0$ or $M > 0$; second, $M$ is randomized for $m > 0$ when $A = 1$ \citep{vanderweele2013causal, heiler2022heterogeneity}. We further assume that $A$ has no effect on $Y$ except via $M$, so that we may then define the potential outcome $M(A = 1)$ for all units, and set $Y(A = 1) = Y(1,M(A = 1)) = Y(M(A=1))$.\footnote{In many applications the ``control'' regime is also not literally unique, so that $Y(A = 0)$ is also not well-defined. For example, untreated states may remain under different status quo policies (i.e., there may be multiple versions of $M=0$). When this occurs, one can either expand $M$ to index those status quo versions as additional values, or interpret $Y(A=0)$ as a versioned control potential outcome $Y(M(A=0))$, as in \cite{vanderweele2013causal}. However, for simplicity, we proceed by assuming a single well-defined control condition, so that $Y(A = 0)$ is well-defined for all states. The key conceptual point is that coarsening can create mixtures of policy versions on the treated side -- and potentially on the control side as well.}

We may define coarsened versions of the ITE, CATE, and CDE (and their projections), which we denote as $\psi_{i,M(1)}, \tilde\psi_A(x)$, and $\psi_A(x)$, respectively, as follows: 

\begin{definition}[Coarsened ITE]
    \begin{align*}
        \tilde\psi_{i, M_i(1)} &= Y_i(M_i(A_i=1)) - Y_i(A_i=0).
    \end{align*}
\end{definition}

\noindent The coarsened ITE is simply the ITE $\psi_{i,m}$ evaluated at the value of $m = M(1)$, motivating the notation $\psi_{i, M_i(1)}$. We next define the coarsened CATE and CDE as,

\begin{definition}[Coarsened CATE]
    \begin{align*}
        \tilde\psi_{A}(x) &= \E[Y(A=1) - Y(A = 0) \mid X = x] \\
        &= \sum_{m > 0}\E[Y(m) - Y(0) \mid M(1) = m, X=x]\Pr(M(1) = m\mid x),
    \end{align*}
\end{definition}
\noindent and
\begin{definition}[Coarsened CDE]
    \begin{align*}
        \psi_{A}(x) &= \E[Y(A=1, x) - Y(A = 0, x) ] \\
        &= \sum_{m > 0}\E[Y(m, x) - Y(0, x) \mid M(1) = m]\Pr(M(1) = m\mid x).
    \end{align*}
\end{definition}

These definitions show that the coarsened CDE are weighted averages of  $m$-specific CATEs and CDEs that further condition on the stratum where $M(1) = m$, and where the weights are given by the conditional distribution of $M(1)$ given $X$. As noted in Appendix \ref{appsec:coarsening}, when $A$ is randomized with respect to $M(1)$ given $X$, this is equivalent to the conditional distribution of $M$ among the treated states.

While we may mathematically define coarsened versions of the ITE, CATE, and CDE, the interpretation of these quantities remains challenging. For example, any observed heterogeneity with respect to $A$ may simply reflect heterogeneity arising from different versions of treatment. We formalize this in Lemma \ref{lemma:1} below.

\begin{lemma}\label{lemma:1}
    Assume that $Y_i(m)-Y_i(0) = C_m$ for all units and that $k > 1$. Unless (i) $C_m = C$ for all $m$, or (ii) $M(1) = m$ for all units, then $\psi_{i,M(1)}$ will vary across units.
\end{lemma}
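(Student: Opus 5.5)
The plan is to compute the coarsened ITE directly from the constant-effect assumption, read off that it is a deterministic function of $M_i(1)$, and then prove the contrapositive. By definition, $\psi_{i,M_i(1)} = Y_i(M_i(1)) - Y_i(0)$, where $Y_i(A_i=0)=Y_i(0)$ under the single well-defined control assumed in Section~\ref{ssec:coarsening}. Since $M_i(1)=m$ for some $m\in\{1,\dots,k\}$, the assumption $Y_i(m)-Y_i(0)=C_m$ gives
\begin{align*}
\psi_{i,M_i(1)} = \sum_{m=1}^k \I(M_i(1)=m)\,C_m = C_{M_i(1)}.
\end{align*}
So the coarsened ITE depends on the unit only through which version $M_i(1)$ it would receive.

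Next I would show that constancy of $\psi_{i,M(1)}$ across units forces (i) or (ii). Suppose $\psi_{i,M_i(1)}$ is the same for every $i$. Let $\mathcal{M}=\{M_i(1): i=1,\dots,N\}$ be the set of realized versions. Constancy means $C_m=C_{m'}$ for all $m,m'\in\mathcal{M}$. If $|\mathcal{M}|=1$, then every unit shares the same $M(1)$, which is case (ii). If $|\mathcal{M}|>1$, then the constants are equal on the support of $M(1)$, which is the content of (i).

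The main obstacle is the gap between ``$C_m$ equal on the support of $M(1)$'' and ``$C_m=C$ for all $m$'' as literally written in (i). The conclusion only concerns versions some unit would actually receive, so versions outside the support are irrelevant. I would handle this by stating explicitly that $\{1,\dots,k\}$ is taken to be the set of versions in the support of $M(1)$; unused labels can always be dropped. Under that convention, a support with at least two versions, possible because $k>1$, yields (i) exactly.

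For the forward direction, where (i) and (ii) both fail, the argument is direct. Failure of (ii) gives units $i,j$ with $M_i(1)=m\neq m'=M_j(1)$. Failure of (i), within the support convention, gives some pair of support points with $C_m\ne C_{m'}$. If there are exactly two support points, these coincide with the versions of $i$ and $j$. If there are more, not all $C$'s on the support are equal, so some two units that receive versions with different constants have $\psi_{i}\neq\psi_{j}$. Either way, the coarsened ITE varies across units, even though every version-specific ITE $\psi_{i,m}$ is constant across units.
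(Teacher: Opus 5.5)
Your proof is correct and is the argument the paper evidently intends; the paper states Lemma~\ref{lemma:1} without a written proof. Constant version-specific effects give $\psi_{i,M_i(1)}=C_{M_i(1)}$, so the coarsened ITE is constant across units exactly when the $C_m$ coincide on the realized values of $M_i(1)$.

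Your support caveat is a genuine correction, not a technicality. Take $k=3$ and $C_1=C_2\neq C_3$, with units realizing only $M_i(1)\in\{1,2\}$ (both values occurring). Then (i) and (ii) both fail, yet $\psi_{i,M_i(1)}=C_1$ for every $i$. The cleanest fix is to weaken (i) to ``$C_m=C$ for all $m$ in the support of $M(1)$'' rather than relabel $k$. This subsumes (ii) as the singleton-support case, and failure of (i) alone then gives two units with different effects, so your final case split is unnecessary.
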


\noindent Lemma \ref{lemma:1} tells us the state-specific effects will appear to vary even when the underlying effects are constant. While this only directly pertains to the ITE, the implications extend to the CATE and the CDE. For example, consider the case where the effects of must-access and standing order NALs are constant holding PDMP status fixed. Further assume that must-access NALs are always more effective than standing order NALs. In this case, if must-access NALs were more prevalent in states with PDMPs, and standing order NALs were more prevalent in states without PDMPs, then the corresponding CDE (and possibly the CATE) defined with respect to ``NALs'' would show positive effect modification with respect to PDMPs, even when there is none. 

Finally, we have not discussed linear projections of the CATE. While it is easy to apply definition \ref{def:proj} to the coarsened CATE, the interpretation of an already conceptually challenging quantity becomes even worse. In brief, treatment coarsening further complicates the interpretation of any heterogeneity analysis since the estimands now average over different causal effects with no clear policy implication.

Figure \ref{fig:illustration} illustrates the estimands described above using a single simulated dataset generated from the general model specified in Appendix \ref{appsec:model} and with specific parameters detailed in Appendix \ref{appsec:sim}. In brief, we draw effect modifiers $(X,U)$ jointly from a mean-zero bivariate normal distribution, where $X$ is an observed effect modifier of interest and $U$ is an unobserved effect modifier correlated with $X$, with correlation equal to 0.25. We then assign treatment using the two-stage framework described in Section \ref{ssec:coarsening}: first, we randomly assign a coarsened treatment indicator $A$ as a probit function of $X$; second, among the treated units ($A=1$), we randomly assign $M\in\{1,2\}$ as an additional probit function of $X$.

We then generate the potential outcomes using a simple linear model in which both treatment versions have effects that vary with $X$ and $U$. In the parameterization used here, the $X$-slope of the version-specific effect is positive for $M=1$ and negative for $M=2$, so the corresponding CDEs $\psi_m(x)$ have opposite signed relationships with $x$. Because $U$ is correlated with $X$, the corresponding CATEs $\tilde\psi_m(x)$ differ from the CDEs within each treatment version, illustrating how confounding between $X$ and unobserved effect modifiers may distort the CATE relative to the CDE.\footnote{In this illustration the conditional effect functions are linear in $x$, so their linear projections coincide with the true functions; this need not hold in general.} Finally, we plot the state-level effects (ITEs) for the realized treatment version as semi-transparent points to emphasize that state-specific effects include idiosyncratic variation beyond the CATE/CDE averages.

\begin{figure}
 \begin{center}
     \includegraphics[scale=0.5]{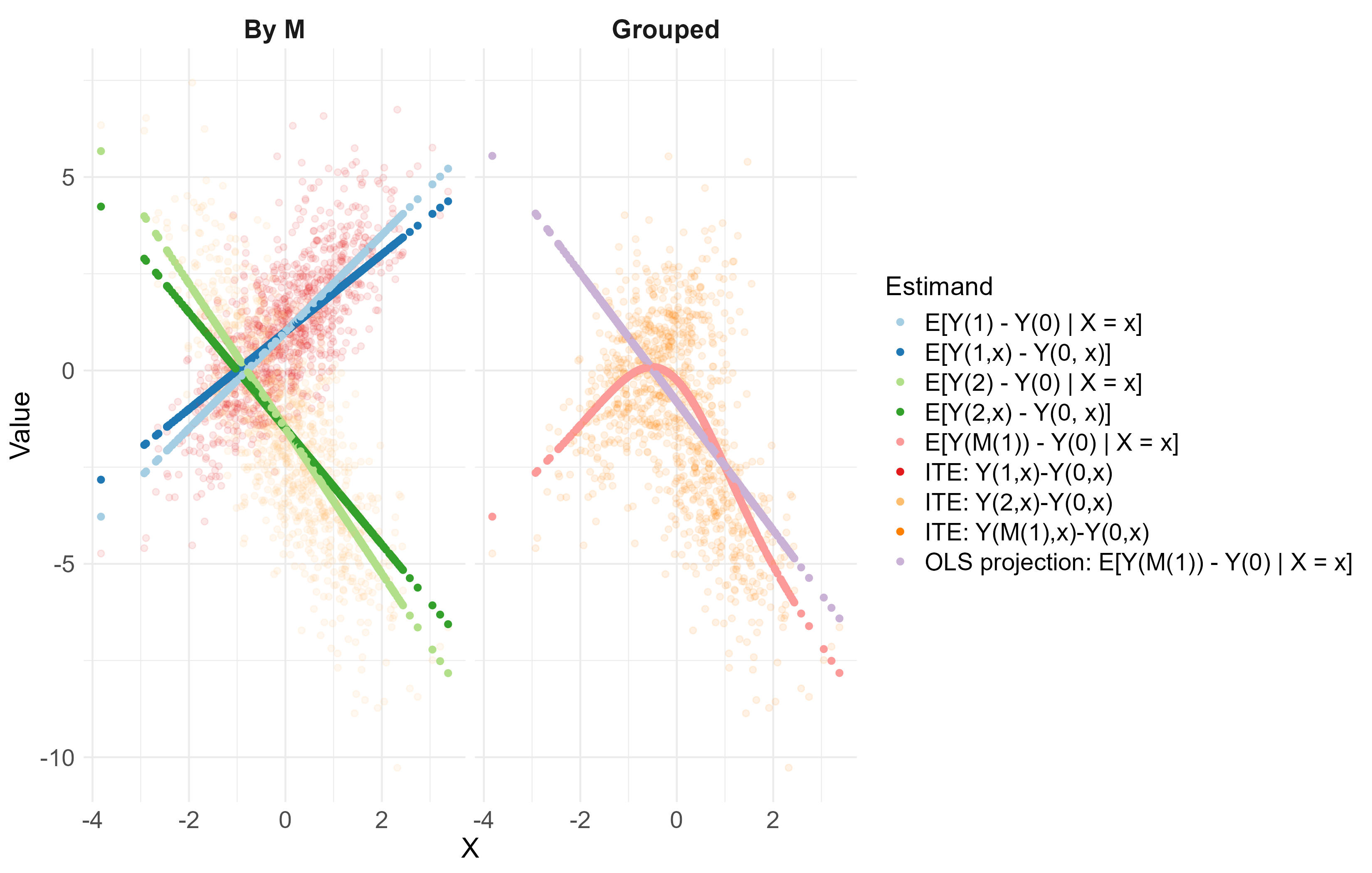}
\end{center}
\caption{Causal estimands in a single simulated dataset. Left panels show the version-specific CDEs $\psi_m(x)$ and CATEs $\tilde\psi_m(x)$ for $m=1,2$, together with the realized state-level ITEs (semi-transparent points). The right panel collapses treatment into the coarsened indicator $A$, showing the coarsened CATE, its OLS linear projection, and the ITEs. The projection's negative slope masks the positive $x$-relationship visible when treatment versions are modeled separately.}\label{fig:illustration}
\end{figure}

On the right-hand panel, we instead collapse treatment into the single indicator $A$. The resulting coarsened CATE differs markedly from the left-hand panels: it is roughly parabolic, increasing for $x<0$ and decreasing for $x>0$. This shape reflects the fact that the mixture over treatment versions changes with $x$: units with $x<0$ are more likely to receive $M=1$, whereas units with $x>0$ are more likely to receive $M=2$.\footnote{In the simulation we also have that $M(1)$ is independent of $A$, so that the coarsened CATEs/CDEs is simply a weighted average of the $m$-specific CATEs/CDEs.} Meanwhile, the OLS target of inference -- the linear projection of the coarsened CATE -- captures the region where $x>0$ somewhat well but misses the overall shape of the true function, and its negative slope masks the positive relationship that would be apparent if the treatment versions were modeled separately. Finally, the ITEs, defined at the realized value of $M(1)$, deviate from the coarsened CATE due to additional idiosyncratic variation.

This is a deliberately adverse scenario: for a given application, it may be unlikely that different treatment types would have opposite signed relationships with an effect modifier. Nevertheless, this scenario highlights that a common target of inference -- the OLS projection of the coarsened CATE -- may yield little insight about the causal relationships between a policy change and an effect modifier even in a simple setting. Appendix \ref{appsec:model} provides a more general but simple mathematical illustration showing these results using linear models, again illustrating why the CATE, CDE, and ITE diverge. 

\subsection{Causal identification and estimation}

We have shown that the quantities we most often estimate -- the CATE, coarsened estimands, and their linear projections -- lack clear interpretations, whereas the ITE and CDE, absent treatment coarsening, are generally the most policy-relevant. A natural response is to avoid coarsened estimands and estimate the CDE or ITE directly within well-defined treatment levels. Data realities in state-policy settings, however, make this difficult on two fronts. First, \emph{identification}: the CDE and ITE require stronger and often implausible assumptions than the CATE, and because policy changes are rarely randomized, identification further demands adjusting for all confounders through models we cannot credibly specify. Second, \emph{estimation}: with at most fifty states, even an identified estimand is typically underpowered, and uncertainty quantification is unreliable when few units are treated. We take each in turn.

To illustrate the identification challenges, consider the case where a policy is completely randomized, so that $Y(m) \perp M$ for all $m$. In this case, the CATE is identifiable as $\tilde\psi_m(x)=\E[Y\mid M = m, X = x] - \E[Y \mid M = 0, X = x]$. However, this does not equal the CDE unless, for example, $Y(m,x) \perp (M, X)$, and randomization of $M$ does not imply randomization of $(M, X)$. Similarly, under a
difference-in-differences framework, identification of $\E[Y(m,x)]$ generally proceeds from a conditional independence or mean independence assumption with respect to $X$ and $Y(m,x)$ in addition to the standard parallel-trends assumption \citep{hettinger2025causal}.\footnote{We emphasize that, when pre-treatment data are available for all units, parallel-trends allows identification of $\E[Y(0) \mid A
= 1]$ under relatively weak assumptions, since this uses the fact that the pre-treatment potential outcomes absent treatment are observed for all units. However, heterogeneity analyses that generalize to all units in a sample require stronger assumptions, as the outcomes under treatment are not observable in the pre-treatment period.} Finally, the ITE $\psi_{i,m}$ is almost never identified, since we only observe $Y(M)$ for any unit; were we willing to assume effects are constant given the CATE or CDE, we could identify it using $\E[Y \mid M = m, X] - \E[Y \mid M = 0, X]$, but this is unlikely to hold. Thus the assumptions required to point identify the CATE are the weakest, those for the CDE stronger, and those for the ITE the strongest. Table \ref{tab:4} summarizes the combinations of assumptions under which these estimands are identified and when they coincide.

\begin{table}[ht]
\centering
\caption{Progression of identified estimands and equality conditions as assumptions strengthen.}\label{tab:4}
\begin{tabular}{p{5.4cm} l ccc}
\toprule
\multicolumn{2}{c}{} & \multicolumn{3}{c}{\textbf{Equalities that hold}}\\
\cmidrule(l){3-5}
\textbf{Assumptions (cumulative)} & \textbf{Identified estimands} & CATE{=}CDE & CATE{=}ITE & CDE{=}ITE\\
\midrule
1. Identified treatment variable                       & CATE*              & No  & No  & No  \\
2. \hspace{0.6em}+ well-defined treatment              & CATE               & No  & No  & No  \\
3. \hspace{0.6em}+ effect modifier correctly controlled & CATE, CDE          & Yes & No  & No  \\
4. \hspace{0.6em}+ constant conditional effects        & CATE, CDE, ITE     & Yes & Yes & Yes \\
\bottomrule
\end{tabular}
\footnote{*CATE* refers to the fact that the CATE is a weighted average of treatment-specific CATEs when the treatment is not well-defined. We use the asterisk to emphasize that this is not technically a CATE with respect to a well-defined policy intervention.}
\end{table}

Identification is further complicated by the fact that state policies are rarely completely randomly assigned. Whatever identifying conditions one assumes, in almost all applied settings we must also adjust for confounders $S = (Z, X)$ -- both the effect modifiers of interest $X$ and other confounders $Z$, which may be binary, categorical, or continuous. Obtaining unbiased or consistent estimates of an identified parameter then requires either (1) correctly specifying the relationships among the confounders, treatment, and outcome; or (2) using doubly-robust estimators together with flexible machine learning to avoid parametric assumptions \cite{chernozhukov2018double,hill2011bayesian,rubinstein2023heterogeneous}. While doubly-robust estimation is generally preferable, it tends to perform poorly in state-policy settings with so few units. Analysts are thus pushed back toward strong parametric models that posit simple relationships among the observed variables -- models that, as with specifying the effect-modification function, are unlikely to be correctly specified, so that the resulting estimates are biased even for the less interpretable projection parameters.

To illustrate this last concern, suppose we adjust for covariates in the difference-in-differences analysis through a linear, additive specification, and suppose we even include treatment interactions with the effect modifiers $X$ so as to model effect modification correctly. The remaining confounders $Z$ nonetheless enter only additively, which implicitly assumes the treatment effect does not vary with $Z$. When this homogeneity fails -- that is, when some component of $Z$ is itself an effect modifier correlated with treatment assignment -- additive adjustment no longer removes the associated confounding, and the resulting estimates, including the $X$-interaction terms we do model, are biased. More flexible, doubly-robust adjustment relaxes these functional-form restrictions but, as noted, requires more data than these settings typically provide. The distinction between confounders $Z$ and effect modifiers $X$ is therefore consequential: both may need to enter the model flexibly, yet sufficient data rarely exist to do so.

Indeed, regardless of which causal assumptions plausibly hold, sufficient data likely in general do not exist to estimate the identified quantity precisely given the limitation of at most fifty states -- where in some cases only one or two states receive a specific intervention type. In the best-case scenario, there are twenty-five treated and twenty-five control states and a completely randomized policy change. A standard power calculation (80\% power, Type I error rate $0.05$) shows that such an analysis is only powered to detect effects larger than $0.8$ on the Cohen's d scale -- a large effect. Estimating subgroup effects for a single binary covariate would at best leave twelve or thirteen treated and control states, requiring effects larger than roughly $1.2$ on the Cohen's d scale, assuming normally distributed errors. This lack of power may itself motivate analysts to group treatment variables in order to balance the number of treated and control units, which, as we have argued, generally yields a less interpretable estimand.

Finally, this is to say nothing of uncertainty quantification, which is itself challenging when few units are treated \cite{mackinnon2017wild, mackinnon2018wild, mackinnon2020randomization}. With large $T$, some of these problems may be mitigated by assuming independence (or weak dependence) over time within units, but statistical power will generally remain a challenge.

\subsection{CATE limitations}\label{ssec:catelimit}

The previous discussion reveals a central tension in state-policy heterogeneity analyses, together with several related limitations:

\begin{enumerate}
    \item The causal estimands that are generally easiest to identify (CATEs) are often the least aligned with policy-relevant questions, whereas the estimands that are most policy-relevant (ITEs and CDEs) are the hardest to identify, and are not point identified under realistic assumptions.
    \item Treatment coarsening further weakens the connection between any of these estimands and a meaningful, well-defined policy effect.
    \item Even where an estimand is identified, statistical precision is fundamentally limited by at most fifty states.
    \item These data limitations push analysts toward likely misspecified parametric models and toward less interpretable target estimands (e.g., projections and coarsened estimands), trading interpretability for feasibility.
\end{enumerate}

On this final point, data limitations do not, by themselves, justify estimators whose assumptions are unlikely to hold for the target estimand. While practical constraints are real, researchers should explicitly reason about whether the assumptions required for estimating some causal estimand are plausible, and whether the resulting estimates, even if statistically precise, answer policy-relevant questions.

To be clear, there are realistic situations where researchers may understand the limitations of the CATE and still wish to estimate it, or its linear projection, even under treatment coarsening. For example, assuming a plausible identification strategy exists, the CATE may be a reasonable target of inference when we are interested in the CDE and think that the bias of the CATE with respect to the CDE is small, and when the set of plausible confounders is small. Moreover, coarsening is less of a problem when we believe the policy effects are similar across different treatment versions or implementations and have similar relationships with the covariates. Finally, even if the CATE varies nonlinearly in $X$, a linear projection may approximate the relevant portion of the curve well enough. In these cases, a CATE, a coarsened CATE, or even its linear projection, may correctly suggest causal relationships between policies and effect modifiers, even if the targeted estimand is strictly associational. In short, the CATE may be a ``good enough'' and estimable target for many analyses.

On the other hand, when the most policy relevant questions are answered by the ITE or CDE, when we cannot assume that the bias of the CATE or its projection with respect to the CDE is small, or when the required identifying assumptions for any of these quantities are thought to be strong, CATE estimation may be misleading or simply infeasible. For such analyses, we propose to instead shift the inferential focus from point identification to partial identification of the ITE under a framework where nothing is random but some counterfactual quantities are unobserved. We detail this proposal below.

\section{Bounding the ITE}\label{sec:bounding}

We propose an approach to bounding the ITE that only requires specifying uncertainty about the assumptions connecting the unobserved counterfactuals to the observed data. No formal statistical inference is required, and the resulting bounds are directly interpretable as ``how big the effect could plausibly be.'' While the lack of statistical guarantees may seem like a limitation, conceptually, arguably nothing is random in these settings \citep{manski2018right}. This approach therefore sidesteps specifying conceptually strange sources of uncertainty, such as positing sampling from a super-population of states, allowing us to instead posit that all quantities are fixed but possibly unobserved. Finally, the state-specific effects often have more practical utility for policy-makers, and do not require generalizing a state-specific effect from a CATE or CDE. The cost is that the results depend on whether we believe the assumptions required to guarantee that the bounds are valid. We first outline a straightforward difference‑in‑differences implementation that illustrates these ideas, following previous work from \cite{manski2018right, rambachan2023more}, where we do not incorporate covariates and we assume a well-defined treatment variable. We then discuss extensions to accommodate covariates and coarsened treatment variables. However, we note that one could conceivably bound the state-specific effects using any number of strategies, and that this is simply one approach that we believe may be useful and easy to implement in practice.

\subsection{A simple difference-in-differences framework}

To fix ideas, we recall that we observe states over $t = 1, \ldots, T$ time periods, where treatment occurs only at the final period $T$. We begin by defining the trends in each treatment group,

\begin{align*}
    B_m = \left[\sum_{i=1}^N \I(M_i = m)\right]^{-1}\left[\sum_{i=1}^N (Y_{i, T} - Y_{i, T-1})\I(M_i = m)\right], \\
\end{align*}

\noindent and specifying a parallel-trends type assumption that motivates a unit-level difference-in-differences estimator,

\begin{assumption}[Parallel-trends]\label{asmpt:pt}
When $M_i \ne m$, for each state $i$,
\begin{align*}
    \hat Y_{i,T}(m) = Y_{i,T-1} + B_m.
\end{align*}

\noindent where $\hat Y_{i,T}(m) = Y_{i,T}(m)$.
\end{assumption}

\begin{remark}
    We use the trends from $T-1$ to $T$ as the basis for constructing counterfactual outcomes; however, we could alternatively define trends with respect to any weighted average of the pre-period outcomes. We choose to focus on this simple two-adjacent period comparison for both conceptual clarity, and because, as we will discuss, this choice allows the use of the remaining pre-period data to inform possible violations of this assumption.
\end{remark}

For example, $B_0$ represents the average pre–post change among untreated units, which under parallel trends gives the counterfactual trend for each treatment value $m > 0$. This is a finite-sample analogue of the standard parallel-trends assumption. By contrast, when $m > 0$, $B_m$ represents the average pre-post change among treatment group $m$. Under a stronger version of parallel-trends, this gives the counterfactual trend for states where $M_i \ne m$ were it to receive treatment $M_i = m$. We can then apply the unit-level difference-in-differences estimator to obtain the ITE for $M_i \in \{0, m\}$:\footnote{We could also consider causal effects for $\psi_{i,m}$ for all units, but for cases where $M_i \not\in\{0,m\}$ but this would require bounding two unobserved counterfactual quantities, $Y_i(m), Y_i(0)$, and is therefore a more challenging problem that we do not consider.} 

\begin{align}\label{eqn:did}
    \hat\psi^{\mathrm{DiD}}_{i, m} &= \I(M_i = m)(Y_{i,T}-\hat Y_{i, T}(0))+ \I(M_i = 0)(\hat Y_{i, T}(m) - Y_{i, T}), \quad i: M_i \in \{0, m\},    
\end{align}

\noindent When parallel-trends holds, we have that $\hat\psi^{\mathrm{DiD}}_{i,m} = \psi_{i,m}$. This assumption is more naturally motivated for $m=0$ than when $m>0$: since all units are observed under $M=0$ in the pre-treatment period, parallel trends for untreated outcomes corresponds to assuming a continuation of observed pre-treatment trends. By contrast, parallel trends for treated outcomes ($m>0$) requires assuming that untreated states would follow the treated states' trends if they were to receive treatment, a stronger and less directly verifiable assumption.

Regardless, we take the stance that parallel-trends is unlikely to hold for any value of $m$. We therefore instead assume that the errors between the counterfactual estimate $\hat Y_i(m)$ and the true value $Y_i(m)$ are bounded,

\begin{assumption}[Bounded counterfactual error]\label{asmpt:bounded}
    $\lvert\hat Y_{i, T}(m) - Y_{i, T}(m)\rvert \le \tau_i^{m}, \qquad M_i \ne m$.
\end{assumption}

\noindent In truth, the bounded counterfactual error assumption is guaranteed to hold provided the outcomes are finite, as we may simply take $\tau_{i}^m = \tau_i^{m,\star} = \lvert\hat{Y}_{i, T}(m) - Y_{i, T}(m)\rvert$ for assumption \ref{asmpt:bounded} to hold. Any choice $\hat\tau_i^m \ge \tau_i^{m\star}$ may then bound $\psi_{i,m}$ for units $i: M_i \in \{0, m\}$ using,

\begin{align}\label{eqn:6}
\psi_{i,m} \in \left[\hat\psi_{i,m}^{\mathrm{DiD}} - \hat\tau_i^m, \hat\psi_{i,m}^{\mathrm{DiD}} + \hat\tau_i^m \right], \qquad i: M_i \in \{0, m\}.
\end{align}

\noindent The question then becomes how to choose values of $\hat\tau_i^m$ that satisfy \eqref{eqn:6}, and that are ideally as close to $\tau_i^{m,\star}$ as possible. While we can rarely guarantee that a given choice of $\hat\tau_i^m$ will satisfy this, we make some proposals below.

\subsection{Choosing the sensitivity parameters}

When pre-treatment data are available, it is common to use placebo tests in the pre-period to inform the magnitude of possible parallel-trends violations. For example, for treated states, \cite{manski2018right} propose to take $\hat\tau_i^0$ to be the largest pre-treatment violation of the parallel-trends assumption times some constant. To extend this logic to our setting, which seeks to bound the ITE for both treated and untreated states, we can calculate for each pre-treatment time-period $t = 1, ..., T-1$, $\tilde\epsilon_{i,t} = \hat{Y}_{i,t} - Y_{i,t}$ in the pre-treatment period, where $\hat Y_{i,t}$ is the DiD estimate of $Y_{i,t}$. We may then combine these $T-2$ residuals into the vector $\boldsymbol{\tilde\epsilon}_i$,

\begin{align*}
\hat{\tau}_i^{m} = Z_{m} \|\boldsymbol{\tilde\epsilon}_i\|
\end{align*}

\noindent for some norm $\|.\|$. We may then vary $Z_m$ as a sensitivity parameter to evaluate the robustness of the findings to different levels of $Z_m$, where $Z_0 = 0$ implies that assumption (\ref{asmpt:pt}) point identifies $Y_{i}(0)$ for the treated units, and similarly $Z_m = 0$ implies that (\ref{asmpt:pt}) point identifies $Y_{i}(m)$ for the untreated units. 

Other strategies to choose $\hat\tau_i^{m}$ using pre-period data are possible. For example, \cite{rambachan2023more} consider event-study designs, and propose a number of methods, including taking the maximum \textit{difference} in the pre-treatment prediction errors. An analogue to this for the first-differences model we consider above would be to take the final pre-treatment prediction error and add and subtract $Z_m$ times the maximum difference in the pre-treatment errors. However, none of these approaches guarantee $\hat{\tau}_{i}^{m} \ge \tau_i^{m,\star}$. In practice, it may therefore be desirable to try multiple modeling approaches and values of $Z_m$ to see whether the results remain consistent, as we illustrate in Section \ref{sec:application}.

Approaches that use pre-period data to inform $\hat\tau_i^m$ also specify the sensitivity parameter $Z_m$. For analyses bounding both treated and untreated states, we recommend choosing $Z_m > Z_0$ for $m > 0$. Intuitively, the residuals $\boldsymbol{\tilde\epsilon}_{i}$ are more informative about $\tau_i^0$ than $\tau_i^m$, as we are predicting $Y_{it}(0)$ in the pre-treatment period, not $Y_{it}(m)$. We discuss this more formally in Appendix \ref{appsec:bounds}, where we specify a model for the true potential outcomes to illustrate this challenge.

Finally, researchers may wish to vary $Z_m$ to determine a tipping point, that is, the value of $Z_m$ where the resulting bounds would contain zero, thereby explaining away an observed effect. Subject-area expertise or heuristic reasoning may then be used to assess whether this tipping point value is likely to have occurred in practice. Together, these choices allow researchers to express identification uncertainty in a directly interpretable scale tied to observed pre‑period deviations.

Similar approaches to choosing $Z_m$ may also be used to choose $\tau_i^m$ even in cases where one does not have or wish to use pre-period data to inform this choice. That is, researchers may instead simply vary $\hat\tau^m_i$ across a range of values until a tipping point is discovered. As with choosing $Z_m$, this value may informed by subject-area expertise or heuristic reasoning to assess the likelihood that such a value may have occurred.

\subsection{Extensions: covariates and coarsening}\label{ssec:extensionscoarsening}

We now illustrate two practical extensions: incorporating covariates and handling coarsened treatment variables. 

First, we emphasize that there is no need to control for covariates, since we may always adjust the sensitivity parameter $\tau_i^m$ to account for our uncertainty in the validity of our starting assumptions. However, it may be that by controlling for covariates we may obtain a better starting estimate of $Y_{i,T}(m)$, allowing for smaller values $\hat\tau_i^m$ that satisfy assumption \ref{asmpt:bounded}. When $X_i$ are binary or discrete variables, we may simply impute the unobserved counterfactual using the trends among units with the same covariate value. That is,

\begin{align}\label{eqn:2a}
\hat{Y}_{i,T}(m) = Y_{i, T-1} + \underbrace{\left[\sum_{j=1}^N \I[(M_j = m, X_j = X_i)]\right]^{-1}\left[\sum_{j=1}^N (Y_{j, T} - Y_{j, T-1})\I[(M_j = m, X_j = X_i)]\right]}_{B_m(X_i)}.
\end{align}

\noindent This strategy will not work, however, when $X_i$ is continuous, or when the dimension is larger than one or two, given the data limitations in state-policy settings. In this case, we may instead define the bias correction as a linear approximation to the expected change in outcomes given the covariate value among the data with the opposite signed treatment. That is, we may replace $B_m(X_i)$ with

\begin{align}\label{eqn:linapprox}
\hat \E^{lin}_{m}[Y_{i,T}-Y_{i,T-1} \mid X_i, M_i = m],
\end{align}

\noindent where $\hat\E_m^{lin}$ is given by combining OLS parameter estimates from the regression,

\begin{align*}
\arg\min_{\alpha_{0,m},\alpha_{1,m}} \sum_{j: M_j = m}(Y_{j,T}-Y_{j,T-1} - (\alpha_{0, m} + \alpha_{1, m}X_j))^2.    
\end{align*}

\noindent For example, to impute the trends absent treatment for a treated unit, we would use $\hat\E_0^{lin} = \hat\alpha_{0,0} + \hat\alpha_{1,0}X_i$. While other methods to adjust for covariate are possible, regardless of the specific procedure generating $\hat\tau_i^m$ using pre-period data works as before, where whatever chosen procedure is run for each pre-period time period to obtain estimates of the error of this procedure with respect to the observed outcomes.

Treatment coarsening introduces a new challenge for untreated units: instead of predicting $Y(m)$, we must now predict $Y(M(1))$.\footnote{For treated units $Y(M(1)) = Y(M)$ so that we only need to learn $Y(0)$ to obtain $\psi_{i,M(1)}$, and therefore treatment coarsening introduces no particular challenge.} First consider the DiD estimator that uses the coarsened treatment variable $A$. We show in Proposition \ref{prop:lumpite} in Appendix \ref{appsec:didcoarsening}, that under treatment coarsening $\hat\psi_i^{\mathrm{DiD}}$ may be interpreted as a weighted average of $Y(m)$ with weights proportional to the empirical probability of each level of $m$ among the treated states ($\Pr_N(M = m\mid A=1)$). This may or may not be a good starting estimate of $Y(M(1))$, but heuristically, this is a decidedly worse starting estimate of $Y(M(1))$ than the corresponding estimate of $Y(0)$, especially if $Y(m)$ varies substantially over $m$. For example, assume that parallel-trends holds for each value of $m$, so that $\hat Y_{i}(m) = Y_{i}(m)$ for all values $m$. If unit $i$ is treated, then $Y_{i}(0)$ and therefore $\psi_{i,M_i(1)}$ is exactly identified. However, for an untreated unit, parallel-trends does not imply $\sum_{m >0} Y(m)\Pr_N(M = m \mid A = 1) = Y(M(1))$ unless either (1) $Y(m)$ is constant across $m$; (2) $\Pr_N(M = m | A = 1) > 0$ only where $M(1) = m$; or (3) $M(1) = m'$ such that $Y(m') = \sum_{m >0} Y(m)\Pr_N(M = m \mid A = 1)$. None of these cases are likely to hold.

We present three possible approaches to dealing with this problem: (1) choose a large value of the sensitivity parameter ($\tau_i^{M_i(1)}$) to be conservative; (2) assume knowledge that $M_i(1) = m$; (3) take the union of bounds on $Y_i(m)$ for each $m$ to bound $Y(M(1))$. The first approach recognizes the problem but acknowledges that this fact does not invalidate the previously proposed method -- the method is always valid taking $\tau^{M(1),\star}=\lvert Y(M(1)) - \hat Y(M(1))\rvert$. Of course, this value $\tau^{M(1), \star}$ could be made smaller if we had a better starting estimate of $Y(M(1))$, but as with incorporating covariates, this fact does not invalidate the general approach. Alternatively, in the case where the set of policy interventions is small or the interventions are known, we may reasonably assume knowledge of $M(1) = m$ for specific untreated states, and conduct the bounding approach using the DiD estimator with respect to that treatment value. For example, we may think that if some state were to adopt any kind of NAL, it would be a must-access NAL. In that case, one may limit the pool of comparison units among the treated states to where that specific policy was adopted/implemented, estimate $Y(m)$ using equation \ref{eqn:did}, and apply the bounding method as before.\footnote{Of course, in practice even the same laws are often implemented differently, so that most transparent interpretations should interpret the estimate as an average over different treatments; nevertheless, this provides perhaps a more useful estimate of the quantity $Y(M(1))$.} More weakly, perhaps there are some values of $m$ we know the state would not adopt, so that we can at least reduce the pool of comparison states to average over policy versions or implementations that are more likely for a particular state to adopt. Finally, we may remain entirely agnostic about the value of $M(1)$, assuming only that it takes a value $m$ that is observed in the set of treated states, estimate the bounds $Y(m)$ for each value of $m$, and take the union of all the bounds on $\psi_{i,m}$ to yield a bound on $\psi_{i, M_i(1)}$. This may be viewed as a more principled, though also likely more conservative, version of the first approach.  

Overall, while the coarsened DiD ITE estimator of $Y(M(1))-Y(0)$ for untreated units has limitations, simply taking this estimate and applying the bounding method is a reasonable and simple approach. The fact that we may need to choose a larger value of the sensitivity parameter than if we knew $M(1)$ reflects the additional challenge of predicting $Y(M(1))$ relative to $Y(m)$.

\subsection{Extensions: multiple post-treatment time-periods}\label{ssec:staggered}

We also consider extending this method to the setting where we observe multiple post-treatment time periods. Let $T_0$ denote the final pre-treatment period, so that treatment may occur in any period $t > T_0$, and let $\bar M_{it}$ denote the treatment history of a policy from time $T_0+1$ to $t$ for state $i$. We now observe the outcomes $Y_{it}(\bar M_{it})$, and wish to consider a contrast of this observed value with some potential outcome $Y_{it}(\bar m_t)$ under an alternative policy sequence. The choice of contrast operationalizes the distinction between calendar-time and exposure-duration effects raised in Section \ref{ssec:timevarying}.

For any unobserved policy sequence $\bar m_t$ of interest, we may use a difference-in-differences framework to obtain a starting estimate of the counterfactual outcome $Y_{it}(\bar m_t)$ as follows:
\begin{align}\label{eqn:gendid}
\hat{Y}_{it}(\bar m_t) = Y_{i,T_0} + \left[\sum_{j=1}^N \I(\bar M_{jt} = \bar m_t)\right]^{-1}\left[\sum_{j=1}^N (Y_{j,t} - Y_{j,T_0})\I(\bar M_{jt} = \bar m_t)\right], \qquad \forall t > T_0.
\end{align}
The corresponding parallel-trends assumption is that, for states with $\bar M_{it} \ne \bar m_t$, state $i$'s trend from $T_0$ to $t$ would have followed the average trend among states with $\bar M_{jt} = \bar m_t$.

\begin{remark}[Limiting dynamic and time-varying effects]\label{rmk:limitdynamic}
    In some cases, we may believe that dynamic treatment effects last for only a limited number of time periods. Letting $\bar m_t^l = m_t, \dots, m_{t-l}$, we can replace $\bar m_t$ with $\bar m_t^l$ in \eqref{eqn:gendid}, which in effect enlarges the pool of potential comparison states (when $l=0$, this is often called ``no carryover effects,'' a typical assumption in this literature). One could also combine this with a time-invariance assumption, allowing any state with sequence $\bar m_t^l$ at any $t$ to serve as a comparison state -- or, more weakly, states with the same value of
    $\boldsymbol{\bar{1}}^\top\bar m_t$. Since we use these methods for partial rather than point identification, no such assumption is strictly required; different restrictions on the comparison group simply yield more or less accurate starting estimates of the unobserved counterfactual.
\end{remark}

The challenge in practice is choosing an appropriate causal contrast. For sequences $\bar M_{it} \ne \bar 0_t$, the contrast $Y_{it}(\bar M_{it})-Y_{it}(\bar 0_t)$ is a common choice, giving an ITE that asks: at time $t$, what was the total effect of the observed policy sequence $\bar M_{it}$ relative to no policy change? Alternatively, when $M_{it} = 1$, we may evaluate the sequence $\bar m_t = (\bar M_{i,t-1}, 0)$ -- the effect of treatment in the final period alone, holding the rest of the observed sequence fixed. Interesting choices for untreated (or not-yet-treated) states include $\bar m_t = \bar 1_t$, the effect of having been treated since $T_0$, or $\bar m_t = (\bar 0_{t-1}, 1)$, the effect of turning treatment on at time $t$.

Evaluating any of these contrasts implicitly assumes that we observe some units at the required values of $\bar m_t$; if we do not, we would need to extrapolate from other observed sequences (see, e.g., Remark \ref{rmk:limitdynamic}), which is beyond the scope of this paper. The choice of contrast -- whether a period-specific effect or a long-run cumulative effect -- should be guided by the substantive policy question, and that choice in turn determines the relevant comparison group used to form the starting estimate of the counterfactual. We also note that the no-anticipation assumption maintained throughout is
more demanding in this setting, since not-yet-treated states may anticipate their own future adoption; as elsewhere, the bounding framework can absorb mild violations through the sensitivity parameter.

Finally, once a clear target and working estimate are obtained, we again may choose sensitivity parameters using pre-treatment data. For example, suppose the contrast of interest is $Y_{it}(\bar 1_t)-Y_{it}(\bar 0_t)$ for $t > T_0$ (where one of the two potential outcomes is observed), and let $k = t - T_0$. When $T - T_0$ is small relative to $T_0$ (and hence $k$ is small), one may extend the previously proposed methods by modeling the pre-treatment outcomes $\{Y_{i,k+1},\dots,Y_{i,T_0}\}$ using trends of length $k$ in the comparison group to obtain prediction errors. When $T - T_0$ or $k$ is not small relative to $T_0$, we may instead follow the ``relative magnitude'' bounds of \cite{rambachan2023more}, extrapolating a norm of the consecutive-period changes in pre-treatment prediction errors relative to the final pre-treatment period. A disadvantage of this second approach is that the bounds widen in each subsequent period, which need not occur under the first. Other approaches are of course possible and depend on both the estimand and the method used to form the initial working estimate. Fully exploring these options is beyond the scope of the present paper; we intend this discussion to point generally to how any estimation method, paired with a corresponding sensitivity analysis and pre-treatment prediction errors, can yield bounds as a function of reasonable
sensitivity parameters.

\begin{remark}[Staggered adoption]
    Staggered policy adoption -- where different groups of states adopt policies different times -- is a particularly common instance of this setting. The framework above covers this case, since staggered adoption simply restricts the observed treatment sequences $\bar M_{it}$; in particular, it requires treatment to be irreversible, so that $M_{it} = 1 \implies M_{i,t+k} = 1$ for any $k \ge 1$. This in general rules out some contrasts we might otherwise consider (such as turning treatment off) and limits which states can serve as comparisons at each $t$, but it introduces no conceptual challenge beyond the choice-of-contrast and sensitivity-parameter issues already discussed.
\end{remark}

\subsection{On the width of the bounds}\label{ssec:whenbounds}

While this bounding method is valid for any choice of sensitivity parameters $\hat\tau_i^m \geq \tau_i^{m,\star}$, in some instances even the narrowest defensible bound may contain zero, preventing a researcher from signing an effect. This raises a practical question: what makes a state's bounds more likely to be sign-conclusive (strictly positive or strictly negative)? Assuming we use pre-period data to inform $\hat\tau_i^m$, it is useful to separate features that make sign conclusiveness more likely for a given value of $Z_m$ from features that justify choosing a smaller sensitivity parameter $Z_m$.

\noindent\textbf{Features make sign conclusiveness more likely for a given $Z_m$:}
\begin{itemize}
    \item \emph{Larger-magnitude estimated effects.} A larger point estimate is more
    likely to remain sign-conclusive, since the bound must be wider to reach zero.
    \item \emph{Smaller pre-treatment prediction error.} The better the model predicts the pre-treatment outcomes, the smaller the residuals $\boldsymbol{\tilde\epsilon}_i$ and hence the narrower the resulting bound.\footnote{While we propose a difference-in-differences design, any prediction model is admissible. We do assume the model is chosen out of sample, however, as one can always achieve zero in-sample prediction error.}
    \item \emph{Lower-variance noise.} Lower variance noise will generally result in smaller residuals $\boldsymbol{\tilde\epsilon}_i$, and hence more narrow bounds.
\end{itemize}

\noindent\textbf{Features that justify a smaller $Z_m$:}
\begin{itemize}
    \item \emph{More pre-treatment periods.} Additional pre-treatment data allow the
    typical magnitude of model error to be characterized more reliably, so that a smaller $Z_m$ can be tolerated while still plausibly satisfying $\hat\tau_i^m \ge \tau_i^{m,\star}$.
    \item \emph{Similar comparison states.} When comparison states are otherwise
    comparable in baseline characteristics, the starting counterfactual estimate is more plausible, making a smaller $Z_m$ more justifiable.
\end{itemize}
In practice many states may fail yield sign-conclusive bounds; in our application below (Section \ref{sec:application}), this holds for only two of fifty states. While researchers may regard bounds that contain zero as uninformative, such inconclusive bounds transparently communicate genuine uncertainty about a state-specific effect given the available data and plausible assumptions. We argue that this is preferable to reporting a point estimate with a dubiously estimated confidence interval for a less interpretable causal effect.

\subsection{Simulation}\label{ssec:simulation}

To compare our proposed bounding approach to conventional CATE-based analyses, we use the simulation design described in Appendix \ref{appsec:sim}, essentially the same data-generating process used to construct Figure \ref{fig:illustration}. We generate 1,000 datasets with $N \in \{15,25,50\}$ states observed over $T = 10$ time periods, where treatment occurs in the final period and the realized treatment is $M \in \{0,1,2\}$ (with coarsened indicator $A=\I(M>0)$). Throughout, we restrict attention to datasets with at least three states in each treatment arm. Our goal in this simulation is to compare methods to assess how each method's resulting uncertainty regions compare when trying to learning about the ITE under a repeated sampling framework.

To better align our simulation with real-world analyses, we consider models that do not include $U$ and that estimate effects using the coarsened treatment indicator $A$. To estimate the CATEs, for each dataset, we use parameters estimated from equation (\ref{eqn:linapprox}), using data from the final pre-treatment period and the post-treatment. We emphasize that the targeted quantity of interest in this model is again a linear projection of a weighted combination of effects among the CATEs in each stratum $m = 1, 2$, similar to the illustration in Figure \ref{fig:illustration}.\footnote{For this simulation we reduce the treatment version specific bias of the CATE with respect to the CDE by setting the correlation between $X$ and $U$ to be 0.125.} This is a challenging quantity to interpret generally, but even more in our admittedly adversarial setting, where the slopes of these $m$-specific CATE and CDE functions have different signs. To estimate the ITEs, we use the basic difference-in-differences framework outlined in equation (\ref{eqn:6}), and using the mean absolute error of the pre-treatment residuals to set $\hat\tau_i^{1-A_i}$ as described above, setting $Z_1=Z_0=2$ throughout.  

For all units in each dataset, we generate confidence intervals for the CATE using robust standard error estimates and bounds on the ITE. We then evaluate (1) the coverage for each estimation method with respect to the ITE, and (2) the ability of each method to both correctly exclude zero from the confidence interval or bound, and yield a correctly signed uncertainty region (i.e., does the true ITE have the same sign as the bounds or confidence interval). 

We reiterate that our bounding framework treats all quantities as fixed, with uncertainty arising solely from the fact that the counterfactuals are unobserved. By contrast, this simulation uses a repeated sampling framework. This is purely for evaluation in order to assess how often our bounds would capture the true effects across hypothetical replications with different treatment assignments. These represent conceptually distinct sources of uncertainty: bounds reflect counterfactual uncertainty in a fixed world, while simulations characterize finite-sample performance in a hypothetical super-population. We emphasize, however, that our method provides no formal statistical guaranteesin this repeated sampling framework. Moreover, confidence intervals on the CATE are not designed to provide coverage of the ITE in a repeated sampling framework. We nevertheless report these results because they speak to readers who do operate within a repeated-sampling paradigm: they show that, even judged by frequentist criteria the method does not target, the bounds compare favorably to conventional CATE-based intervals on common metrics such as coverage and sign recovery.

Figure \ref{fig:1} summarizes results across sample sizes ($N = 15,25,50$). Green bars denote ITE bounds and blue bars denote the CATE-based approach; shading distinguishes treated versus untreated units. Relative to the CATE confidence intervals, the ITE bounds generally achieve higher coverage of the true ITE and more reliably recover the ITE sign, though coverage for untreated units is slightly worse at $N = 15,25$. Across methods, performance is consistently better for treated than untreated units, reflecting that (i) predicting counterfactual treated outcomes for controls is harder than predicting counterfactual untreated outcomes for treated units, and (ii) in the simulation design, parallel trends holds on average for treated units but not for untreated units.

\begin{figure}
 \begin{center}
     \includegraphics[scale=0.4]{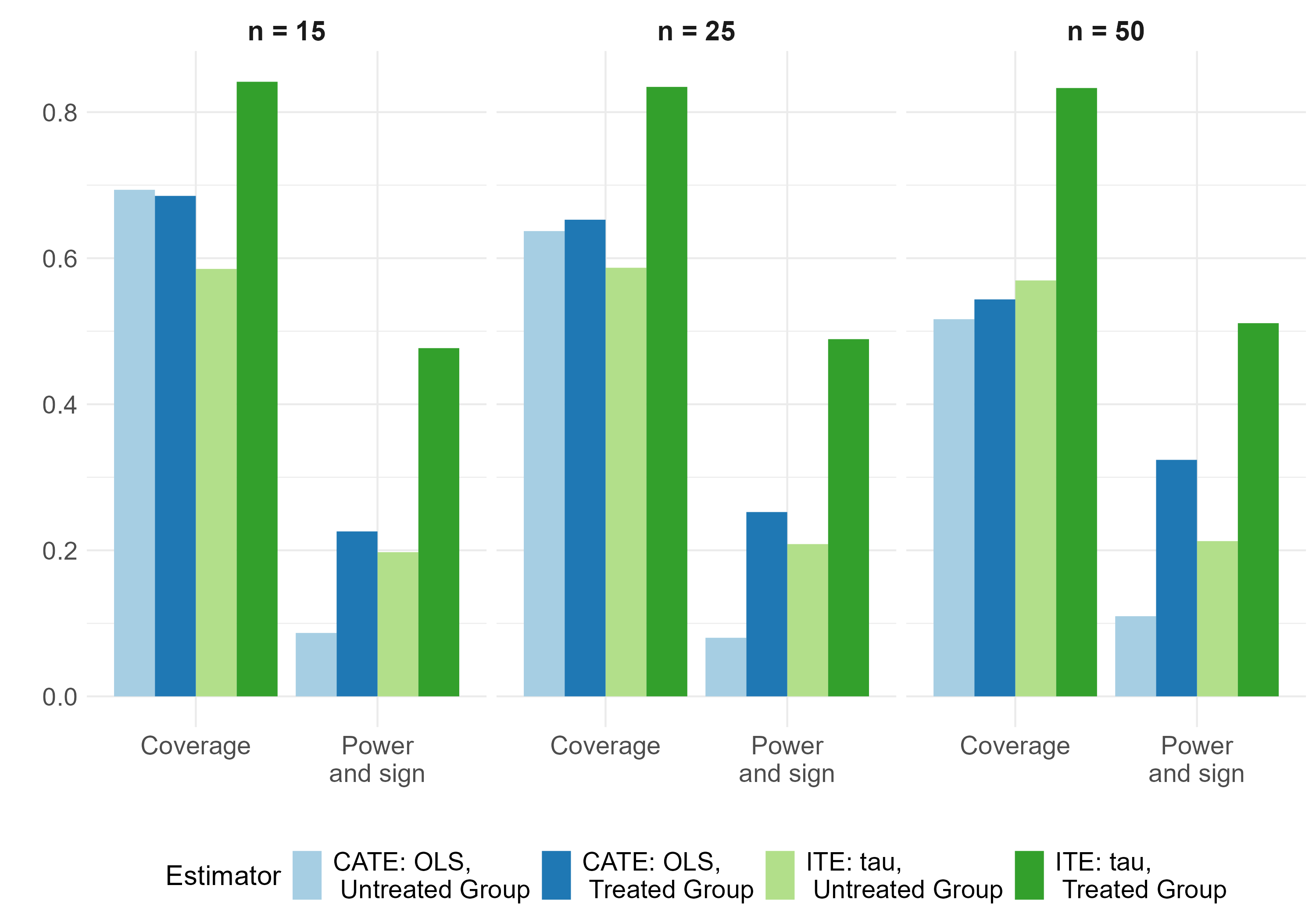}
\end{center}
\caption{Simulation results: coverage of the true ITE (top) and the proportion of correctly signed, zero-excluding uncertainty regions (bottom), comparing data-informed ITE bounds (green) to CATE-based confidence intervals (blue), by treatment status and sample size $N \in \{15, 25, 50\}$.}\label{fig:1}
\end{figure}

Figure \ref{fig:1b} repeats this comparison but uses the infeasible ``oracle'' sensitivity parameter $\tau^{1-A_i,\star}_i$, the smallest value that guarantees coverage. As expected, the resulting bounds have perfect coverage and are substantially more informative, with higher power and correct sign, than the CATE intervals and the data-informed bounds. However, the performance gap remains between treated and control units. This underscores that the method’s performance depends on how well $\tau_i^{1-A_i}$ is chosen. We again emphasize, however, that the CATE intervals target a projected CATE rather than the ITE, so it is not surprising this method does not target the ITE well. Finally, additional results in Appendix \ref{app:results} show that, in this simulation, adding linear covariate adjustment via $\hat\E^{lin}_{1-A_i}$ to the ITE estimators does not improve performance relative to the unconditional approach.

\begin{figure}
 \begin{center}
     \includegraphics[scale=0.4]{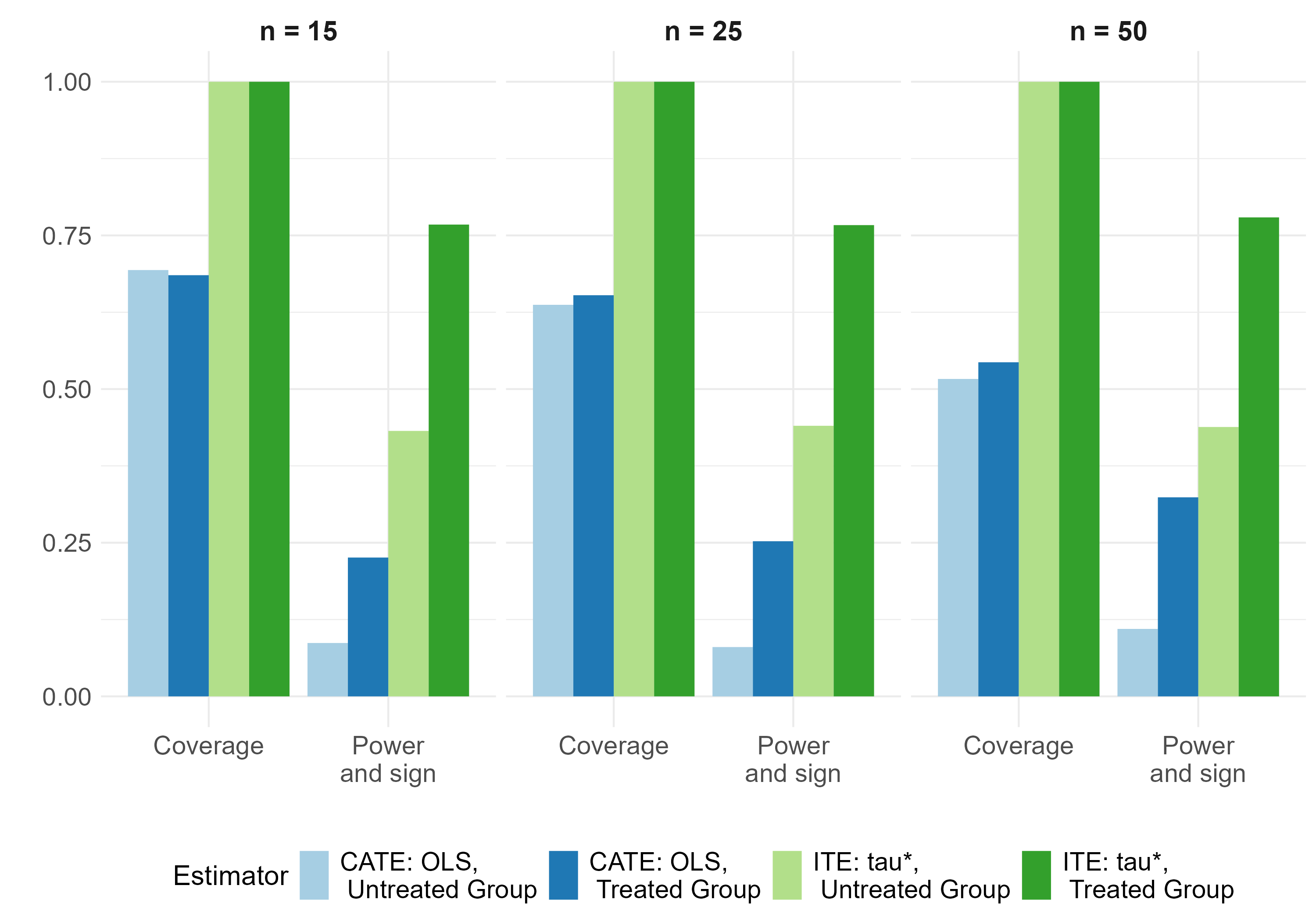}
\end{center}
\caption{Simulation results under the infeasible ``oracle'' sensitivity parameter $\tau_i^{1-A_i,\star}$: coverage (top) and the proportion of correctly signed, zero-excluding uncertainty regions (bottom), by treatment status and sample size. Bounds achieve perfect coverage by construction and are more informative than both the CATE intervals and the data-informed bounds of Figure \ref{fig:1}.}\label{fig:1b}
\end{figure}

\section{Application}\label{sec:application}

We now apply this method to study the effect of Medicaid expansion on high‑volume buprenorphine prescribing. Buprenorphine is the primary medication for opioid use disorder (OUD) in the United States and access remains a major public health concern. Numerous policy efforts have sought to improve access to buprenorphine, including increasing the number of clinicians able to prescribe buprenorphine \cite{saloner2021article}. Nevertheless, recent research has shown that treatment availability largely depends on a small number of high-volume clinicians -- roughly 80\% of all dispensed buprenorphine comes from 17\% of providers \citep{stein2023buprenorphine,schuler2024growing}. Because high-volume prescribers account for the majority of treatment availability, policies that affect their prescribing behavior may have outsized implications for OUD treatment access.

Multiple state policies hypothesized to influence buprenorphine treatment include state Medicaid expansion (which increases insurance coverage) and mandatory PDMP laws (which may impose administrative frictions on prescribing). Prior analyses estimated mean effects of these policies, finding that Medicaid expansion—which, starting in 2014, allowed states to receive federal funding to expand Medicaid eligibility to 138\% of the federal poverty line—was not associated with high-volume prescribing \cite{schuler2025high}. By contrast, mandatory PDMP laws, which require prescribers to check the PDMP before prescribing restricted substances such as opioids, were associated with a decrease in high-volume prescribing \cite{schuler2025high}. However, states differ markedly in their OUD burden, behavioral health infrastructure, and rural provider composition, suggesting that these average effects may mask important state-level heterogeneity. In our application, we reexamine the effect of Medicaid expansion on high-volume prescribing in 2014 using our heterogeneity framework, focusing on two state-level moderators: mandatory PDMP laws and the share of buprenorphine prescribers located in rural communities. This setting exemplifies the practical challenges of studying treatment heterogeneity across small strata of heterogeneous jurisdictions, where point estimates are noisy and policy implementation varies.

We use IQVIA data from 2009-2014 to examine this question, using the state-level total numbers of high-volume prescribers per 100,000 people, with state population estimates coming from the Census Population and Housing Estimates (PEP). For our treatment variable and covariates, we classify the 26 states (including the District of Columbia) that accepted federal funding to expand their Medicaid programs in 2014 as having been treated and 25 that did not as being a control state.\footnote{We omit New Hampshire from consideration, which expanded their Medicaid eligibility requirements in August of 2014. We also note that different studies classify expansion and non-expansion states differently depending on their pre-existing Medicaid eligibility requirements (see, e.g., \cite{rubinstein2023balancing}).}  This immediately raises the treatment coarsening problem, since each state may implement their own Medicaid program differently, and since each state had its own pre-existing Medicaid eligibility requirements. In other words, ``Medicaid expansion'' is effectively a distinct policy change in each state.\footnote{Another way to frame this issue is that the ACA expansion policy standardizes part of the treated regime by setting the eligibility threshold to 138\% of the federal poverty line, while the ``no expansion'' regime corresponds to maintaining each state's pre-2014 Medicaid eligibility rules. In this framework, $Y(M(1))$ may be viewed as the outcome under a partially standardized expansion regime, whereas $Y(M(0))$ reflects a state-specific status quo. While this would contrast with our proposed framework where we assume that $M(0)$ is well-defined and always equal to zero, this framing also highlights the value of considering state-specific effects in this setting, as ``Medicaid expansion'' is inherently state-specific and not a uniform policy intervention.} With respect to mandatory PDMPs, two states had them in 2013 and three more states implemented them in 2014. While there are other potentially relevant policies around the same time that varied across states over this time-frame (for example, continuing medical education requirements), we ignore these for illustrative purposes. Table \ref{tab:2} below lists each state, whether or not they expanded Medicaid in 2014, whether or not they had mandatory PDMPs in 2013 and 2014, and whether more than 25\% of the state's providers were located in rural communities (``rurality'').
\begin{table}[ht]
\centering
\caption{\textbf{Medicaid Expansion and Mandatory PDMP Status by State and Rurality}}
\label{tab:2}
\begin{tabular}{
    >{\centering\arraybackslash}p{2.5cm} 
    >{\centering\arraybackslash}p{1.8cm} 
    >{\centering\arraybackslash}p{1.8cm} 
    >{\centering\arraybackslash}p{1.8cm} 
    p{7cm}}
\toprule
\textbf{Medicaid Expansion} & 
\multicolumn{2}{c}{\textbf{Mandatory PDMP}} &
\multicolumn{1}{c}{\textbf{Rural}} &
\textbf{States} \\
\cmidrule(lr){2-3}
 & \textbf{2014} & \textbf{2013} & \\
\midrule
\midrule
1 & 0 & 0 & 1 & AR, HI, KY, ND, VT\\
1 & 0 & 0 & 0 & AZ, CA, CO, CT, DC, DE, IA, IL, MD, MI, MN, NJ, NM, NV, NY, OH, OR, RI, WA\\
1 & 1 & 0 & 0 & MA\\
1 & 1 & 1 & 0 & WV\\
0 & 0 & 0 & 1 & AK, ID, ME, MS, MT, NE, SD, WY\\
\addlinespace
0 & 0 & 0 & 0 & AL, FL, GA, KS, MO, NC, OK, PA, SC, TX, UT, VA, WI\\
0 & 1 & 0 & 0 & IN, LA\\
0 & 1 & 1 & 0 & TN\\
\bottomrule
\end{tabular}
\end{table}

Given these data constraints, conventional heterogeneity analyses will be limited by these small cell sizes. While the overall ATE can use 50 states (25 treated and 25 control states---a best-case scenario), stratification quickly reduces the available sample: there are 45 states (24 treated) with no PDMPs but only 5 states with any PDMP (2 treated); by rurality, there are 37 non-rural states (21 treated) and 13 rural states (5 treated). Stratification by both PDMP history and rurality is therefore not feasible without borrowing information across groups or imposing additional modeling assumptions.

In Appendix \ref{app:results} we estimate various conventional effects using two-way fixed effects models and quantify uncertainty using cluster robust standard errors and a normal approximation to the sampling distribution. Despite the fact that cluster robust standard errors are known to be biased downwards given small numbers of treated clusters \cite{mackinnon2018wild}, and that the normal approximation does not adjust for the small sample sizes, we find no statistically significant results. This may reflect that there are, in fact, no effects, or the extremely limited power of this analysis.

\subsection{State-specific effects}

We now instead take the approach that our goal is not to estimate average effects, CATEs, or CDEs, but to bound the state-specific effects. We use the difference-in-differences style approach outlined above, constructing the state-specific effect counterfactual estimates using the pre-treatment outcomes for that state plus the term given by the trends in the states of the opposite treatment status, where we include all states in the opposite treatment status in the pool. 

\begin{figure}
    \centering
        \includegraphics[width=\textwidth]{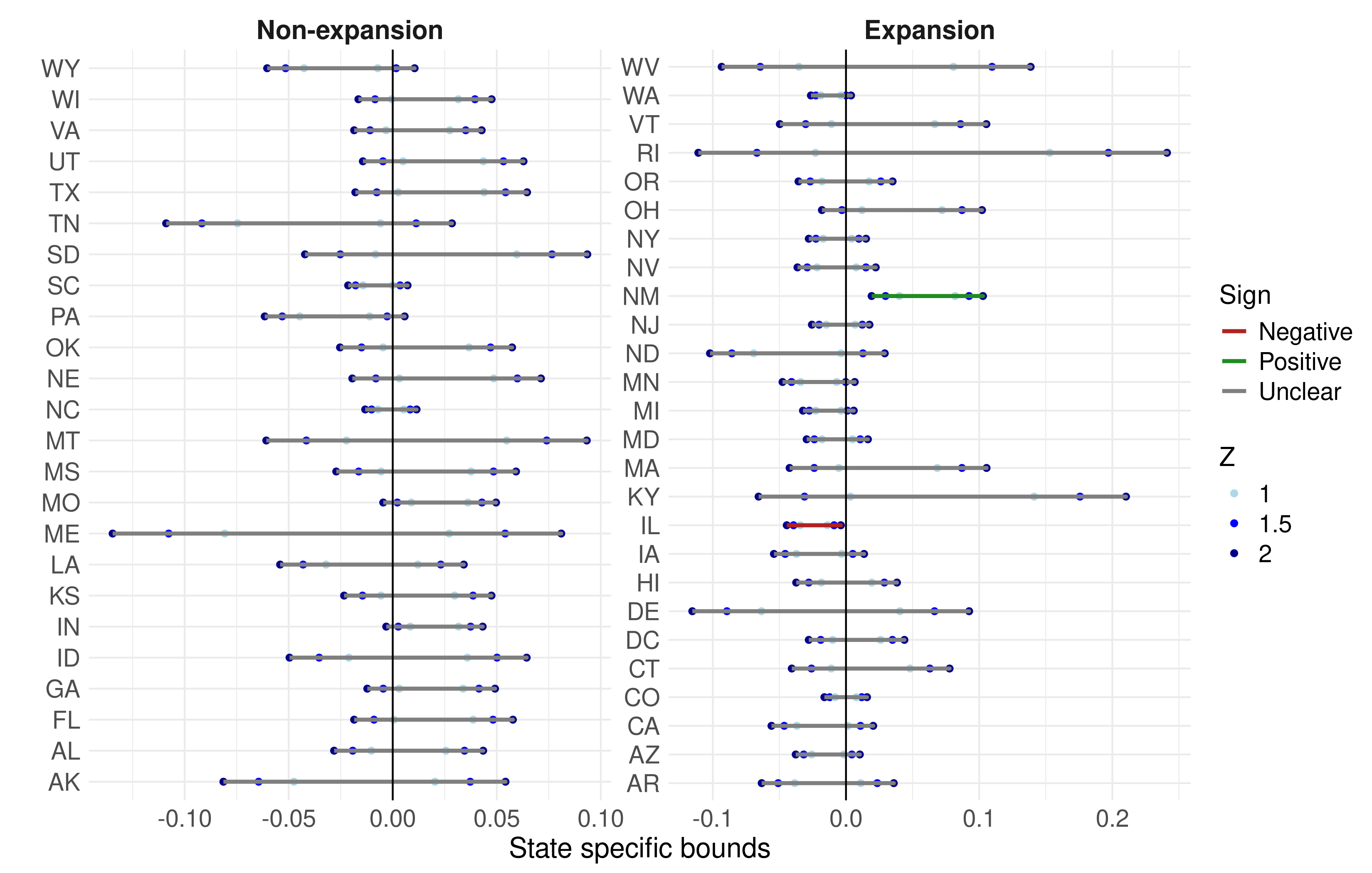}
    \caption{State-specific effect bounds for the effect of 2014 Medicaid expansion on high-volume buprenorphine prescribers per 100,000 people. Dots correspond to sensitivity parameters $Z = 1, 1.5, 2$; bars are colored where the $Z=2$ bounds exclude zero (red: strictly negative; green: strictly positive). Only Illinois (negative) and New Mexico (positive) are sign-conclusive.}
    \label{fig:bounds}
\end{figure}

Figure \ref{fig:bounds} displays state-specific bounds, with dots corresponding to $Z = 1,1.5,2$. Bars are colored by whether the $Z=2$ bounds exclude zero (red for strictly negative effects; green for strictly positive effects). Given the short pre-treatment period, we set $\hat\tau_i^{1-A_i}=Z\|\boldsymbol{\tilde\epsilon}_i\|_\infty$.

Only Illinois shows a strictly negative effect and only New Mexico a strictly positive effect. To assess the robustness of this result, we consider seven alternative specifications: (i) an alternative choice of $\hat\tau_i^{1-A_i}$ based on the final pre-period error plus $Z$ times the maximum change in pre-period errors; (ii) two-way fixed-effects analogues of the primary and alternative first-differences approaches; and (iii) repeating all four approaches after restricting comparison states to those with the same mandatory-PDMP history in 2013 and 2014, yielding eight total specifications. Table \ref{tab:3} counts how often each state is classified as strictly positive or negative: Illinois and New Mexico appear six of eight times, whereas South Carolina (negative) and Missouri (positive) appear three times. All remaining states appear two or fewer times.

\begin{table}
\centering
\caption{Number of strictly positive or negative effects across specifications}
\begin{tabular}{lrr}\toprule\label{tab:3}
State & Negative & Positive\\
\midrule
IL & 6 & 0\\
SC & 3 & 0\\
NM & 0 & 6\\
MO & 0 & 3
\\ \bottomrule\end{tabular}
\end{table}

The robustness of these results strongly suggests that Medicaid expansion had a negative state-specific effect on the number of high-volume prescribers in Illinois and a positive effect in New Mexico. As we have emphasized throughout, effect heterogeneity may broadly be a function of different contextual, compositional, or policy differences. Specifically for this application, the positive state-specific effect observed for New Mexico may reflect a combination of contextual factors, including the state's pre-existing behavioral health infrastructure and prescriber composition. For example, it is possible that Medicaid expansion in New Mexico interacted with a health system already oriented toward financing and delivering OUD treatment, including relatively permissive scope-of-practice rules, such that coverage expansions translated more readily into changes in high-volume prescribing. As a relatively small-population state, modest shifts in provider behavior or patient coverage may translate into comparatively large changes in aggregate outcomes. However, our results, and ITEs generally, cannot isolate underlying mechanisms of heterogeneity. Moreover, while the direction of the bounds appears robust, the magnitude of the true effect is uncertain. Qualitative studies of Medicaid expansion implementation and related opioid policy responses in these states may help elucidate potential sources of these effect patterns. More broadly, the divergent bounds for New Mexico and Illinois illustrate how this method can detect effect heterogeneity even when underlying state-specific effects are only partially identifiable.

\section{Policy implications \& conclusions}

We argue that in many state-policy settings, the CATE may be a poor target of inference, as the CATE cannot generally reveal the true causes of heterogeneity, and therefore has primarily associational interpretations with unclear policy relevance. For example, the CATE cannot directly inform what did or would happen in a particular state under a given policy, nor does it imply that the contexts or policy environments where a policy was observed to be most successful (an association) are the contexts or policy environments that would enhance a policy's efficacy (a causal claim). Unfortunately, the more policy-relevant state-specific and controlled direct effects are not generally point identified without stronger and often implausible assumptions. Moreover, in state policy settings, we often simply lack the data to estimate the CATE or CDE well: the estimators are often likely misspecified or underpowered, and conventional methods for uncertainty quantification are unreliable regardless. 

Motivated by these limitations, we instead propose bounding state-specific treatment effects as an alternative goal and outline one approach to do so under a difference-in-differences framework. Importantly, other approaches are possible under other causal identification strategies and perhaps are more desirable in some settings. Regardless, this general method targets a well-defined and often policy-relevant causal estimand, the state-specific effect. This effect may be interpreted as: what would or did happen to a specific state under a specific intervention. The bounds give ranges for each state’s effect under explicit and interpretable assumptions, clarifying which effect estimates are robust to plausible deviations from idealized identifying conditions and which are not. We also consider using this method in the presence of treatment coarsening, a common scenario for state policy analyses that complicates the interpretation of CATEs. We consider a causal framework similar to \cite{vanderweele2013causal} to define state-specific effects for untreated states even in the presence of treatment coarsening, and explore the use of pre-treatment data, when available, to inform these bounds, extending methods proposed by \cite{manski2018right, rambachan2023more}. Through simulations, we show that when the estimand of interest is the state-specific effect, our proposed bounds can outperform conventional heterogeneity analyses in terms of coverage and the ability to correctly identify the sign of an effect. While state-specific effects cannot directly tell researchers about what variables or policies drive heterogeneity, they nevertheless provide policy-relevant information in settings where point identification or reliable statistical inference for the CATE or CDE is not possible. Finally, we illustrate this method by examining the effect of the Affordable Care Act Medicaid expansion on high-volume buprenorphine prescribing. The application highlights how bounding can yield informative conclusions about state-specific effects even when conventional heterogeneity analyses lack power or yield results that are difficult to interpret.

Other methods exist to estimate state-specific effects, including synthetic controls approaches \citep{abadie2010synthetic, abadie2015comparative} and Bayesian hierarchical models \cite{mcglothlin2018bayesian}. Bayesian models can generate state-level estimates when sample sizes are small \cite{gelman2007data}, but rely on strong structural and distributional assumptions, including the specified prior and likelihood function, to determine the distribution of these effect estimates. The resulting Bayesian credible intervals can then quantify uncertainty conditional on the model, but not uncertainty about whether the model is correctly specified. Our bounding approach instead characterizes the range of state-specific effects consistent with the observed data under weak assumptions, providing an alternative approach to bounding ITEs when researchers wish to minimize reliance on unverifiable modeling choices. Synthetic control methods and their extensions have yielded estimators that are consistent for the ITE \cite{abadie2010synthetic} for both treated and untreated units \cite{agarwal2026synthetic}, with \cite{chernozhukov2021exact} even showing how to construct valid prediction sets for the ITE without assuming the model is correctly specified. Relatedly, comparative interrupted time-series (CITS) designs \citep{fry2021birds} augment the treated series with a comparison group to absorb shared temporal shocks, yielding a counterfactual trajectory for a treated unit; this logic is closely related to the difference-in-differences framework underlying our own working estimate -- though whereas CITS approaches attempt to model the pre-treatment deviations directly, our approach instead uses them only to bound the counterfactual error. Both synthetic control and CITS designs are generally most attractive when a longer pre-treatment series is available, and both typically assume a well-defined intervention. By contrast, our proposed approach works even with only one pre-treatment time period (though we cannot calibrate the sensitivity parameter), and could easily be extended to the setting with no pre-treatment data by making working estimates for the unobserved counterfactuals using cross-sectional approaches. Importantly, well-defined interventions are not the norm in state-policy settings, which, as we have shown, necessitates special care when defining ITEs with respect to non-treated units.

This paper has some broad limitations that present opportunities for future research. First, we only consider state-level policy analyses with state-level data. It is possible that with more granular data, including at the county or even individual level, more meaningful heterogeneity analyses may be conducted, though statistical inference will still remain challenging when the exposure is at the state-level, given that this is arguably the most natural source of uncertainty. Second, we closely follow the proposed methods of \cite{manski2018right, rambachan2023more} for our partial identification approach, and so our proposed methods share the same limitation that these methods may be somewhat conservative. For example, in settings with a large number of pre-treatment periods, one may also consider autoregressive approaches (or, relatedly, CITS designs) to model the time-series evolution of pre-treatment parallel-trend violations as a way to better inform the bounds \citep{kwon2024empirical, han2025bayesian, fry2021birds}. Finally, we have only considered retrospective effect estimation. Arguably, the most relevant quantity is some contemporaneous or future effect (see, e.g., \cite{deb2025counterfactual}). Additional research is needed to better formalize and develop methods for this more general setting.

In conclusion, applied researchers should not abandon heterogeneity analyses in state-policy settings, but should choose their inferential goals carefully. They should be explicit about which causal estimands are of interest, what assumptions are required to identify them, and whether the available data are sufficient to support credible identification and estimation of these effects. In many state-policy settings, we argue that partial identification of state-specific effects offers a principled and transparent alternative to conventional heterogeneity analyses and may better align causal inference with the questions both analysts and policymakers seek to answer.

\section{Acknowledgments}

The authors thank the two anonymous reviewers for their suggestions which greatly improved the quality of this manuscript.

The authors also acknowledge using ChatGPT (OpenAI) and Claude Code for assistance with editing, organization, and checking mathematical arguments and proofs, as well as Cursor to document and test the code used for the simulations and analyses. The authors are solely responsible for the analysis and conclusions in this paper.

\textbf{Funding information:} This research was financially supported through a National Institutes of Health (NIH) grant (P50DA046351) to RAND (PI: Bradley D. Stein). NIH had no role in the design of the study, analysis, and interpretation of data nor in writing the manuscript.

\textbf{Conflict of interest:} The authors state no conflicts of interest.

\textbf{Data availability statement:} The IQVIA data used for the application is not publicly available. However, all code used for both the simulations and application is available, along with a partially simulated dataset that replaces the IQVIA data with simulated data, at \newline \texttt{https://github.com/mrubinst757/state-heterogeneity}.

\bibliographystyle{plain}
\bibliography{bibbb.bib}

@article{stein2023buprenorphine,
  title={Buprenorphine treatment episodes during the first year of COVID: a retrospective examination of treatment initiation and retention},
  author={Stein, Bradley D and Landis, Rachel K and Sheng, Flora and Saloner, Brendan and Gordon, Adam J and Sorbero, Mark and Dick, Andrew W},
  journal={Journal of general internal medicine},
  volume={38},
  number={3},
  pages={733--737},
  year={2023},
  publisher={Springer}
}

@article{mcglothlin2018bayesian,
  title={Bayesian hierarchical models},
  author={McGlothlin, Anna E and Viele, Kert},
  journal={Jama},
  volume={320},
  number={22},
  pages={2365--2366},
  year={2018},
  publisher={American Medical Association}
}

@article{vanderweele2013causal,
  title={Causal inference under multiple versions of treatment},
  author={VanderWeele, Tyler J and Hernan, Miguel A},
  journal={Journal of causal inference},
  volume={1},
  number={1},
  pages={1--20},
  year={2013},
  publisher={De Gruyter}
}

@techreport{garthwaite2019all,
  title={All Medicaid expansions are not created equal: the geography and targeting of the Affordable Care Act},
  author={Garthwaite, Craig and Graves, John A and Gross, Tal and Karaca, Zeynal and Marone, Victoria R and Notowidigdo, Matthew J},
  year={2019},
  institution={National Bureau of Economic Research}
}

@article{mcginty2024scaling,
  title={Scaling interventions to manage chronic disease: innovative methods at the intersection of health policy research and implementation science},
  author={McGinty, Emma E and Seewald, Nicholas J and Bandara, Sachini and Cerd{\'a}, Magdalena and Daumit, Gail L and Eisenberg, Matthew D and Griffin, Beth Ann and Igusa, Tak and Jackson, John W and Kennedy-Hendricks, Alene and others},
  journal={Prevention Science},
  volume={25},
  number={Suppl 1},
  pages={96--108},
  year={2024},
  publisher={Springer}
}

@article{schuler2021methodological,
  title={Methodological challenges and proposed solutions for evaluating opioid policy effectiveness},
  author={Schuler, Megan S. and Griffin, Beth Ann and Cerd{\'a}, Magdalena and McGinty, Emma E. and Stuart, Elizabeth A.},
  journal={Health Services and Outcomes Research Methodology},
  volume={21},
  pages={21--41},
  year={2021},
  publisher={Springer}
}

@article{schuler2020state,
  title={The state of the science in opioid policy research},
  author={Schuler, Megan S. and Heins, Sara E. and Smart, Rosanna and Griffin, Beth Ann and Powell, David and Stuart, Elizabeth A. and Pardo, Bryce and Smucker, Sierra and Patrick, Stephen W. and Pacula, Rosalie Liccardo and Stein, Bradley D.},
  journal={Drug and Alcohol Dependence},
  volume={214},
  pages={108137},
  year={2020},
  publisher={Elsevier}
}

@article{hettinger2025causal,
  title={A causal framework for evaluating drivers of policy effect heterogeneity using difference-in-differences},
  author={Hettinger, Gary and Lee, Youjin and Mitra, Nandita},
  journal={Health Services and Outcomes Research Methodology},
  pages={1--22},
  year={2025},
  publisher={Springer}
}

@article{schuler2025high,
  title={High-volume buprenorphine prescribers: examining state policy contexts},
  author={Schuler, Megan S and Sheng, Flora and Saloner, Brendan and Gordon, Adam J and Stein, Bradley D},
  journal={Drug and Alcohol Dependence Reports},
  pages={100406},
  year={2026},
  publisher={Elsevier}
}

@book{gelman2007data,
  title={Data analysis using regression and multilevel/hierarchical models},
  author={Gelman, Andrew and Hill, Jennifer},
  year={2007},
  publisher={Cambridge university press}
}

@article{hill2011bayesian,
  author    = {Jennifer L. Hill},
  title     = {Bayesian Nonparametric Modeling for Causal Inference},
  journal   = {Journal of Computational and Graphical Statistics},
  volume    = {20},
  number    = {1},
  pages     = {217--240},
  year      = {2011},
  doi       = {10.1198/jcgs.2010.08162},
  publisher = {Taylor & Francis}
}

@incollection{pearl2022direct,
  title={Direct and indirect effects},
  author={Pearl, Judea},
  booktitle={Probabilistic and causal inference: the works of Judea Pearl},
  pages={373--392},
  year={2022}
}

@techreport{callaway2024difference,
  title={Difference-in-differences with a continuous treatment},
  author={Callaway, Brantly and Goodman-Bacon, Andrew and Sant'Anna, Pedro HC},
  year={2024},
  institution={National Bureau of Economic Research}
}

@article{antonelli2024autoregressive,
  title={Autoregressive models for panel data causal inference with application to state-level opioid policies},
  author={Antonelli, Joseph and Rubinstein, Max and Agniel, Denis and Smart, Rosanna and Stuart, Elizabeth and Cefalu, Matthew and Schell, Terry and Eagan, Joshua and Stone, Elizabeth and Griswold, Max and others},
  journal={arXiv preprint arXiv:2408.09012},
  year={2024}
}

@article{smart2024investigating,
  title={Investigating the complexity of naloxone distribution: Which policies matter for pharmacies and potential recipients},
  author={Smart, Rosanna and Powell, David and Pacula, Rosalie Liccardo and Peet, Evan and Abouk, Rahi and Davis, Corey S},
  journal={Journal of health economics},
  volume={97},
  pages={102917},
  year={2024},
  publisher={Elsevier}
}

@article{rubinstein2023heterogeneous,
  title={Heterogeneous interventional effects with multiple mediators: Semiparametric and nonparametric approaches},
  author={Rubinstein, Max and Branson, Zach and Kennedy, Edward H},
  journal={Journal of Causal Inference},
  volume={11},
  number={1},
  pages={20220070},
  year={2023},
  publisher={De Gruyter}
}

@book{angrist2009mostly,
  title={Mostly harmless econometrics: An empiricist's companion},
  author={Angrist, Joshua D and Pischke, J{\"o}rn-Steffen},
  year={2009},
  publisher={Princeton university press}
}

@article{buja2019models,
  title={Models as approximations I},
  author={Buja, Andreas and Brown, Lawrence and Berk, Richard and George, Edward and Pitkin, Emil and Traskin, Mikhail and Zhang, Kai and Zhao, Linda},
  journal={Statistical Science},
  volume={34},
  number={4},
  pages={523--544},
  year={2019},
  publisher={JSTOR}
}

@article{heiler2024heterogeneous,
  author    = {Phillip Heiler},
  title     = {Heterogeneous Treatment Effect Bounds under Sample Selection with an Application to the Effects of Social Media on Political Polarization},
  journal   = {Journal of Econometrics},
  volume    = {244},
  number    = {1},
  pages     = {105856},
  year      = {2024},
  doi       = {10.1016/j.jeconom.2023.105856},
  publisher = {Elsevier}
}

@article{holland1986statistics,
  title={Statistics and causal inference},
  author={Holland, Paul W},
  journal={Journal of the American statistical Association},
  volume={81},
  number={396},
  pages={945--960},
  year={1986},
  publisher={Taylor \& Francis}
}

@article{rubinstein2023balancing,
  title={Balancing weights for region-level analysis: The effect of Medicaid expansion on the uninsurance rate among states that did not expand Medicaid},
  author={Rubinstein, Max and Haviland, Amelia and Choi, David},
  journal={The Annals of Applied Statistics},
  volume={17},
  number={2},
  pages={1469--1490},
  year={2023},
  publisher={Institute of Mathematical Statistics}
}

@article{agarwal2026synthetic,
  title={Synthetic interventions: Extending synthetic controls to multiple treatments},
  author={Agarwal, Anish and Shah, Devavrat and Shen, Dennis},
  journal={Operations Research},
  volume={74},
  number={2},
  pages={840--859},
  year={2026},
  publisher={INFORMS}
}

@article{abadie2010synthetic,
  title={Synthetic control methods for comparative case studies: Estimating the effect of California’s tobacco control program},
  author={Abadie, Alberto and Diamond, Alexis and Hainmueller, Jens},
  journal={Journal of the American statistical Association},
  volume={105},
  number={490},
  pages={493--505},
  year={2010},
  publisher={Taylor \& Francis}
}

@article{deb2025counterfactual,
  title={Counterfactual Forecasting for Panel Data},
  author={Deb, Navonil and Dwivedi, Raaz and Basu, Sumanta},
  journal={arXiv preprint arXiv:2511.06189},
  year={2025}
}

@article{chernozhukov2021exact,
  title={An exact and robust conformal inference method for counterfactual and synthetic controls},
  author={Chernozhukov, Victor and W{\"u}thrich, Kaspar and Zhu, Yinchu},
  journal={Journal of the American Statistical Association},
  volume={116},
  number={536},
  pages={1849--1864},
  year={2021},
  publisher={Taylor \& Francis}
}

@article{abadie2015comparative,
  title={Comparative politics and the synthetic control method},
  author={Abadie, Alberto and Diamond, Alexis and Hainmueller, Jens},
  journal={American Journal of Political Science},
  volume={59},
  number={2},
  pages={495--510},
  year={2015},
  publisher={Wiley Online Library}
}

@article{tas2019should,
  title={Should we worry that take-home naloxone availability may increase opioid use?},
  author={Tas, Basak and Humphreys, Keith and McDonald, Rebecca Silvia and Strang, John S},
  journal={Addiction},
  volume={114},
  number={10},
  pages={1723--1725},
  year={2019},
  publisher={Wiley-Blackwell Publishing Ltd}
}

@article{goodman2021difference,
  title={Difference-in-differences with variation in treatment timing},
  author={Goodman-Bacon, Andrew},
  journal={Journal of econometrics},
  volume={225},
  number={2},
  pages={254--277},
  year={2021},
  publisher={Elsevier}
}

@article{fry2021birds,
  title={Birds of a feather flock together: Comparing controlled pre--post designs},
  author={Fry, Carrie E and Hatfield, Laura A},
  journal={Health Services Research},
  volume={56},
  number={5},
  pages={942--952},
  year={2021},
  publisher={Wiley Online Library}
}

@inproceedings{kwon2024empirical,
  title={(Empirical) Bayes Approaches to Parallel Trends},
  author={Kwon, Soonwoo and Roth, Jonathan},
  booktitle={AEA Papers and Proceedings},
  volume={114},
  pages={606--609},
  year={2024},
  organization={American Economic Association 2014 Broadway, Suite 305, Nashville, TN 37203}
}

@article{han2025bayesian,
  title={Bayesian Sensitivity Analyses for Policy Evaluation with Difference-in-Differences under Violations of Parallel Trends},
  author={Han, Seong Woo and Mitra, Nandita and Hettinger, Gary and Oganisian, Arman},
  journal={arXiv preprint arXiv:2508.02970},
  year={2025}
}

@article{hettinger2025multiply,
  title={Multiply robust difference-in-differences estimation of causal effect curves for continuous exposures},
  author={Hettinger, Gary and Lee, Youjin and Mitra, Nandita},
  journal={Biometrics},
  volume={81},
  number={1},
  pages={ujaf015},
  year={2025},
  publisher={Oxford University Press}
}

@article{zhang2025continuous,
  title={Continuous difference-in-differences with double/debiased machine learning},
  author={Zhang, Lucas Zheng},
  journal={Econometrics Journal},
  pages={utaf024},
  year={2025},
  publisher={Oxford University Press}
}

@article{seewald2024target,
  title={Target trial emulation for evaluating health policy},
  author={Seewald, Nicholas J and McGinty, Emma E and Stuart, Elizabeth A},
  journal={Annals of internal medicine},
  volume={177},
  number={11},
  pages={1530--1538},
  year={2024},
  publisher={American College of Physicians}
}

@article{heiler2022heterogeneity,
  title={Effect or treatment heterogeneity? Policy evaluation with aggregated and disaggregated treatments},
  author={Heiler, Phillip and Knaus, Michael},
  year={2022},
  publisher={JSTOR}
}

@misc{chernozhukov2018double,
  title={Double/debiased machine learning for treatment and structural parameters},
  author={Chernozhukov, Victor and Chetverikov, Denis and Demirer, Mert and Duflo, Esther and Hansen, Christian and Newey, Whitney and Robins, James},
  year={2018},
  publisher={Oxford University Press Oxford, UK}
}

@article{mackinnon2018wild,
  title={The wild bootstrap for few (treated) clusters},
  author={MacKinnon, James G and Webb, Matthew D},
  journal={The Econometrics Journal},
  volume={21},
  number={2},
  pages={114--135},
  year={2018},
  publisher={Oxford University Press Oxford, UK}
}

@article{schuler2024growing,
  title={Growing importance of high-volume buprenorphine prescribers in OUD treatment: 2009--2018},
  author={Schuler, Megan S and Dick, Andrew W and Gordon, Adam J and Saloner, Brendan and Kerber, Rose and Stein, Bradley D},
  journal={Drug and alcohol dependence},
  volume={259},
  pages={111290},
  year={2024},
  publisher={Elsevier}
}

@article{saloner2021article,
  title={Article commentary: It will end in tiers: A strategy to include “Dabblers” in the buprenorphine workforce after the X-waiver},
  author={Saloner, Brendan and Andraka Christou, Barbara and Gordon, Adam J and Stein, Bradley D},
  journal={Substance Abuse},
  volume={42},
  number={2},
  pages={153--157},
  year={2021},
  publisher={SAGE Publications Sage CA: Los Angeles, CA}
}

@article{vegetabile2021distinction,
  title={On the distinction between ``conditional average treatment effects" (CATE) and ``individual treatment effects" (ITE) under ignorability assumptions},
  author={Vegetabile, Brian G},
  journal={arXiv preprint arXiv:2108.04939},
  year={2021}
}

@article{kennedy2022semiparametric,
  title={Semiparametric doubly robust targeted double machine learning: a review},
  author={Kennedy, Edward H},
  journal={arXiv preprint arXiv:2203.06469},
  year={2022}
}

@article{manski2018right,
  title={How do right-to-carry laws affect crime rates? Coping with ambiguity using bounded-variation assumptions},
  author={Manski, Charles F and Pepper, John V},
  journal={Review of Economics and Statistics},
  volume={100},
  number={2},
  pages={232--244},
  year={2018},
  publisher={MIT Press}
}

@article{rambachan2023more,
  title={A more credible approach to parallel trends},
  author={Rambachan, Ashesh and Roth, Jonathan},
  journal={Review of Economic Studies},
  volume={90},
  number={5},
  pages={2555--2591},
  year={2023},
  publisher={Oxford University Press US}
}

@article{mackinnon2017wild,
  title={Wild bootstrap inference for wildly different cluster sizes},
  author={MacKinnon, James G and Webb, Matthew D},
  journal={Journal of Applied Econometrics},
  volume={32},
  number={2},
  pages={233--254},
  year={2017},
  publisher={Wiley Online Library}
}

@article{imbens2004nonparametric,
  title={Nonparametric estimation of average treatment effects under exogeneity: A review},
  author={Imbens, Guido W},
  journal={Review of Economics and statistics},
  volume={86},
  number={1},
  pages={4--29},
  year={2004},
  publisher={MIT Press 238 Main St., Suite 500, Cambridge, MA 02142-1046, USA journals~…}
}

@article{almirall2016adaptive,
  title={Adaptive interventions in child and adolescent mental health},
  author={Almirall, Daniel and Chronis-Tuscano, Andrea},
  journal={Journal of Clinical Child \& Adolescent Psychology},
  volume={45},
  number={4},
  pages={383--395},
  year={2016},
  publisher={Taylor \& Francis}
}

@article{mackinnon2020randomization,
  title={Randomization inference for difference-in-differences with few treated clusters},
  author={MacKinnon, James G and Webb, Matthew D},
  journal={Journal of Econometrics},
  volume={218},
  number={2},
  pages={435--450},
  year={2020},
  publisher={Elsevier}
}

@article{gardner2022two,
  title={Two-stage differences in differences},
  author={Gardner, John},
  journal={arXiv preprint arXiv:2207.05943},
  year={2022}
}

@article{callaway2021difference,
  title={Difference-in-differences with multiple time periods},
  author={Callaway, Brantly and Sant’Anna, Pedro HC},
  journal={Journal of econometrics},
  volume={225},
  number={2},
  pages={200--230},
  year={2021},
  publisher={Elsevier}
}

\newpage

\appendix

\section{Theoretic results}\label{app:proofs}

\subsection{Identification results: treatment coarsening}\label{appsec:estimands}

We consider causal identification in the presence of treatment coarsening from a potential outcomes framework; first using an ignorability, or selection-on-observables, framework; and second, using a difference-in-differences framework. We conclude by considering the causal interpretation of the unit-level DiD estimators proposed in Section \ref{ssec:extensionscoarsening} in the presence of treatment coarsening.

\subsubsection{Ignorability}\label{appsec:coarsening}

We begin by presenting Proposition \ref{prop:rep}, which replicates a result from \citep{hettinger2025causal} that gives a causal interpretation for the estimand $\E[Y \mid A = 1, x] - \E[Y \mid A = 0, x]$, a standard estimator of the CATE \cite{kennedy2022semiparametric}, under assumptions \ref{asmpt:1}-\ref{asmpt:3} below. We then show how, under additional assumptions, we obtain the same expression for the identified causal estimand when defining $Y(A=1)=Y(M(1))$ as we propose, and formalized in Proposition \ref{prop:equiv}. To motivate Proposition \ref{prop:rep}, we make the following assumptions:

\stepcounter{assumption} 

\begin{intassumption}[Consistency]\label{asmpt:1}
    \begin{align*}
        Y &= \sum_{m=0}^k\I(M = m)Y(m). \\
    \end{align*}
\end{intassumption}

\noindent Consistency states that the observed value of $Y$ is equal to the potential outcome under the respective treatment.

\begin{intassumption}[Ignorability]\label{asmpt:2}
    \begin{align*}
    Y(m) &\perp M \mid X, \qquad m = 0,\dots,k.
    \end{align*}
\end{intassumption}

\noindent Ignorability states that $M$ is effectively randomized with respect to $Y(m)$ conditional on the observed covariates $X$.

\begin{intassumption}[Positivity]\label{asmpt:3}
    \begin{align*}
    \Pr(\min_a\Pr(A = a \mid X) &> \epsilon) = 1, \qquad \epsilon > 0. \\
    \end{align*}
\end{intassumption}

\noindent Positivity states that there is some positive probability of receiving any treatment-level defined by the coarsened treatment indicator $A$.

Proposition \ref{prop:rep} states that the standard CATE expression under treatment coarsening gives a weighted average of CATEs $\tilde\psi_m(x)$, with weights proportional to the conditional distribution of each value of $m$ for $m > 0$. This result was previously shown in \cite{hettinger2025causal}.

\begin{proposition}\label{prop:rep}
    Under assumptions \ref{asmpt:1}-\ref{asmpt:3},
    \begin{align*}
        \E[Y \mid A = 1, x] - \E[Y \mid A = 0, x] &= \sum_{m > 0}\E[Y(m)\mid x]\frac{\Pr(M =m \mid x)}{\sum_{m >0}\Pr(M = m \mid x)} - \E[Y(0) \mid A = 0,x]
    \end{align*}    
\end{proposition}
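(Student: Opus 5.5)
We compute each conditional mean separately, using the fact that $\{A=1\}=\{M>0\}$ and $\{A=0\}=\{M=0\}$, since $A=\I(M>0)$.

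\textbf{Treated term.} By iterated expectations over the values of $M$ within $\{A=1\}$,
\begin{align*}
\E[Y\mid A=1,x]=\sum_{m>0}\E[Y\mid M=m,x]\,\Pr(M=m\mid A=1,x).
\end{align*}
Consistency (Assumption \ref{asmpt:1}) replaces $\E[Y\mid M=m,x]$ with $\E[Y(m)\mid M=m,x]$. Ignorability (Assumption \ref{asmpt:2}) then gives $\E[Y(m)\mid M=m,x]=\E[Y(m)\mid x]$. For the weights, because $\{M=m\}\subseteq\{A=1\}$ for $m>0$, Bayes' rule yields
\begin{align*}
\Pr(M=m\mid A=1,x)=\frac{\Pr(M=m\mid x)}{\Pr(A=1\mid x)}=\frac{\Pr(M=m\mid x)}{\sum_{m'>0}\Pr(M=m'\mid x)}.
\end{align*}
Positivity (Assumption \ref{asmpt:3}) guarantees that this denominator is bounded away from zero almost surely. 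This produces the first sum in the display.

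\textbf{Control term.} Since $\{A=0\}=\{M=0\}$, consistency gives $\E[Y\mid A=0,x]=\E[Y(0)\mid M=0,x]=\E[Y(0)\mid A=0,x]$. Positivity again ensures this conditioning event has positive probability. The proposition's display keeps the control term in this conditional form, although ignorability would further reduce it to $\E[Y(0)\mid x]$. I would remark on this reduction, since it shows the right-hand side equals $\sum_{m>0}\tilde\psi_m(x)\Pr(M=m\mid A=1,x)$, a weighted average of version-specific CATEs. That is the interpretation stated before the proposition.

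\textbf{Main obstacle.} No step is genuinely hard. The one point needing care is that conditioning on $\{A=1\}$ is a union of the events $\{M=m\}$. Ignorability is stated with respect to $M$, not $A$, so it may be applied only after decomposing into the individual $M=m$ strata. It must not be applied directly to $A$.
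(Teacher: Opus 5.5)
Your proof is correct and matches the paper's argument: decompose $\E[Y\mid A=1,x]$ over the strata $\{M=m\}$, apply consistency and then ignorability to obtain $\E[Y(m)\mid x]$, and rewrite the weights by Bayes' rule, a step the paper takes without comment. For the control term, consistency alone gives the displayed $\E[Y(0)\mid A=0,x]$, while the paper goes one step further to $\E[Y(0)\mid x]$ via ignorability, exactly as you remark. One small correction to your closing comment: since $A$ is a function of $M$, $Y(m)\perp M\mid X$ does imply $Y(m)\perp A\mid X$, so the real reason to decompose is that on $\{A=1\}$ the observed outcome is $Y(M)$ rather than a single fixed potential outcome.
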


\begin{proof}
    \begin{align*}
        \E[Y \mid A = 1, x] &= \sum_{m>0}\E[Y \mid M = m, A = 1, x]\Pr(M = m \mid A = 1, x),
    \end{align*}

    \noindent where we sum only over $m>0$ since $M>0 \iff A=1$. Then,

    \begin{align*}
        \E[Y \mid A = 1, x] - \E[Y \mid A = 0, x] &= \sum_{m > 0} \E[Y(m) \mid M = m, x]\Pr(M = m \mid A =  1, x) - \E[Y(0) \mid x] \\
        &=\sum_{m > 0} \E[Y(m) \mid x]\frac{\Pr(M =m \mid x)}{\sum_{m >0}\Pr(M = m \mid x)} - \E[Y(0) \mid x] \\
        &= \sum_{m > 0} \E[Y(m)-Y(0) \mid x]\frac{\Pr(M =m \mid x)}{\sum_{m >0}\Pr(M = m \mid x)}.
    \end{align*}

    \noindent where we use outcome consistency to replace $Y$ by $Y(m)$, and ignorability to drop conditioning on $M$ inside the expectation.
\end{proof}

\begin{remark}
    The positivity requirement is somewhat weaker than what would be required to identify $\E[Y(m) \mid x]$ for all $m$, which would require $\Pr(\min_m \Pr(M = m\mid X) > \epsilon) = 1$ for some $\epsilon > 0$.
\end{remark}

We now formalize our proposed framework, which may be viewed as a special case of the one proposed in \cite{vanderweele2013causal}, where we posit that $(A, M)$ is a sequentially randomized trial where $A = 0 \implies M = 0$ and $M$ is randomized to values $m \in \{1,\dots,k\}$ when $A = 1$. 

\stepcounter{assumption}

\begin{intassumption}[Exclusion restriction]\label{asmpt:0}
    \begin{align*}
    Y(a) = Y(a, M(a)) = Y(M(a)), \qquad \forall a = 0, 1.       
    \end{align*}
\end{intassumption}

\noindent The exclusion restriction states that $A$ only affects $Y$ through $M$. This follows by definition since all information about $A$ is contained in $M$.

\begin{intassumption}[Y-M Consistency]\label{asmpt:2a}
    \begin{align*}
        Y &= \sum_{m=0}^k\I(M = m)Y(m), \\
        M &= AM(1), \qquad M(1) = 1,\dots,k.
    \end{align*}
\end{intassumption}

\noindent Y-M consistency adds a consistency statement with respect to $M$ in addition to $Y$, stating that the observed value of $M$ is equal to $M(1)$ if $A = 1$ (a quantity greater than 0), and equal to zero otherwise. This again implies that $A = 0 \iff M = 0$, and that $M(0) = 0$ and is a well-defined single policy state. The consistency assumption with respect to $Y$ remains unchanged from before.

\begin{intassumption}[Ignorability]\label{asmpt:3a}
    \begin{align*}
    Y(M(a)) &\perp A \mid X \qquad \forall a = 0, 1.
    \end{align*}
\end{intassumption}

\noindent Ignorability states that $A$ is effectively randomized with respect to $Y(a)$ given $X$ for $a = 0, 1$.

\begin{intassumption}[Positivity]\label{asmpt:4a}
    \begin{align*}
        \Pr(\min_a\Pr(A = a \mid X) > \epsilon) = 1, \qquad \epsilon > 0.
    \end{align*}
\end{intassumption}

\noindent Finally, we assume positivity, which is equivalent to what we had defined previously. Proposition \ref{prop:equiv} shows that this framework gives an identical identified expression for the CATE.

\begin{proposition}\label{prop:equiv}
    Under assumptions \ref{asmpt:0}-\ref{asmpt:4a},
    \begin{align*}
    \E[Y(A = 1)-Y(A = 0) \mid x] &=  \E[Y \mid x, A = 1]- \E[Y \mid x, A = 0].
    \end{align*}
\end{proposition}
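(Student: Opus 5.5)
The plan is to identify each of the two conditional means $\E[Y(A=a)\mid x]$, $a\in\{0,1\}$, separately, and then take the difference. The argument is the standard one for a binary treatment, applied to the coarsened indicator $A$. The only new ingredient is that $Y(A=a)$ is defined through $M(a)$, so we first translate it into a statement about the observed $Y$ using the exclusion restriction and Y--M consistency.

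\textbf{Step 1 (the observed outcome equals the $A$-potential outcome on $\{A=a\}$).} By Assumption \ref{asmpt:0}, $Y(A=a)=Y(M(a))$. By Assumption \ref{asmpt:2a}, $M=AM(1)$, so on the event $\{A=1\}$ we have $M=M(1)$, and on $\{A=0\}$ we have $M=0=M(0)$. Combining this with the $Y$-consistency statement $Y=\sum_m \I(M=m)Y(m)$ gives, on $\{A=1\}$,
\[
Y=\sum_{m>0}\I(M(1)=m)\,Y(m)=Y(M(1))=Y(A=1),
\]
and on $\{A=0\}$, $Y=Y(0)=Y(M(0))=Y(A=0)$. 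Hence $Y=AY(A=1)+(1-A)Y(A=0)$. I expect this to be the only step needing care. The point to check is that $M(1)$ is a well-defined random variable for every unit, treated or untreated, so that $Y(M(1))$ is the composite potential outcome. It is then worth noting explicitly that consistency is used only on the event $A=1$, where $M(1)$ is observed.

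\textbf{Step 2 (ignorability and positivity).} For each $a$, write
\[
\E[Y(A=a)\mid x]=\E[Y(A=a)\mid A=a,x]=\E[Y\mid A=a,x].
\]
The first equality uses Assumption \ref{asmpt:3a}, namely $Y(M(a))\perp A\mid X$. The second uses Step 1. Assumption \ref{asmpt:4a} guarantees that $\Pr(A=a\mid x)>\epsilon$ for almost every $x$, so the conditioning events have positive probability and the conditional expectations are well defined.

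\textbf{Step 3 (conclusion).} Subtract the two identities from Step 2 and use linearity of conditional expectation, $\E[Y(A=1)-Y(A=0)\mid x]=\E[Y(A=1)\mid x]-\E[Y(A=0)\mid x]$, to obtain the claim.

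Optionally, one can close by connecting this result to Proposition \ref{prop:rep}. To do so, additionally assume $M(1)\perp A\mid X$. Then the right-hand side expands as $\sum_{m>0}\E[Y(m)-Y(0)\mid M(1)=m,x]\Pr(M(1)=m\mid x)$, and this expansion matches the coarsened CATE defined in Section \ref{ssec:coarsening}.
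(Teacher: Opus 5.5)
Your proof is correct and follows the paper's argument. The paper runs the same chain in one display: the exclusion restriction gives $Y(A=a)=Y(M(a))$, ignorability lets you condition on $A=a$, and Y--M consistency (with $M=0\iff A=0$) then replaces $Y(M(1))$ and $Y(M(0))$ by the observed $Y$; you merely isolate the consistency step as $Y=AY(A=1)+(1-A)Y(A=0)$ before applying it arm by arm. On your optional remark: the expansion $\sum_{m>0}\E[Y(m)-Y(0)\mid M(1)=m,x]\Pr(M(1)=m\mid x)$ already follows from the proposition by iterated expectations over $M(1)$, without $M(1)\perp A\mid X$, which you need only to replace $\Pr(M(1)=m\mid x)$ by the observed $\Pr(M=m\mid A=1,x)$.
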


\begin{proof}
\begin{align*}
    \E[Y(A=1)-Y(A=0) \mid x] &= \E[Y(M(1)) - Y(M(0)) \mid x] \\
    &= \E[Y(M(1)) \mid x, A = 1] - \E[Y(M(0)) \mid x, A = 0] \\
    &= \E[Y(M) \mid x, A = 1] - \E[Y(0) \mid x, A = 0] \\
    &= \E[Y \mid x, A = 1] - \E[Y \mid x, A = 0]
\end{align*}

\noindent where the first equality holds by definition and the exclusion restriction, the second by ignorability, and the third and fourth lines by consistency and the fact that $M = 0 \iff A = 0$. \end{proof}

\begin{remark}
    Proposition \ref{prop:equiv} shows that our proposed definition of $Y(A = 1)$ yields an identical identified expression for the CATE that is commonly estimated in practice using a randomization-based (or ``selection-on-observables'') causal identification framework. While Proposition \ref{prop:equiv} requires additional assumptions to define $Y(A = 1)$ relative to Proposition \ref{prop:rep}, identification of this expression is possible even when $Y(m) \not\perp M \mid X, A = 1$, which is precluded by assumption \ref{asmpt:2}. 
    
    However, this expression is \textit{not} equal to a weighted average of the CATEs $\tilde\psi_m(x)$ unless we invoke an additional ignorability assumption. To see this notice that,
    
    \begin{align*}
        \E[Y \mid x, A = 1] &= \sum_{m>0}\E[Y(m) \mid M = m, A = 1, x]\Pr(M = m \mid A = 1,x) \\
        &\ne \sum_{m>0}\E[Y(m) \mid x]\Pr(M = m \mid A = 1, x). 
    \end{align*}

    The second line would hold, for example, if we invoked assumption \ref{asmpt:2}. Finally, if assumption \ref{asmpt:2} does hold, notice that this last equality suggests that we may alternatively view $\E[Y \mid x, A = 1] - \E[Y \mid x, A = 0]$ as representing a stochastic intervention: $\E[Y(G_1(m)) - Y(0) \mid x]$ where $G_1(m)$ is a random draw from $\P(M \mid A = 1, x)$. This may be a preferable interpretation of this quantity in settings where it is not acceptable to view $A$ as preceding $M$. We refer to \cite{vanderweele2013causal} for more discussion of stochastic interventions in this setting.
\end{remark}

\begin{remark}
    Proposition \ref{prop:equiv} is essentially a special case of Proposition 3 from \cite{vanderweele2013causal}, where we consider a binary $A$ and do not allow for additional confounders of the $(A, M)$ relationship.
\end{remark}

\begin{remark}
    We have only considered identification of the CATE; however, extending this result to the CDE is straightforward. We would simply define the potential outcomes $Y(m,x)$, and assume the relevant ignorability (possibly conditional on other covariates $Z$) and consistency assumptions to obtain the equality $\E[Y(m) \mid X = x] = \E[Y(m,x)]$ (or $\E[Y(m) \mid X = x, Z = z] = \E[Y(m,x) \mid Z = z]$). While these assumptions are stronger than the ones required to identify the CATE, the identified estimand will be identical (assuming $Z = \emptyset$; the expression will simply condition on $Z$ otherwise).
\end{remark}

\subsubsection{Difference-in-differences}\label{appsec:didcoarsening}

We now consider causal identification under treatment coarsening in a difference-in-differences framework. For simplicity, we let $t \in \{0, 1\}$ and assume that treatment only occurs at time $t = 1$. We define the conditional difference-in-differences estimand,

\begin{align*}
   \psi^{\mathrm{DiD}}(x) &= \E[Y_{i1}-Y_{i0} \mid A = 1, X = x] - \E[Y_{i1}-Y_{i0} \mid A = 0, X = x].
\end{align*}

\noindent We show the causal estimands this quantity identifies under different assumptions. We condition on $X = x$ throughout, but briefly observe that we may take $X = \emptyset$ throughout to obtain unconditional analogues of all assumptions and quantities, or take $\E[\psi^{\mathrm{DiD}}(X)]$ to obtain results that hold averaged over the covariate distribution. We will continue to invoke assumptions \ref{asmpt:0} (the exclusion restriction) and \ref{asmpt:2a} (consistency with respect to $M$ and $Y$), and positivity throughout. We consider the following additional assumptions below. 

\stepcounter{assumption}

\begin{intassumption}[Parallel counterfactual trends absent treatment]\label{asmpt:pt0}
    \begin{align*}
        \E[Y_{i1}(0)-Y_{i0}(0) \mid M = m, X = x] &= \E[Y_{i1}(0) - Y_{i0}(0) \mid M = 0, X = x], \qquad m =0,\dots,k \\
    \end{align*}
\end{intassumption}

\begin{intassumption}[Parallel counterfactual trends under treatment]\label{asmpt:pt1}
    \begin{align*}
        \E[Y_{i1}(m)-Y_{i0}(0) \mid M = m, X = x] &= \E[Y_{i1}(m) - Y_{i1}(0) \mid M = 0, X = x], \qquad m = 0,\dots,k
    \end{align*}
\end{intassumption}

\noindent The standard identifying assumption used in difference-in-differences is given by assumption \ref{asmpt:pt0} when $M$ is binary. We simply extend this assumption to account for different treatment levels, as in \cite{callaway2021difference}. We also state a stronger assumption \ref{asmpt:pt1}: that the trends for each treatment group equal the trends that would have occurred for the non-treated group were it treated, allowing us to further identify the average treatment effect on the controls.

\begin{intassumption}[A-M ignorability]\label{asmpt:amign1}
    \begin{align*}
        A &\perp M(1) \mid X
    \end{align*}
\end{intassumption}

\begin{intassumption}[Y-M ignorability]\label{asmpt:amign2}
    \begin{align*}
        Y(m) &\perp M(1) \mid X, A = 0, \qquad m = 0,\dots,k  
    \end{align*}
\end{intassumption}

\noindent A-M ignorability states that $A$ and $M(1)$ are independent conditional on $X$, and Y-M ignorability states that $M(1)$ is independent of $Y(m)$ conditional on $X$ and $A = 0$.

\begin{proposition}\label{prop:did}
    Consider the observed data functional $\psi^{\mathrm{DiD}}(x)$. If consistency, positivity, and assumption \ref{asmpt:pt0} holds, then

    \begin{align*}
         \psi^{\mathrm{DiD}}(x) &= \sum_{m>0} \E[Y_{i1}(m) - Y_{i1}(0) \mid M = m, A = 1, x]\Pr(M = m \mid A = 1, x). 
    \end{align*}

    \noindent This equals a weighted average of conditional average treatment effects among the treated group, with weights proportional to the conditional probability of each treatment value among the treated.
    
    If, in addition, assumptions \ref{asmpt:0} and \ref{asmpt:2a} hold,

    \begin{align*}
         \psi^{\mathrm{DiD}}(x) &= \E[Y_{i1}(M(1)) - Y_{i1}(0) \mid A = 1, x].         
    \end{align*}
    
    \noindent This equals a coarsened conditional average treatment effect on the treated.

    Alternatively, if assumptions \ref{asmpt:0}, \ref{asmpt:2a}, \ref{asmpt:pt0}, \ref{asmpt:pt1}, \ref{asmpt:amign1}, and \ref{asmpt:amign2} hold, then

    \begin{align*}
        \psi^{\mathrm{DiD}}(x) &= \E[Y_{i1}(M(1)) - Y_{i1}(0) \mid x].
    \end{align*}

    This equals a coarsened conditional average treatment effect.    
\end{proposition}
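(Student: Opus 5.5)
We prove the three displays in sequence, each building on the previous one. For the first display, we split $\psi^{\mathrm{DiD}}(x)$ into its treated and untreated pieces. Since $A=1 \iff M>0$, the law of total expectation gives
\begin{align*}
\E[Y_{i1}-Y_{i0}\mid A=1,x] = \sum_{m>0}\E[Y_{i1}-Y_{i0}\mid M=m,A=1,x]\Pr(M=m\mid A=1,x).
\end{align*}
By consistency and no anticipation, $Y_{i1}=Y_{i1}(m)$ and $Y_{i0}=Y_{i0}(0)$ on $\{M=m\}$. Adding and subtracting $Y_{i1}(0)$ inside each summand then gives
\begin{align*}
\E[Y_{i1}(m)-Y_{i1}(0)\mid M=m,A=1,x]+\E[Y_{i1}(0)-Y_{i0}(0)\mid M=m,x].
\end{align*}
For the untreated piece, consistency yields $\E[Y_{i1}(0)-Y_{i0}(0)\mid M=0,x]$. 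Assumption \ref{asmpt:pt0} replaces each $m$-specific untreated trend by this control trend. Because the weights $\Pr(M=m\mid A=1,x)$ sum to one over $m>0$, the control trend subtracts off exactly, leaving the stated weighted average of effects among the treated. Positivity is used only to ensure the conditional expectations are well defined.

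For the second display, we reinterpret the summands. Under assumptions \ref{asmpt:0} and \ref{asmpt:2a}, on $\{A=1\}$ we have $M=M(1)$. Hence each summand equals $\E[Y_{i1}(M(1))-Y_{i1}(0)\mid M(1)=m,A=1,x]$, and the weights equal $\Pr(M(1)=m\mid A=1,x)$. Re-aggregating over $m$ by the law of total expectation gives $\E[Y_{i1}(M(1))-Y_{i1}(0)\mid A=1,x]$. This is the coarsened effect on the treated, with $Y(A=1)=Y(M(1))$ by the exclusion restriction.

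For the third display, it suffices to show that the coarsened effect among controls, $\E[Y_{i1}(M(1))-Y_{i1}(0)\mid A=0,x]$, equals the effect among the treated. The result then follows by averaging over $A$ given $x$. We proceed in four steps.
\begin{itemize}
\item Stratify the control-group effect by $M(1)$:
\begin{align*}
\E[Y_{i1}(M(1))-Y_{i1}(0)\mid A=0,x]=\sum_{m>0}\E[Y_{i1}(m)-Y_{i1}(0)\mid M(1)=m,A=0,x]\Pr(M(1)=m\mid A=0,x).
\end{align*}
\item By Y-M ignorability (assumption \ref{asmpt:amign2}), drop the conditioning on $M(1)=m$. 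Each summand becomes $\E[Y_{i1}(m)-Y_{i1}(0)\mid M=0,x]$, using $A=0\iff M=0$.
\item By assumption \ref{asmpt:pt1}, combined with assumption \ref{asmpt:pt0}, convert each of these into the treated-group effect $\E[Y_{i1}(m)-Y_{i1}(0)\mid M=m,x]$. Subtracting the two parallel-trends identities gives exactly this equality, once we read assumption \ref{asmpt:pt1} as the natural $Y_{i0}(0)$ analogue.
\item By A-M ignorability (assumption \ref{asmpt:amign1}), $\Pr(M(1)=m\mid A=0,x)=\Pr(M(1)=m\mid A=1,x)=\Pr(M=m\mid A=1,x)$. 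The weights therefore match.
\end{itemize}
Together these show that the control-group coarsened effect equals the expression in the first display. Then
\begin{align*}
\E[\,\cdot\mid x]=\sum_a \E[\,\cdot\mid A=a,x]\Pr(A=a\mid x)=\psi^{\mathrm{DiD}}(x).
\end{align*}

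The main obstacle is the third step. Assumption \ref{asmpt:pt1} is stated with a somewhat unusual mix of time indices: a $Y_{i0}(0)$ on the left and $Y_{i1}(0)$ on the right. We must check carefully that, together with assumption \ref{asmpt:pt0}, it delivers exactly $\E[Y_{i1}(m)-Y_{i1}(0)\mid M=m,x]=\E[Y_{i1}(m)-Y_{i1}(0)\mid M=0,x]$. If the literal statement does not give this, we would interpret assumption \ref{asmpt:pt1} as equating treated-potential-outcome trends, $\E[Y_{i1}(m)-Y_{i0}(0)\mid M=m,x]=\E[Y_{i1}(m)-Y_{i0}(0)\mid M=0,x]$. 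We would then subtract assumption \ref{asmpt:pt0} to obtain the needed equality.
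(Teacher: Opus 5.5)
Your proof is correct and follows the paper's argument. The first two displays are derived exactly as in the paper, and for the third the paper also reads Assumption \ref{asmpt:pt1} with $Y_{i0}(0)$ on both sides, as you propose. The only difference is organizational: the paper uses that assumption, without Assumption \ref{asmpt:pt0}, to show directly that $\psi^{\mathrm{DiD}}(x)$ equals the coarsened effect among controls, and then combines this with the treated-group result. You instead equate the two group-specific coarsened effects by differencing the two parallel-trends assumptions and then average over $A$.
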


Proposition \ref{prop:did} tells us that under parallel-trends, positivity, and consistency, $\psi^{\mathrm{DiD}}(x)$ returns a weighted average of conditional treatment effects on the treated (CATT), which can also be interpreted as a stochastic intervention where $M$ is randomly drawn from $\P(M \mid A = 1, x)$ among treated units \cite{vanderweele2013causal}. This requires no conception of the $(A, M)$ sequential randomization framework, and is similar to the ignorability result from \cite{hettinger2025causal}, where we have replaced the ignorability assumption with parallel-trends, though note that the latter gives us a weighted average of CATEs, rather than CATTs. If we are also willing to accept the two-stage randomization framework, this same quantity is equal to a coarsened CATT. Finally, to equate this to a coarsened CATE, we may invoke two stronger assumptions: both the stronger version of parallel-trends given by assumption \ref{asmpt:pt1}, and the ignorability assumptions given by \ref{asmpt:amign1} and \ref{asmpt:amign2}.

\begin{proof}
    First, note that under parallel-trends absent treatment, 

    \begin{align*}
        \E[Y_{i1}(0) - Y_{i0}(0) \mid A = 0, x] &= \E[Y_{i1}(0)-Y_{i0}(0) \mid M = m, x],\qquad  m = 1,\dots, k,
    \end{align*}

    \noindent and averaging both sides over $\Pr(M = m \mid A = 1, x)$ we obtain,

    \begin{align*}
        \E[Y_{i1}(0)-Y_{i0}(0) \mid A = 1, x] &= \E[Y_{i1}(0)-Y_{i0}(0) \mid A = 0, x].
    \end{align*}

    \noindent Then,

    \begin{align*}
        \psi^{\mathrm{DiD}}(x) &= \E[Y_{i1}(M)-Y_{i0}(0) \mid A = 1, x] - \E[Y_{i1}(0)-Y_{i0}(0) \mid A = 1, x] \\
        &= \E[Y_{i1}(M)-Y_{i1}(0) \mid A = 1,x] \\
        &= \sum_{m > 0}\E[Y_{i1}(m)-Y_{i1}(0) \mid M = m, A = 1, x]\Pr(M = m \mid A = 1, x) \\
        &= \sum_{m > 0}\E[Y_{i1}(m)-Y_{i1}(0) \mid M(1) = m, A = 1, x]\Pr(M(1) = m\mid A = 1, x) \\
        &= \E[Y_{i1}(M(1))-Y_{i1}(0) \mid A = 1, x] \\
        &= \E[Y_{i1}(M)-Y_{i1}(0) \mid A = 1, x] 
    \end{align*}

    \noindent where the first equality used the previous result, parallel-trends absent treatment, and consistency; the second by simplifying terms; the third by iterating expectations over $M$; the fourth by consistency; the fifth line by summing over the values of $m$; and the final line by consistency. This gives the first two results in Proposition \ref{prop:did}.
    
    Next, we consider instead parallel-trends under treatment:

    \begin{align*}
        \psi^{\mathrm{DiD}}(x) &= \sum_{m>0}\E[Y_{i1}(m)-Y_{i0}(0) \mid M = m, A = 1, x]\Pr(M = m \mid A = 1, x) - \E[Y_{i1}(0)-Y_{i0}(0) \mid M = 0, x] \\ 
        &= \sum_{m>0}\E[Y_{i1}(m)-Y_{i0}(0) \mid A = 0,x]\Pr(M = m \mid A = 1, x) - \E[Y_{i1}(0)-Y_{i0}(0) \mid M = 0, x] \\
        &= \sum_{m>0}\E[Y_{i1}(m)-Y_{i1}(0) \mid A = 0, x]\Pr(M(1) = m \mid A = 0, x)  \\
        &= \sum_{m>0}\E[Y_{i1}(m)-Y_{i1}(0) \mid A = 0, M(1) = m, x]\Pr(M(1) = m \mid A = 0, x) \\
        &= \E[Y_{i1}(M(1))-Y_{i1}(0) \mid A = 0, x] 
    \end{align*}
    
    \noindent where the first equality follows by the fact that $m > 0 \implies A = 1$ and consistency, the second by parallel-trends under treatment, the third by consistency and sequential ignorability and simplifying terms, the fourth by sequential ignorability, and the fifth by averaging over the values of $M(1) = m$. Taking the union of all invoked assumptions implies that $\psi^{\mathrm{DiD}}(x) = \E[Y_{i1}(M(1)) - Y_{i1}(0) \mid x]$.
\end{proof}

\subsubsection{Difference-in-differences: ITE estimator}

    In Section \ref{ssec:extensionscoarsening} we propose applying unit-level difference-in-differences estimators, $\hat\psi_i^{\mathrm{DiD}}$, as starting estimates for the ITE in the case of treatment coarsening. However, we observed that $\hat\psi_i^{\mathrm{DiD}}$ is a better starting estimate of $\psi_i = Y(M(1))-Y(0)$ for treated units relative to untreated units. To illustrate this, consider again the unit-level sample analogue to the conditional parallel-trends assumption given previously by equation \ref{eqn:2a}, where we further assume that $\hat Y_{i1}(m) = Y_{i1}(m)$. In other words,

    \begin{assumption}[Unit-level conditional parallel-trends]\label{asmpt:ulpt}
    \begin{align}
        Y_{i1}(m) &= Y_{i0}+\left[\sum_{j=1}^N\I(X_j = X_i, M_j = m)\right]^\inv\left[\sum_{j=1}^N \I(X_j = X_i, M_j = m)(Y_{j1}-Y_{j0})\right], \qquad\forall i, m = 0, \dots, k,
    \end{align}        
    \end{assumption}

    \noindent noting that we can always take $X = \emptyset$. 

    \begin{proposition}\label{prop:lumpite}
        Under assumption \ref{asmpt:ulpt} and consistency, for a treated unit $i$,

        \begin{align*}
            \hat \psi_i^{\mathrm{DiD}} = \psi_i = Y_{i1}(M_i) - Y_{i1}(0).
        \end{align*}

        \noindent By contrast, for an untreated unit $i$, under the same assumptions as well as the exclusion restriction, we have that,

        \begin{align*}
            \hat \psi_i^{\mathrm{DiD}} = \sum_{m > 0} (Y_{i1}(m)-Y_{i1}(0))\Pr_N(M = m \mid A = 1, X = x).
        \end{align*}

        \noindent This estimand corresponds to a unit-level stochastic intervention that draws $m$ from the conditional distribution of treated states \cite{vanderweele2013causal}.        
    \end{proposition}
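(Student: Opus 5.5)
The plan is to unpack the definition of the coarsened unit-level estimator $\hat\psi_i^{\mathrm{DiD}}$ in the two-period setting ($t\in\{0,1\}$) with covariate-matched comparison groups. Assumption \ref{asmpt:ulpt} then lets us replace each imputed counterfactual by the true potential outcome. Concretely, the coarsened estimator for unit $i$ is
\begin{align*}
\hat\psi_i^{\mathrm{DiD}} = A_i\bigl(Y_{i1} - \hat Y_{i1}(0)\bigr) + (1-A_i)\bigl(\hat Y_{i1}(A=1) - Y_{i1}\bigr),
\end{align*}
where $\hat Y_{i1}(0) = Y_{i0} + B_0(X_i)$ and $\hat Y_{i1}(A=1) = Y_{i0} + B_{A=1}(X_i)$. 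The trend $B_{A=1}(X_i)$ averages $Y_{j1}-Y_{j0}$ over all treated units $j$ with $X_j = X_i$.

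\textbf{Treated units.} Here the argument is immediate. Consistency gives $Y_{i1} = Y_{i1}(M_i)$. Assumption \ref{asmpt:ulpt} at $m=0$ gives $Y_{i1}(0) = Y_{i0} + B_0(X_i) = \hat Y_{i1}(0)$. Hence $\hat\psi_i^{\mathrm{DiD}} = Y_{i1}(M_i) - Y_{i1}(0) = \psi_i$. Since $M_i = M_i(1)$ for treated units, by Y-M consistency (Assumption \ref{asmpt:2a}), this is also the coarsened ITE.

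\textbf{Untreated units.} This is the main step. The key algebraic move is to decompose the pooled treated trend by version, writing
\begin{align*}
B_{A=1}(X_i) = \sum_{m>0} B_m(X_i)\,\Pr_N(M = m \mid A = 1, X = X_i).
\end{align*}
This follows from splitting the sum over $\{j: A_j=1, X_j=X_i\}$ into the disjoint sets $\{j: M_j = m, X_j = X_i\}$ for $m>0$, using $A_j = 1 \iff M_j>0$. Each block's contribution is its count times the within-block mean $B_m(X_i)$, divided by the total treated count in the cell. Versions $m$ with empty cells carry zero weight; I would note that $B_m(X_i)$ need only be defined where that weight is positive.

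Next, Assumption \ref{asmpt:ulpt} at each $m>0$ gives $Y_{i0} + B_m(X_i) = Y_{i1}(m)$. Because the weights sum to one, $Y_{i0}$ can be pulled inside the sum, so
\begin{align*}
\hat Y_{i1}(A=1) = \sum_{m>0} Y_{i1}(m)\,\Pr_N(M=m\mid A=1,X=X_i).
\end{align*}
For an untreated unit, consistency and $M_i=0$ give $Y_{i1} = Y_{i1}(0)$; the exclusion restriction identifies $Y_i(A=0)$ with $Y_i(0)$. Subtracting $Y_{i1}(0)$ inside the weighted sum, again using that the weights sum to one, yields the stated expression.

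I would close by remarking why the result is a stochastic intervention rather than $\psi_{i,M_i(1)}$. The right-hand side is $\E_G[Y_{i1}(G)-Y_{i1}(0)]$ with $G \sim \Pr_N(M\mid A=1,X=X_i)$. It coincides with $\psi_{i,M_i(1)}$ only under the degenerate cases listed in Section \ref{ssec:extensionscoarsening}. The only delicate point is bookkeeping around empty $(m,x)$ cells and positivity of the treated cell at $X_i$; otherwise the argument is routine.
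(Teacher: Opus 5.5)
Your proposal is correct and follows essentially the same route as the paper's proof. For treated units you apply Assumption \ref{asmpt:ulpt} at $m=0$ together with consistency; for untreated units you split the pooled treated trend into version-specific trends weighted by $\Pr_N(M=m\mid A=1,X=x)$, apply the assumption version by version, use that the weights sum to one, and subtract $Y_{i1}=Y_{i1}(0)$ (your remark that empty $(m,x)$ cells carry zero weight is a bookkeeping point the paper leaves implicit).
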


The result in Proposition \ref{prop:lumpite} is consistent with the Proposition \ref{prop:did} given in Appendix \ref{appsec:didcoarsening}, though applied to a single unit estimator in a finite sample framework.

    \begin{proof}
     For a treated unit $i$ with $X_i = x$, assumption \ref{asmpt:ulpt} implies that,
    
    \begin{align*}
        Y_{i1}(0) &= Y_{i0}+\left[\sum_{j=1}^N(1-A_j)\I(X_i = x)\right]^\inv\left[\sum_{j=1}^N (1-A_j)\I(X_i = x)(Y_{j1}-Y_{j0})\right]. 
    \end{align*}
    
    \noindent Therefore, $\hat\psi^{\mathrm{DiD}}_{i} = \psi_{i, M(1)} = Y_i(M) - Y_i(0)$, where $Y_i(M) = Y_i$. On the other hand, for an untreated unit $i$ with $X_i = x$, we have that

    \begin{align*}
        \hat Y_{i1}(M_i(1)) &=Y_{i0}(0) +\left[\sum_{j=1}^NA_j\I(X_j = x)\right]^\inv\left[\sum_{j=1}^N A_j\I(X_j = x)(Y_{j1}-Y_{j0})\right].
    \end{align*}

    \noindent Now notice that,
    
    \begin{align*}
    \sum_{j=1}^N A_j\I(X_j = x)(Y_{j1}-Y_{j0}) = \sum_{m>0}\sum_{j=1}^N\I(X_j = x, M_j = m)(Y_{j1}(m)-Y_{j0}(0)),        
    \end{align*}
    
    \noindent so, by definition of $\Pr_N(M = m \mid A = 1, X = x)$, we have

    \begin{align*}
        \hat Y_{i1}(M(1)) &= Y_{i0}(0) \\
        &+ \left[\sum_{m > 0}\sum_{j=1}^N\I(X_j = x, M_j = m)\right]^{-1}\left[\sum_{j=1}^N\I(X_j = x, M_j =m)(Y_{j1}(m)-Y_{j0}(0))\right]\Pr_N(M = m \mid A = 1, X = x) \\
        &= Y_{i0}(0) + \sum_{m > 0}(Y_{i1}(m)-Y_{i0}(0))\Pr_N(M = m \mid A = 1, X = x) \\
        &= \sum_{m > 0}Y_{i1}(m)\Pr_N(M = m \mid A = 1, X = x).
    \end{align*}

    \noindent Applying consistency and subtracting $Y_{i1}$ gives the result.    
    \end{proof}

    Proposition \ref{prop:lumpite} illustrates that for untreated units, under assumption \ref{asmpt:ulpt}, $\hat\psi_i^{\mathrm{DiD}}$ corresponds to a stochastic intervention, where we draw $M$ randomly from the conditional distribution of treated states, and where $Y_{i1}(0) = Y_{i1}$. Unless $Y_{i1}(M_i(1))$ is somehow exactly equal to this weighted average of $Y_{i1}(m)$, this estimate will never return $Y_{i1}(M_i(1))$. Of course, we never believe that these assumptions will hold exactly, which is why we apply the proposed sensitivity framework regardless. Nevertheless, Proposition \ref{prop:lumpite} illustrates why this estimator may return a heuristically ``better'' starting estimate for treated versus untreated units.
\newpage

\subsection{ITE bounds: the unit-level difference-in-differences estimator}\label{appsec:bounds}

We now study the properties of the procedure proposed in Section \ref{sec:bounding} to bound the ITE under a binary intervention $A$, assuming we have data from time periods $-m,\cdots,1$, where treatment occurs when $t = 1$. We first posit the following model for the trends in the potential outcomes and the treatment effects, 

\begin{align}
\label{eqn:pt0}Y_{it}(0) - Y_{it-1}(0) &= \delta_t + \epsilon_{it,0},\\
\label{eqn:pt1}Y_{it}(1) - Y_{it}(0) &= \I(T = t)\psi_i= \I(t = T)(\psi + \epsilon_{it,1}),
\end{align}    

\noindent where we make the following assumptions throughout:

\begin{enumerate}
    \item $Y_{it} = \I(T = 1)A_i Y_{it}(1) + (1-A_i) Y_{it}(0)$
    \item $N^{-1}\sum_{i=1}^N\epsilon_{it,a}\stackrel{a.s.}\to 0, \qquad a = 0, 1$\label{cond:3}
    \item $N^{-1}\sum_{i=1}^N A_i \stackrel{a.s.}\to p, \quad 1 > p > 0$
    \item $N \ge \sum_{i=1}^N A_i > 0$
\end{enumerate}

\noindent The first assumption is equivalent to the consistency assumptions discussed previously, and also precludes anticipatory effects so that $Y_{it}=Y_{it}(0)$ for all units when $t < 1$. The second assumption states that the empirical mean of the error terms converge almost surely to zero, a kind of normalization that ensures that the average of these terms is centered at zero in the limit. This holds, for example, by the SLLN when these error terms are i.i.d., but also under weaker assumptions.\footnote{We remain agnostic as to exactly which conditions hold that guarantee this convergence.} Finally, we assume that the empirical mean of the treatment indicators converges to some constant between zero and one, effectively meaning that there is some positive probability of treatment assignment, and that there is at least one treated and control unit for any given $N$ we consider (implicitly meaning that $N \ge 2$), in order to ensure that the quantities we consider below are well defined.

\begin{remark}\label{rmk:0}
In a repeated sampling framework, assuming that $\E[\epsilon_{it,0} \mid A_i] = 0$ gives the standard parallel-trends assumption that identifies the average treatment effect on the treated. Meanwhile, $\E[\epsilon_{it,1} \mid A_i] = 0$ implies a parallel-trends assumption that further identifies both the average treatment effect on the controls and the average treatment effect. 
\end{remark}

While we will assume distributional properties of the error terms $\epsilon_{it,a}$, to define our causal effect, $\psi_i$, we emphasize that we are conditioning on these errors. That is, nothing is random in our setting. However,  we may at times invoke a repeated sampling framework either to elucidate concepts (as in Remark \ref{rmk:0}) or to derive worst-case bounds that hold with probability close to one.

Finally, we define some additional notation. For any variable $O_i$, we let $\bar O_a = \left[\sum_{i=1}^N \I(A_i = a)\right]^\inv \sum_{i=1}^N \I(A_i = a) O_i$. We also define the following term,

\begin{align*}
\tilde\epsilon_{it,a} = \epsilon_{it,a}-\left[\sum_{j=1}^N\I[(1-A_j) = A_i]\right]^\inv\left[\sum_{j=1}^N\I[(1-A_j) = A_i]\epsilon_{jt,a}\right].
\end{align*}

Last, we define the set 

\[
\tilde\tau_i:= \{\tau_i^{1-A_i}: \psi_i \in [\hat\psi_i^{\mathrm{DiD}} - \tau_i^{1-A_i}, \hat\psi_i^{\mathrm{DiD}} + \tau_i^{1-A_i}]\}.
\]

\noindent In other words, $\tilde\tau_i$ is the set of values of $\tau_i^{1-A_i}$ where the true ITE $\psi_i$ is captured by the bounds given by $\hat\psi_i^{\mathrm{DiD}}$ and a ``valid value'' or choice of $\tau_i^{1-A_i}$.

We now consider how different assumptions about the error terms $\tilde\epsilon_{it,a}$ may inform valid choices of $\tau_i^{1-A_i}$ that allow us to bound $\psi_i$ using $\hat\psi_i^{\mathrm{DiD}}$ , defined in \eqref{eqn:did}, with (near) certainty. We first consider the case where these errors are bounded, and then consider more general error terms, though with alternative distributional assumptions. In the second case, we also discuss at length how pre-treatment data can inform valid choices of $\tau_i^{1-A_i}$.

\subsubsection{Bounded errors}\label{appssec:twotime}

Proposition \ref{prop:1a} considers the case where $\epsilon_{it,a}$ are bounded to derive valid values of $\tau_i^{1-A_i}$. 

\begin{proposition}\label{prop:1a}
    Assume that $\lvert\epsilon_{it,a}\rvert \le \zeta$ . For $i \in \mathcal T$, under equation \eqref{eqn:pt0}, $q_0 \in \tilde\tau_i$ for $q_0 \ge 2\zeta$. For $i \in \mathcal C$, under equations \eqref{eqn:pt0} and \eqref{eqn:pt1}, $q_1 \in \tilde\tau_i$ for $q_1 \ge 4\zeta$.
\end{proposition}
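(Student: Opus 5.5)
The plan is to write the estimation error $\hat\psi_i^{\mathrm{DiD}}-\psi_i$ explicitly in terms of the error terms of the model \eqref{eqn:pt0}--\eqref{eqn:pt1}. Then $\lvert\hat\psi_i^{\mathrm{DiD}}-\psi_i\rvert$ is bounded by the triangle inequality, using $\lvert\epsilon_{it,a}\rvert\le\zeta$. Any $q$ at least as large as this bound lies in $\tilde\tau_i$, since the interval $[\hat\psi_i^{\mathrm{DiD}}-q,\hat\psi_i^{\mathrm{DiD}}+q]$ then contains $\psi_i$. Throughout, $t=T=1$ is the only post-treatment period, and $\hat\psi_i^{\mathrm{DiD}}$ is the two-period unit-level estimator \eqref{eqn:did} with $A$ in place of $M$.

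\textbf{Treated units ($i\in\mathcal T$).} By consistency, $Y_{i1}=Y_{i1}(1)$ and $Y_{i0}=Y_{i0}(0)$, and the imputed counterfactual is $\hat Y_{i1}(0)=Y_{i0}+\bar{(Y_{1}-Y_{0})}_0$. Each control unit $j$ has $Y_{j1}-Y_{j0}=\delta_1+\epsilon_{j1,0}$ by \eqref{eqn:pt0}, so the control mean is $\delta_1+\bar\epsilon_{1,0,\mathcal C}$. Since also $Y_{i1}(0)-Y_{i0}=\delta_1+\epsilon_{i1,0}$, we get
\[
\hat\psi_i^{\mathrm{DiD}}-\psi_i = Y_{i1}(0)-\hat Y_{i1}(0)=\epsilon_{i1,0}-\bar\epsilon_{1,0,\mathcal C}=\tilde\epsilon_{i1,0}.
\]
Each term is at most $\zeta$ in absolute value, and an average of such terms is too, so $\lvert\tilde\epsilon_{i1,0}\rvert\le 2\zeta$. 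Therefore $q_0\ge 2\zeta$ lies in $\tilde\tau_i$. Only \eqref{eqn:pt0} is used here, because $\psi_i$ cancels.

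\textbf{Control units ($i\in\mathcal C$).} Now the observed outcome is $Y_{i1}=Y_{i1}(0)$ and the imputed counterfactual is $\hat Y_{i1}(1)=Y_{i0}+\bar{(Y_1-Y_0)}_1$. For a treated unit $j$, \eqref{eqn:pt0} and \eqref{eqn:pt1} give
\[
Y_{j1}-Y_{j0}=\delta_1+\epsilon_{j1,0}+\psi+\epsilon_{j1,1}.
\]
The true effect is $\psi_i=\psi+\epsilon_{i1,1}$, and $Y_{i1}(0)=Y_{i0}+\delta_1+\epsilon_{i1,0}$. Substituting, the $\delta_1$ and $\psi$ terms cancel, leaving
\[
\hat\psi_i^{\mathrm{DiD}}-\psi_i=-(\epsilon_{i1,0}-\bar\epsilon_{1,0,\mathcal T})-(\epsilon_{i1,1}-\bar\epsilon_{1,1,\mathcal T})=-\tilde\epsilon_{i1,0}-\tilde\epsilon_{i1,1}.
\]
Each $\tilde\epsilon$ is bounded by $2\zeta$ as before, so the total error is at most $4\zeta$, and $q_1\ge 4\zeta$ lies in $\tilde\tau_i$.

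\textbf{Main obstacle.} The only delicate step is the sign bookkeeping in the control case. One must correctly identify which potential outcome is imputed and confirm that $\psi$ and $\delta_1$ cancel exactly. The cancellation of $\psi$ relies on the treatment effect in \eqref{eqn:pt1} taking the common-plus-idiosyncratic form $\psi+\epsilon_{it,1}$. The appearance of the extra $\tilde\epsilon_{i1,1}$ term is what doubles the constant from $2\zeta$ to $4\zeta$. No limiting assumptions are needed; the argument is purely deterministic given the realized errors.
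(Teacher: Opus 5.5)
Your proof is correct and follows the paper's argument essentially step for step. You write $\hat\psi_i^{\mathrm{DiD}}-\psi_i$ as $\tilde\epsilon_{i1,0}$ for treated units and as $-(\tilde\epsilon_{i1,0}+\tilde\epsilon_{i1,1})$ for control units, each centered at the opposite group's mean, and then bound each centered term by $2\zeta$ via the triangle inequality to get the $2\zeta$ and $4\zeta$ thresholds.
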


\begin{corollary}\label{corr:1a}
    If in addition to \eqref{eqn:pt0}, we have 
    
    \begin{align}    
    \lim_{N\to\infty}\left[\sum_{i=1}^NA_i\right]^{-1}\left[\sum_{i=1}^NA_i\epsilon_{it,0}\right] \stackrel{a.s.}\to 0\label{eqn:lim0}        
    \end{align}

    \noindent then $q_0 \in \tilde\tau_i$ for $q_0 \ge \zeta + o(1)$ for $i \in \mathcal T$. Similarly, if in addition to \eqref{eqn:pt0} we have both \eqref{eqn:lim0} and
    
    \begin{align}
\lim_{N\to\infty}\left[\sum_{i=1}^NA_i\right]^{-1}\left[\sum_{i=1}^NA_i\epsilon_{it,1}\right] \stackrel{a.s.}\to 0\label{eqn:lim1}   
    \end{align}    
    we have $q_1 \in \tilde\tau_i$ when $q_1 \ge 2\zeta + o(1)$ for $i \in \mathcal C$.
\end{corollary}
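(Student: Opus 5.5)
The plan is to compute the estimation error $\hat\psi_i^{\mathrm{DiD}} - \psi_i$ explicitly under the model \eqref{eqn:pt0}--\eqref{eqn:pt1}. Once that error is written in terms of the $\epsilon$'s, the limiting conditions \eqref{eqn:lim0} and \eqref{eqn:lim1} should kill the group-mean components. That leaves only the unit's own idiosyncratic error, bounded by $\zeta$ in absolute value.

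\textbf{Treated unit.} Take $i\in\mathcal T$ and the final period $t=1$. The counterfactual is $\hat Y_{i1}(0)=Y_{i0}+\bar B_0$, where $\bar B_0=\delta_1+\bar\epsilon_{1,0}^{(0)}$ averages over controls. The truth is $Y_{i1}(0)=Y_{i0}+\delta_1+\epsilon_{i1,0}$. Hence
\[
\hat\psi_i^{\mathrm{DiD}}-\psi_i = Y_{i1}(0)-\hat Y_{i1}(0)=\tilde\epsilon_{i1,0}=\epsilon_{i1,0}-\bar\epsilon^{(0)}_{1,0},
\]
which is the identity already implicit in Proposition \ref{prop:1a}. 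The first term is at most $\zeta$ in absolute value. For the second term, I use that $\sum_i A_i$ and $\sum_i(1-A_i)$ both diverge, since $N^{-1}\sum A_i\to p\in(0,1)$. I then write the control mean as
\[
\bar\epsilon^{(0)}_{1,0}=\frac{\sum_i\epsilon_{i1,0}-\sum_iA_i\epsilon_{i1,0}}{\sum_i(1-A_i)}.
\]
Multiplying by $N$ in the numerator and denominator, condition \ref{cond:3} makes $N^{-1}\sum_i\epsilon_{i1,0}\to0$. Condition \eqref{eqn:lim0} together with $N^{-1}\sum A_i\to p$ makes $N^{-1}\sum_iA_i\epsilon_{i1,0}\to0$, and the denominator tends to $1-p>0$. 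So $\bar\epsilon^{(0)}_{1,0}=o(1)$ almost surely, giving $|\hat\psi_i^{\mathrm{DiD}}-\psi_i|\le\zeta+o(1)$. Therefore any $q_0\ge\zeta+o(1)$ lies in $\tilde\tau_i$.

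\textbf{Control unit.} Take $i\in\mathcal C$. Here $\hat Y_{i1}(1)=Y_{i0}+\bar B_1$, with $\bar B_1=\delta_1+\psi+\bar\epsilon^{(1)}_{1,0}+\bar\epsilon^{(1)}_{1,1}$ averaging over treated units. The truth is $Y_{i1}(1)=Y_{i0}+\delta_1+\epsilon_{i1,0}+\psi+\epsilon_{i1,1}$. So the error is
\[
\hat Y_{i1}(1)-Y_{i1}(1)=-(\epsilon_{i1,0}+\epsilon_{i1,1})+\bar\epsilon^{(1)}_{1,0}+\bar\epsilon^{(1)}_{1,1}.
\]
The treated-group means are $o(1)$ directly by \eqref{eqn:lim0} and \eqref{eqn:lim1}. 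The unit's own terms contribute at most $2\zeta$. This gives $q_1\ge2\zeta+o(1)$.

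The only delicate step is the treated case. There the mean is taken over the \emph{control} group, while \eqref{eqn:lim0} is stated for the treated group, so the control-mean limit has to be recovered via the complement decomposition above. This is also where $p<1$ is needed, and I would make the "$o(1)$" precise as an almost-sure statement along $N\to\infty$.
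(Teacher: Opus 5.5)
Your proposal is correct and follows essentially the same route as the paper. You show $\hat\psi_i^{\mathrm{DiD}}-\psi_i=\tilde\epsilon_{i1,0}$ for treated units and $\hat\psi_i^{\mathrm{DiD}}-\psi_i=-(\tilde\epsilon_{i1,0}+\tilde\epsilon_{i1,1})$ for controls, then remove the group-mean terms with \eqref{eqn:lim0}--\eqref{eqn:lim1}, leaving at most $\zeta$ or $2\zeta$ from the unit's own errors. Your complement decomposition of the control-group mean (condition~\ref{cond:3}, \eqref{eqn:lim0}, and $p<1$) just writes out the step the paper states in one clause, so nothing is missing.
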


\begin{remark}
    Equation \ref{eqn:lim0} from Corollary \ref{corr:1a} would be implied by parallel-trends and i.i.d. error terms $\epsilon_{it,0}$, for example, though would be violated if there were covariates associated with both treatment assignment and the outcome trends absent treatment. By contrast, equation \eqref{eqn:lim1} is arguably much stronger, as it effectively precludes any covariates that drive effect heterogeneity from being correlated with treatment assignment.
\end{remark}

\begin{remark}\label{rmk:1}
    We have assumed that the errors are bounded; however, when the errors are bounded because the outcome itself is bounded, we will have both that $\lvert\epsilon_{it,a}\rvert \le \zeta$ and $\lvert\epsilon_{it,1}+\epsilon_{it,0}\rvert\le \zeta$. In this case, the worst-case bounds for the error on $\hat\psi_i^{\mathrm{DiD}}$ for the treated and untreated units are equal and given by the stated results for the treated units.
\end{remark}

Intuitively, Proposition \ref{prop:1a} shows how we can use $\hat\psi_i^{\mathrm{DiD}}$ and a bounded error assumption to derive values of $\tau_i^{1-A_i}$ that are guaranteed to bound $\psi_i$. Moreover, Proposition \ref{prop:1a} shows that the worst-case bounds on $\psi_i$ are twice as large for untreated units as they are for the treated units, unless, as in Remark \ref{rmk:1}, the error bounds are implied by the outcome bounds. We will continue to see this pattern -- that the bounds on the ITE for untreated units are larger than for treated units -- under alternative assumptions, reflecting the additional uncertainty required to predict $Y_{it}(1)$ relative to $Y_{it}(0)$.

\subsubsection{Other errors}\label{appssec:bounding}

We now consider the case where the error terms are not necessarily bounded. However, in contrast to Proposition \ref{prop:1a}, which only required a boundedness assumption on $\epsilon_{it,a}$, and no notions of randomness, we approach this section instead by making stronger probability statements and distributional assumptions with respect to $\epsilon_{it,a}$, explicitly treating these errors as random and expressing the bounds in probability form. However, we may always use the results below to obtain bounds that hold with probability arbitrarily close to one, so that, conditional on the errors, these bounds are (almost) guaranteed to contain $\psi_i$. We can therefore think of these as derivations that imply worst-case bounds that hold for any particular realizations of $\epsilon_{it,a}$, though for any particular realization of these errors we can theoretically improve upon these bounds. 

We also consider at length how observed pre-treatment data from $t = -m, \dots, 0$ may be used to inform valid values of $\tau_i^{1-A_i}$ in this setting (though the results can also be applied when the errors are bounded). For $t < 1$, we denote the (observable) pre-treatment errors $\tilde\epsilon_{it,0} =Y_{it} - \hat{Y}_{it}$, where $\hat{Y}_{it} = Y_{it-1} + \left[\sum_{j=1}^N\I(1-A_j = A_i)\right]^\inv\left[\sum_{j=1}^N\I(1-A_j = A_i)(Y_{jt}-Y_{jt-1})\right]$, and let $\boldsymbol{\tilde\epsilon}_i = (\tilde\epsilon_{i-m+1,0},\dots,\tilde\epsilon_{i0,0})^\top$. 

We first state Lemma \ref{lemma1}, which allows us to construct confidence intervals at level $1 - \eta$ using a known bound on the MSE of an ITE estimator. We then discuss how this MSE bound may be informed using $\boldsymbol{\tilde\epsilon}_i$, thereby implying valid values of $\tau_i^{1-A_i}$.  

\begin{lemma}\label{lemma1}
    Define $\hat\psi_i = \psi_i + \epsilon_i$, where $\psi_i$ is some fixed value, $\E[\epsilon_i] = \mu_i$ and $\V(\epsilon_i) = \sigma_i^2 >0$. Assume that we know some constant $C$ satisfying $\mu_i^2 + \sigma_i^2 \le C$. Additionally, assume that for any $\eta > 0$, we know some value of $C_\eta$ such that,

    \begin{align}
        \Pr\left(\frac{\lvert \epsilon_i-\mu_i\rvert}{\sigma_i} > C_{\eta}\right)\le \eta.\label{eqn:prob}
    \end{align}

    \noindent Letting $C_\eta'=\sqrt{C_\eta^2 + 1}$, we have that,

    \begin{align*}
        \Pr\left(\frac{\lvert \epsilon_i\rvert}{\sqrt{2C}}> C_\eta'\right) \le \Pr\left(\frac{\lvert \epsilon_i-\mu_i\rvert}{\sigma_i} > C_{\eta}\right)\le\eta.
    \end{align*}
\end{lemma}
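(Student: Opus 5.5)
The plan is to prove the set inclusion
\[
\{\lvert\epsilon_i\rvert > \sqrt{2C}\,C_\eta'\} \subseteq \{\lvert\epsilon_i-\mu_i\rvert > \sigma_i C_\eta\},
\]
and then let monotonicity of probability together with \eqref{eqn:prob} give both inequalities. Equivalently, I would prove the contrapositive: on the event $\lvert\epsilon_i-\mu_i\rvert \le \sigma_i C_\eta$, we have $\lvert\epsilon_i\rvert \le \sqrt{2C}\,C_\eta'$.

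The main step is an elementary bound on $\lvert\epsilon_i\rvert$ on that event. By the triangle inequality,
\[
\lvert\epsilon_i\rvert \le \lvert\mu_i\rvert + \lvert\epsilon_i-\mu_i\rvert \le \lvert\mu_i\rvert + \sigma_i C_\eta .
\]
Applying Cauchy--Schwarz to the vectors $(\lvert\mu_i\rvert,\sigma_i)$ and $(1,C_\eta)$ gives
\[
\lvert\mu_i\rvert + \sigma_i C_\eta \le \sqrt{\mu_i^2+\sigma_i^2}\,\sqrt{1+C_\eta^2} \le \sqrt{C}\,C_\eta'.
\]
Since $\sqrt{C} \le \sqrt{2C}$, this yields $\lvert\epsilon_i\rvert \le \sqrt{2C}\,C_\eta'$, which is what we need. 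An alternative route is $(a+b)^2 \le 2(a^2+b^2)$ with $a=\lvert\mu_i\rvert$ and $b=\sigma_i C_\eta$. That gives $\lvert\epsilon_i\rvert^2 \le 2(\mu_i^2+\sigma_i^2C_\eta^2) \le 2C\max(1,C_\eta^2)$, and $\max(1,C_\eta^2) \le 1+C_\eta^2 = C_\eta'^2$. This route explains the factor $2$ in the statement, so I would present it, perhaps noting the sharper Cauchy--Schwarz constant in passing.

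Taking complements of the inclusion then gives the first inequality, $\Pr(\lvert\epsilon_i\rvert/\sqrt{2C} > C_\eta') \le \Pr(\lvert\epsilon_i-\mu_i\rvert/\sigma_i > C_\eta)$. The second inequality is exactly the hypothesis \eqref{eqn:prob}. The assumption $\sigma_i>0$ is needed only so that the standardized quantity is well defined.

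No step here is a genuine obstacle. The only points needing care are the strict versus non-strict inequalities at the event boundary, which the complement argument handles, and the fact that $C$ bounds $\mu_i^2+\sigma_i^2$ rather than each term separately. That is why the second moment enters through a sum-of-squares bound.
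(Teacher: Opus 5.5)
Your proof is correct and follows essentially the same route as the paper: both establish a deterministic inclusion of events from an elementary inequality (the paper applies $(a+b)^2\le 2(a^2+b^2)$ with $a=\epsilon_i-\mu_i$, $b=\mu_i$ to get $\epsilon_i^2/\{2(\mu_i^2+\sigma_i^2)\}\le (\epsilon_i-\mu_i)^2/\sigma_i^2+1$ pointwise), then conclude by monotonicity of probability and \eqref{eqn:prob}. Your Cauchy--Schwarz variant is a worthwhile addition, since it shows the inclusion already holds with $\sqrt{C}\,C_\eta'$ in place of $\sqrt{2C}\,C_\eta'$, so the factor $2$ in the statement is not needed.
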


\begin{remark}
We may directly apply Lemma \ref{lemma1} to bound either $\tilde\epsilon_{i1,0}$ for treated units, or $\tilde\epsilon_{i1,0}+\tilde\epsilon_{i1,1}$ for untreated units -- that is, to determine a valid value of $\tau_i^{1-A_i}$, in the case where $C$ is some known bound on the MSE of $\hat\psi_i^{\mathrm{DiD}}$. Specifically, we would choose some $\eta$ arbitrarily close to zero, and let $C_0$ be the value of $C_\eta$ that satisfies equation \eqref{eqn:prob}.  We may then set $\tau_i^{1-A_i} = \sqrt{2C(1+C_0)^2}$, where we know that $\tau_i^{1-A_i} \in \tilde\tau_i$ with probability greater than $1-\eta$.
\end{remark}

\begin{remark}\label{rmk:2}
In practice we are unlikely to know some value $C$ that bounds the MSE of $\hat\psi_i^{\mathrm{DiD}}$, so we cannot directly apply Lemma \ref{lemma1}. However, for untreated units, we may be able to inform $C$ using $\boldsymbol{\tilde\epsilon}_i$. For example, $\|\boldsymbol{\tilde\epsilon}_i\|_2$ may be viewed as an estimate of $\sqrt{\mu_i^2 + \sigma_i^2}$, under the implicit assumption that the distribution of the errors $\tilde\epsilon_{it,0}$ is stationary over time. 

In practice, however, we may have few pre-treatment time periods to estimate this quantity well, justifying the use of other norms, such as the $L_\infty$ norm, to claim $\omega\|\boldsymbol{\tilde\epsilon}_i\|^2 > \mu_i^2+\sigma_i^2$ for some $\omega > 0$.  We may then directly apply Lemma \ref{lemma1}, letting $C = \omega\|\boldsymbol{\tilde\epsilon}_i\|^2$ . Applied to our proposed sensitivity framework, this sets $Z_0 = \sqrt{2\omega(1+C_0)^2}$, satisfying $Z_0\|{\tilde\epsilon}_i\| \ge \lvert\epsilon_{i1,0}\rvert$, which holds with probability arbitrarily close to one for any realization of $\epsilon_{i1,0}$. 
\end{remark}

Remark \ref{rmk:2} shows how we can apply Lemma \ref{lemma1} in combination with pre-treatment data to determine valid values of $\tau_i^0$, and therefore bounds on the ITE, for treated units. We next consider bounding the ITE for untreated units using $\boldsymbol{\tilde\epsilon}_i$. To do this, we relate the distributions of $\tilde\epsilon_{i1,1}$ and $\tilde\epsilon_{i1,0}$ in terms of how the means and the variance scale, thereby giving an expression for $Z_1$ in terms of a known $Z_0$ in the case where these errors are multivariate normal, given by Proposition \ref{prop:shift}. 

\begin{proposition}\label{prop:shift}
    Assume 

    \begin{align*}
    (\tilde\epsilon_{i1,0},\tilde\epsilon_{i1,1})\mid A_i =0 \sim \mathcal{N}\left((\mu_{i1,0},\mu_{i1,1}),\begin{pmatrix}
        \sigma_{i1,0}^2 & \sigma_{i1,10} \\ \sigma_{i1,10} & \sigma_{i1,1}^2 \end{pmatrix}\right),        
    \end{align*}    
    
    \noindent and consider the case where $\mu_{i1,1} = \alpha\mu_{i1,0}$ and $\sigma_{i1,1}^2 = \gamma\sigma_{i1,0}^2$ for some $\alpha \in \mathbb R, \gamma\ge 0$. Assume that we know the following for some $\omega > 0$:

    \begin{align}
        \sqrt{\mu_{i1,0}^2 + \sigma_{i1,0}^2} \le \omega\|\boldsymbol{\tilde\epsilon}_i\|
    \end{align}

    \noindent for some norm $\|.\|$. Then,

    \begin{align}
        \sqrt{(\mu_{i1,0} + \mu_{i1,1})^2 + (\sigma_{i1,0}^2 + \sigma_{i1,1}^2 + 2\sigma_{i1,10})} \le C_{\alpha\gamma}\omega\|\boldsymbol{\tilde\epsilon}_i\|,
    \end{align}

    \noindent where $C_{\alpha\gamma} = \sqrt{\max\{(1+\alpha)^2, (1 + \sqrt{\gamma})^2\}}$.
\end{proposition}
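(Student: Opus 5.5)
The plan is a direct second-moment computation; the normality assumption will not be needed beyond the existence of the stated means and covariance matrix. The left-hand side of the target inequality is the square root of the mean squared value of the sum $\tilde\epsilon_{i1,0}+\tilde\epsilon_{i1,1}$ given $A_i = 0$, since
\begin{align*}
\E\bigl[(\tilde\epsilon_{i1,0}+\tilde\epsilon_{i1,1})^2 \mid A_i = 0\bigr] = (\mu_{i1,0}+\mu_{i1,1})^2 + \sigma_{i1,0}^2+\sigma_{i1,1}^2+2\sigma_{i1,10}.
\end{align*}
I will bound the squared-mean term and the variance term separately by multiples of $\mu_{i1,0}^2$ and $\sigma_{i1,0}^2$. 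I will then combine them using the assumed bound $\mu_{i1,0}^2+\sigma_{i1,0}^2\le \omega^2\|\boldsymbol{\tilde\epsilon}_i\|^2$.

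\textbf{Step 1: the mean term.} Substituting $\mu_{i1,1}=\alpha\mu_{i1,0}$ gives $(\mu_{i1,0}+\mu_{i1,1})^2=(1+\alpha)^2\mu_{i1,0}^2$.

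\textbf{Step 2: the variance term.} This is the only step requiring care, because the covariance $\sigma_{i1,10}$ is unrestricted by the hypotheses. It must be controlled through the validity of the covariance matrix. Positive semi-definiteness (equivalently Cauchy--Schwarz) gives $|\sigma_{i1,10}|\le \sigma_{i1,0}\sigma_{i1,1}$. Using $\sigma_{i1,1}=\sqrt{\gamma}\,\sigma_{i1,0}$ with $\gamma\ge 0$, we obtain
\begin{align*}
\sigma_{i1,0}^2+\sigma_{i1,1}^2+2\sigma_{i1,10} \le (\sigma_{i1,0}+\sigma_{i1,1})^2 = (1+\sqrt{\gamma})^2\sigma_{i1,0}^2 .
\end{align*}
This worst case corresponds to perfectly positively correlated errors. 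It is what produces the $(1+\sqrt\gamma)^2$ term, rather than the $1+\gamma$ one would get under independence.

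\textbf{Step 3: combining.} Adding the two bounds and letting $C_{\alpha\gamma}^2=\max\{(1+\alpha)^2,(1+\sqrt\gamma)^2\}$,
\begin{align*}
(1+\alpha)^2\mu_{i1,0}^2+(1+\sqrt\gamma)^2\sigma_{i1,0}^2 \le C_{\alpha\gamma}^2\,(\mu_{i1,0}^2+\sigma_{i1,0}^2)\le C_{\alpha\gamma}^2\,\omega^2\|\boldsymbol{\tilde\epsilon}_i\|^2 .
\end{align*}
Taking square roots, with all quantities nonnegative, yields the claim.

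I would close with a remark connecting this to Lemma \ref{lemma1}, which can then be applied with $C=C_{\alpha\gamma}^2\omega^2\|\boldsymbol{\tilde\epsilon}_i\|^2$. This gives $Z_1 = C_{\alpha\gamma} Z_0$. Since $C_{\alpha\gamma}\ge 1$ whenever $\alpha\ge 0$, this is consistent with the recommendation $Z_1 > Z_0$.
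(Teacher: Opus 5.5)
Your proof is correct and follows essentially the same route as the paper: bound the covariance by $\sigma_{i1,0}\sigma_{i1,1}$ via Cauchy--Schwarz, substitute the mean and variance scalings to obtain $(1+\alpha)^2\mu_{i1,0}^2+(1+\sqrt{\gamma})^2\sigma_{i1,0}^2$, and dominate this by $C_{\alpha\gamma}^2(\mu_{i1,0}^2+\sigma_{i1,0}^2)$. Your version correctly keeps the factor $2$ on $\sigma_{i1,10}$, which the paper's displayed inequality drops, and you are right that normality plays no role here. One small correction to your closing remark: $C_{\alpha\gamma}\ge 1+\sqrt{\gamma}\ge 1$ holds for every $\alpha$, not only for $\alpha\ge 0$.
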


\begin{remark}
Proposition \ref{prop:shift}, together with Lemma \ref{lemma1}, implies that we can choose $Z_1 = C_{\alpha\gamma}Z_0$, with $Z_0$ defined as before, to obtain bounds on the ITE for untreated units that hold with probability arbitrarily close to one under the assumed mean and variance scaling. Importantly, we use the assumed normality of these error terms (and therefore their sum) to know that the $C_\eta$ that holds for equation \eqref{eqn:prob} is valid for both $\tilde\epsilon_{i1,0}$ and $\tilde\epsilon_{i1,0} + \tilde\epsilon_{i1,1}$ under $C = Z_0\|\boldsymbol{\tilde{\epsilon}_i}\|$, and $C = C_{\alpha\gamma}Z_0\|\boldsymbol{\tilde{\epsilon}_i}\|$, respectively from Lemma \ref{lemma1}. These results therefore justify using $\hat\psi_i^{\mathrm{DiD}}$ combined with the pre-period errors $\boldsymbol{\tilde\epsilon}_i$ to bound the ITE for untreated as well as treated units. 
\end{remark}

\begin{remark}
We briefly remark on two special cases. First, consider the case when the distributions of $\tilde\epsilon_{i1,0}$ and $\tilde\epsilon_{i1,1}$ are equal (i.e. $\alpha = 1$ and $\gamma = 1$). Proposition \ref{prop:shift} implies that in this case $C_{\alpha\gamma} = 2$, yielding $Z_1 = 2Z_0$, similar to the result we obtain when we only assume that these errors are bounded by the same constant $\zeta$ in Proposition \ref{prop:1a}. 

Second, consider the case where there is no effect heterogeneity, $\alpha = \gamma = 0$. This yields $C_{\alpha\gamma} = 1$, and implies that $Z_1 = Z_0$. This makes intuitive sense, since if $\epsilon_{i1,1} = 0$, then $\lvert\epsilon_{it,1}+\epsilon_{it,0}\rvert = \lvert\epsilon_{it,0}\rvert$ -- that is, there is no additional uncertainty introduced relative to the bound on $\tilde\epsilon_{i1,0}$. 
\end{remark}

Lemma \ref{lemma1} and Proposition \ref{prop:shift} again illustrate the greater uncertainty required to choose valid values of $\tau_i^1$ for untreated units relative to $\tau_i^0$ for treated units, as also seen in Proposition \ref{prop:1a}. In short: we will generally want to choose $Z_1 > Z_0$ when deriving bounds on the ITE for untreated versus treated units.

\subsubsection{Proofs}

\begin{proof}[Proof of Proposition \ref{prop:1a} and Corollary \ref{corr:1a}]
First, consider the case where $A_i = 1$. Then $\hat\psi^{\mathrm{DiD}}_i = Y_{i1} - \hat Y_{i1}(0) = \psi_i +Y_{i1}(0)-\hat{Y}_{i1}(0)$, where $Y_{i1}(1)= Y_{i1}$ by consistency. Moreover,

\begin{align*}
\hat Y_{i1}(0) &= Y_{i0}(0) + \left[\sum_{i =1}^N(1 - A_i)\right]^{-1} \left[\sum_{i =1}^N (1-A_i)(Y_{i1}(0) - Y_{i0}(0)) \right] \\
&= Y_{i1}(0) - \delta_t -\epsilon_{i1,0} + \left[\sum_{i =1}^N(1 -A_i)\right]^{-1}\sum_{i =1}^N(1-A_i)(\delta_t + \epsilon_{i1,0}) \\
&= Y_{i1}(0) - \underbrace{\left(\epsilon_{i1,0} - \left[\sum_{i = 1}^N(1 - A_i)\right]^{-1}\sum_{i = 1}^N(1-A_i)\epsilon_{i1,0}\right)}_{\tilde\epsilon_{i1,0}}
\end{align*}

\noindent so that $Y_{i1}(0) - \hat Y_{i1}(0) = \tilde\epsilon_{i1,0}$, and therefore $\hat\psi^{\mathrm{DiD}}_i = \psi_i + \tilde\epsilon_{i1,0}.$ In this model, $\tau_i^{\star, 0} = \lvert\tilde\epsilon_{i1,0}\rvert \le 2\zeta$ by assumption; moreover, under \eqref{eqn:lim0}, $\lvert\tilde\epsilon_{i1,0}\rvert \le \zeta + o(1)$, noting that condition \ref{cond:3} together with \eqref{eqn:lim0} implies convergence of this final term among the untreated units.

Now consider the case where $A_i = 0$. Then $\hat\psi_i^{\mathrm{DiD}} = \hat Y_{i1}(1)-Y_{i1} = \psi_i + \hat Y_{i1}(1)-Y_{i1}(1)$, where $Y_{i1}(0) = Y_{i1}$ by consistency. Then by \eqref{eqn:pt0} and \eqref{eqn:pt1} we have that,

\begin{align*}
    Y_{i1}(1)-Y_{i0}(0) &= \delta_1 + \psi + \epsilon_{i1,1} + \epsilon_{i1,0}.
\end{align*}

\noindent and therefore,

\begin{align*}
    \hat Y_{i1}(1) &= Y_{i0}(0) + \left[\sum_{i=1}^N A_i\right]^{-1} \left[\sum_{i=1}^N (Y_{i1}(1) - Y_{i0}(0))A_i \right] \\
    &= Y_{i1}(0) - \delta_1 - \epsilon_{i1,0} + \left[\sum_{i=1}^N A_i\right]^{-1}\left(\sum_{i=1}^N(\psi + \delta_1 + \epsilon_{i1,1} + \epsilon_{i1,0})A_i\right) \\
    &= Y_{i1}(1) - \underbrace{\left(\epsilon_{i1,1} + \epsilon_{i1,0} - \left[\sum_{i=1}^N A_i\right]^{-1}\sum_{i=1}^N (\epsilon_{i1,1} + \epsilon_{i1,0})A_i\right)}_{\tilde\epsilon_{i1,1}+\tilde\epsilon_{i1,0}}
\end{align*}

\noindent so that $\hat\psi_i^{\mathrm{DiD}} = \psi_i - (\tilde\epsilon_{i1,1}+\tilde\epsilon_{i1,0})$. We thus conclude that $\tau_i^{\star, 1} = \lvert\tilde\epsilon_{i1,1}+\tilde\epsilon_{i1,0}\rvert \le 4\zeta$, and if both equations \eqref{eqn:lim0} and \eqref{eqn:lim1} hold, we have that $\lvert\tilde\epsilon_{i1,1}+\tilde\epsilon_{i1,0}\rvert \le 2\zeta + o(1)$.
\end{proof}

\begin{proof}[Proof of Lemma \ref{lemma1}]
    First, notice that

    \begin{align*}
        \Pr\left(\frac{\lvert \epsilon_i\rvert}{\sqrt{2C}}> C_\eta'\right)\le\Pr\left(\frac{\lvert \epsilon_i\rvert}{\sqrt{2(\mu_i^2 + \sigma_i^2)}}> C_\eta'\right),
    \end{align*}
    
    \noindent since, by assumption, $C_\eta'\sqrt{2C} \ge C_\eta'\sqrt{2(\mu_i^2 + \sigma_i^2)}$. 
    
    \noindent Next, we have that $\epsilon_i = (\epsilon_i - \mu_i) + \mu_i$ so that, since $(a + b)^2 \le 2(a^2 + b^2)$,

    \begin{align*}
        \epsilon_i^2 \le 2 \{(\epsilon_i - \mu_i)^2 + \mu_i^2\}.
    \end{align*}

    \noindent Therefore,

    \begin{align*}
       \frac{\epsilon_i^2}{2(\mu_i^2 + \sigma_i^2)} \le \frac{(\epsilon_i-\mu_i)^2 + \mu_i^2}{(\mu_i + \sigma_i)^2} \le \frac{(\epsilon_i - \mu_i)^2}{\sigma_i^2} + 1,
    \end{align*}

    \noindent which follows because $\mu_i^2+\sigma_i^2 \ge \sigma_i^2$ and $\mu_i^2/(\mu_i^2+\sigma_i^2)\le 1$. Therefore,

    \begin{align*}
        \frac{\epsilon_i^2}{2(\mu_i^2 + \sigma_i^2)} -1 \le \frac{(\epsilon_i - \mu_i)^2}{\sigma_i^2},
    \end{align*}

    \noindent which then implies,

    \begin{align*}
        \left\{\lvert\epsilon_i\rvert:\frac{\lvert\epsilon_i\rvert}{\sqrt{2(\mu_i^2 + \sigma_i^2)}} > \sqrt{C_\eta^2 + 1}\right\}\subseteq\left\{\lvert\epsilon_i\rvert:\frac{\lvert\epsilon_i-\mu_i\rvert}{\sigma_i} >C_\eta\right\}.
    \end{align*}

    \noindent Therefore, for $C_\eta' = \sqrt{1 + C_\eta^2}$, putting everything together we have,
    
    \begin{align*}
        \Pr\left(\frac{\lvert \epsilon_i\rvert}{\sqrt{2C}}> C_\eta'\right)\le\Pr\left(\frac{\lvert \epsilon_i\rvert}{\sqrt{2(\mu_i^2 + \sigma_i^2)}}> C_\eta'\right) \le \Pr\left(\frac{\lvert \epsilon_i -\mu_i\rvert}{\sigma_i}> C_\eta\right) \le\eta.
    \end{align*}

    \noindent Note that this result follows from purely deterministic inequalities and makes no distributional assumptions.
\end{proof}

\begin{proof}[Proof of Proposition \ref{prop:shift}]
    The proof follows from noting that $\sigma_{i1,10}\le\sigma_{i1,1}\sigma_{i1,0}$, and that

    \begin{align*}
        (\mu_{i1,0} + \mu_{i1,1})^2 + (\sigma_{i1,0}^2 + \sigma_{i1,1}^2 + \sigma_{i1,10})&\le\mu_{i1,0}^2(1+\alpha)^2 + \sigma_{i1,0}^2(1 + \sqrt{\gamma})^2 \\
        &\le C_{\alpha\gamma}^2(\mu_{i1,0}^2 + \sigma_{i1,0}^2)
    \end{align*}

    \noindent so that when

    \begin{align*}
        \sqrt{\sigma_{i1,0}^2+\mu_{i1,0}^2} \le \omega\|\boldsymbol{\tilde\epsilon}_i\|
    \end{align*}

    \noindent we know that

    \begin{align*}
        \sqrt{\E[(\tilde\epsilon_{i1,0} + \tilde\epsilon_{i1,1})^2]} \le \sqrt{\sigma_{i1,0}^2(1+\sqrt{\gamma})^2+\mu_{i1,0}^2(1+\alpha)^2} \le C_{\alpha\gamma}\omega\|\boldsymbol{\tilde\epsilon}_i\|.
    \end{align*}
\end{proof}

\newpage

\section{Relating the CATE, CDE, and ITE under a linear model}\label{appsec:model}

In this section we illustrate how the CATE, CDE, and ITE differ in state-policy settings using simple linear models in a single cross-section of data, allowing for the kinds of heterogeneity discussed in Section \ref{sec:2}. We consider the following model of state-level potential outcomes with respect to a multi-valued intervention $M$, some effect-modifying covariate of interest $X$, and a second effect modifier $U$, where, for simplicity, we have that $(X_i, U_i)\stackrel{iid}\sim\mathcal N\left(0, \left(\begin{matrix} \sigma^2_x & \sigma_{xu} \\ \sigma_{xu} & \sigma^2_u \end{matrix}\right)\right)$. The potential outcomes $Y_i(m, x)$ are then given by

\begin{align*}
Y_{i}(0,x) &= \delta_0 + \alpha_0 U_i + \beta_0 x + \epsilon_{i,0}, \\
Y_{i}(1,x) &= Y_i(0, x) + \delta_1 + \alpha_1U_i + \beta_1x + \epsilon_{i,1}, \\
Y_{i}(2,x) &= Y_i(0, x) + \delta_2 + \alpha_2U_i + \beta_2x + \epsilon_{i,2},
\end{align*}

\noindent and where $\E[\epsilon_{i,m}\mid U_i, X_i, M_i] = 0$ for all $m$. The conditional independence of the error term could be ensured, for example, in a randomized setting where, for example, $X \perp Y(m, x) \mid U$ for all $m, x$ -- that is, the effect modifier is randomly assigned conditional on $U_i$ -- and where $M \perp Y(m) \mid U, X$ -- that is, where the treatment is randomly assigned conditional on $X$ and $U$.\footnote{Combining these assumptions with SUTVA, we may relate the model of the potential outcomes to the observed outcomes,

\begin{align*}
Y_i = \delta_0 + \alpha_0U_i + \beta_0X_i +\sum_{m=1}^2\I(M_i = m)[\delta_m + \alpha_mU_i + \beta_m X_i + \epsilon_{i,m}] + \epsilon_{i,0},
\end{align*}

\noindent which may be estimated using observed data.} 

To relate this model to our discussion about heterogeneity, assume that $M$ reflects two kinds of NALs, $X$ is some other related state-policy, and $U$ is some average of demographics within a state. When $\alpha_1, \alpha_2$ are non-zero, heterogeneity with respect to $U$ may be due to either (1) individual-level heterogeneity, or (2) contextual factors, noting again that we cannot distinguish between these factors at the aggregate level. When $\beta_1, \beta_2$ are non-zero, heterogeneity is due to variation in the state-policy environment. The error terms $\epsilon_{i,m}$ allow for additional unmodeled sources of heterogeneity independent of $(U_i, X_i)$ which may come from any of the aforementioned sources (individual factors, contextual factors, and the policy environment).

This model quantifies the potential outcomes with respect to both $M$ and $X$ jointly, and we can use it to characterize CATEs, CDEs, and ITEs. Specifically, this model implies the following equalities:

\begin{align*}
\psi_m(x) &= \delta_m + \beta_mx, \\
\tilde\psi_m(x) &= \delta_m + \left(\beta_m + \alpha_m\frac{\sigma_{xu}}{\sigma^2_x}\right)x, \\
\psi_{i,m} &= \delta_m + \alpha_m U_i + \beta_m X_i + \epsilon_{i,m}.
\end{align*}

From this, we can see that the CDE, CATE, and ITE are equal under the following conditions: $\epsilon_{i, m} = 0$, $\alpha_m = 0$, and $x = X_i$. In other words, when there is no unmodeled heterogeneity, no effect modification by $U_i$, and we choose to evaluate the CATE and ITE at $X_i$, the observed covariate value for state $i$. Moreover, this result illustrates that the CATE is generally biased for the CDE, with the bias increasing in the correlation between $X_i$ and $U_i$ and the degree of heterogeneity that exists with respect to $U_i$. In other words, unless $X_i$ and $U_i$ are independent ($\sigma_{xu} = 0$),  $U_i$ does not affect the outcome $(\alpha_m = 0)$, or we consider the point $x = 0$ (or more generally, at $\E[X]$), the CATE does not equal the CDE and has no causal interpretation with respect to the effect modifiers of interest. Finally, neither the CDE nor the CATE equals the ITE, which depends on the specific values of $X_i, U_i$ and the idiosyncratic error term $\epsilon_{i,m}$. This illustrates the so-called ``fundamental problem of causal inference:'' we may observe $Y_i(1)$ or $Y_i(0)$ for any unit $i$, but never both \citep{holland1986statistics}.\footnote{We briefly also note that under this specific framework, we may define a linear projection that equals the true CATE function $\tilde\psi_m(x)$, since we have defined the CATE to be a line with respect to $X$. However, this equality will not generally hold.}

We now consider an analysis that uses the coarsened policy indicator $A$ rather than $M$. Applying the results from Section \ref{ssec:coarsening} and Proposition \ref{prop:rep}, we obtain: 

\begin{align*}
\psi_A(x) &= \sum_{m=1,2}\E[Y(m,x)-Y(0,x)\mid M(1) = m]\Pr(M(1) = m \mid x), \\
\tilde\psi_A(x) &= \sum_{m=1,2}\E[Y(m)-Y(0) \mid M(1) = m, X = x] \Pr(M(1) = m \mid x),\\
\psi_{i, M(1)} &= Y_i(M_i(1))-Y_i(0)
\end{align*}

Again, these expressions will differ as they are functions of underlying quantities that also differ. Overall, this model provides a simple mathematical illustration of how and why the CATE, CDE, and ITE, and their coarsened analogues, will generally differ in state-policy settings.

\newpage

\section{Simulation specifications}\label{appsec:sim}

The simulation in Section \ref{ssec:simulation} uses the models defined in Appendix \ref{appsec:model} with the following parameterizations, though extending it to create a longitudinal dataset as described below. To generate covariates, we draw each variable from a normal distribution where $\mu_x = \mu_u = 0$, and $\sigma^2_x = \sigma^2_u = 1$, and $\sigma_{xu} = 1/8$. We then set the following probabilities of treatment assignment:

\begin{align*}
    \Pr(A = 1 \mid x) &= \Phi(a + x)\\
    \Pr(M(1) = 2 \mid x) &=\Phi(x)
\end{align*}

\noindent for $a = \sqrt{2}\Phi^\inv(2/3)$. This has two important implications: first, that $M(1) \perp A \mid X$, meaning that $\Pr(M(1) = m \mid x) = \Pr(M = m \mid x, A = 1)$. And second, that $M(1) \perp (Y(m,x), Y(m)) \mid X$, implying that $\E[Y(m,x)\mid M(1)=m] = \E[Y(m,x)]$, and $\E[Y(m) \mid M(1) = m, x] = \E[Y(m) \mid x]$. In brief, this means that, for our simulation, $\psi_A(x)$ and $\tilde\psi_A(x)$, the coarsened analogues of the CATE and CDE, are simply weighted averages of $\psi_m(x)$ and $\tilde\psi_m(x)$, respectively, with weights given by the conditional probability of receiving intervention type among the treated group.

Consistent with our discussion of $Y(A=1)=Y(M(1))$, we make Bernoulli draws of $A$ and $M(1)$ separately (noting that $M(1) \in \{1,2\}$), and then set $M = M(1)A$. This results in $\Pr(M = 0) \approx 0.33$, $\Pr(M = 1) \approx 0.26$, $\Pr(M = 2) \approx 0.41$. 

To generate the potential outcomes, we set $\alpha_0 = \beta_0 = 0.5$,  $\alpha_1=\beta_1 = \delta_1 = 1$, and $\alpha_2 = \beta_2 = \delta_2 = -1.5$. However, in contrast to the model specified in Appendix \ref{appsec:model}, which focused on a single cross-section of data, we generate $T = 10$ time-periods, where units are only treated at $t = 10$, so that $Y_{it}=Y_{it}(0)$ when $t < 10$, and at $t = 10$ we apply consistency so that $Y_{iT} = Y_{iT}(M)$. For the model of $Y_{it}(0)$, we draw a common time-varying intercept $\delta_t \sim \mathcal N(0,1)$ that takes the place of $\delta_0$ in the parameterization above. Finally, we also allow the idiosyncratic error terms to vary independently over time, setting $\epsilon_{it,0}\stackrel{iid}\sim\text{Exp}(1)-1$, $\epsilon_{it,1}\stackrel{iid}\sim\text{Exp}(1.5)-2/3$, and $\epsilon_{it,2}\stackrel{iid}\sim\text{Exp}(1.5)-2/3$. 

To generate the estimands in Figure \ref{fig:illustration}, we plot data points from a single dataset of $N = 1,000$, where all of the above parameterizations are the same as described above, except setting $\sigma_{xu} = 1/4$ to better illustrate possible bias between $\psi_m(x)$ and $\tilde\psi_m(x)$.

\newpage

\section{Additional results}\label{app:results}

\subsection{Simulation: full results}

Table \ref{tab:full} displays the full simulation results of each of the five estimators considered: the simple difference-in-differences estimator using the mean absolute pre-period errors $\boldsymbol{\tilde\epsilon}_i$ to inform $\tau_i$ setting $Z_1=Z_0 = 2$, denoted $\tau$;  the infeasible estimator that uses $\tau_i^{1-A_i,\star}$ for the sensitivity parameter, denoted $\tau^\star$; the covariate adjusted DiD estimator using the pre-period deviations to inform $\tau_i$ as before, denoted $\tau(x)$; the infeasible covariate adjusted estimator that uses $\tau_i^{1-A_i,\star}$, denoted $\tau_i^\star(x)$; and finally, the CATE estimator using OLS, described in Section \ref{sec:bounding}. 

The two metrics we consider are displayed under the Statistic column: coverage, or the proportion of states whose bounds / confidence intervals contained the ITE; and power and sign, or the proportion of bounds / confidence intervals that both excluded zero and had correctly signed point estimates. We display these metrics for both treated and control units, separately, as denoted in the respective columns, as well as for different sample sizes ($N$).

\begin{table}[ht]
\centering
\caption{Simulation: full results}
\label{tab:full}
\begin{tabular}{llrrrrrr}
\toprule
\multicolumn{2}{c}{ } & \multicolumn{3}{c}{Treatment} & \multicolumn{3}{c}{Control} \\
\cmidrule(l{3pt}r{3pt}){3-5} \cmidrule(l{3pt}r{3pt}){6-8}
Estimator & Statistic & N = 50 & N = 25 & N = 15 & N = 50 & N = 25 & N = 15\\
\hline
ITE bounds: $\tau$ & Coverage & 0.833 & 0.835 & 0.842 & 0.570 & 0.587 & 0.585\\
ITE bounds: $\tau(x)$ & Coverage & 0.841 & 0.842 & 0.842 & 0.543 & 0.566 & 0.572\\
ITE bounds: $\tau^\star$ & Coverage & 1.000 & 1.000 & 1.000 & 1.000 & 1.000 & 1.000\\
ITE bounds: $\tau(x)^\star$ & Coverage & 1.000 & 1.000 & 1.000 & 1.000 & 1.000 & 1.000\\
CATE: OLS (LP target) & Coverage & 0.544 & 0.653 & 0.685 & 0.516 & 0.637 & 0.694\\
\hline
ITE bounds: $\tau$ & Power and sign & 0.511 & 0.489 & 0.477 & 0.213 & 0.208 & 0.197\\
ITE bounds: $\tau(x)$ & Power and sign & 0.462 & 0.396 & 0.338 & 0.132 & 0.136 & 0.145\\
ITE bounds: $\tau^\star$ & Power and sign & 0.779 & 0.767 & 0.768 & 0.438 & 0.440 & 0.432\\
ITE bounds: $\tau(x)^\star$ & Power and sign & 0.756 & 0.733 & 0.703 & 0.326 & 0.342 & 0.355\\
CATE: OLS (LP target) & Power and sign & 0.324 & 0.252 & 0.226 & 0.110 & 0.080 & 0.087\\
\bottomrule
\end{tabular}
\end{table}

The covariate adjusted estimators generally perform worse than their unconditional analogues, possibly a function of the fact that the model for the covariate adjustment is misspecified due to treatment coarsening. The CATE estimators generally perform worst of all, noting, however, that the target of inference of this estimator is a linear projection of the CATE and not the ITE.

\subsection{Application: average effects}

We estimate average and conditional effects using a difference-in-differences framework, under assumptions \ref{asmpt:pt0} and \ref{asmpt:pt1}. When $X = \emptyset$, the estimand is an ATE that averages over states that did and did not have and/or implement PDMPs in this same time period, as well as states with different levels of providers located within rural communities. We also estimate CATEs that stratify within these variables, letting $D$ be an indicator of having a PDMP in 2013 and 2014, and $R$ be an indicator of states in the top 25\% of providers located within rural communities.\footnote{Additional conditional (mean) independence assumptions with respect to $Y(a,d)$ and $D$ would also allow us to equate these CATEs with corresponding CDEs.} We assume the parallel-trends assumptions hold on the corresponding conditioning sets to achieve causal identification. We estimate each treatment effect using a two-way fixed effects model that includes both unit and time fixed effects, subsetting the data to the relevant conditioning set. 

None of these estimates are statistically significant and the point estimates are all close to zero. 
Table \ref{tab:avgfx} displays the results. Cells containing ``$-$'' denote that the corresponding quantity was not estimable due to data limitations. 

\begin{table}
\centering
\caption{Average and conditional effect estimates}\label{tab:avgfx}
\begin{tabular}{lrrrrrr}
\toprule
Estimand & N & Treated N & Estimates & Std Error & Lower 95\% CI & Upper 95\% CI\\
\midrule
$\E[Y(1)-Y(0)]$ & 50 & 26 & 0.013 & 0.013 & -0.012 & 0.039\\
$\E[Y(1)-Y(0) \mid R = 0]$ & 37 & 21 & 0.007 & 0.013 & -0.019 & 0.033\\
$\E[Y(1)-Y(0) \mid R = 1]$ & 13 & 5 & 0.034 & 0.037 & -0.039 & 0.107\\
$\E[Y(10)-Y(00) \mid D = 0]$ & 45 & 24 & 0.013 & 0.013 & -0.012 & 0.038\\
$\E[Y(10)-Y(00) \mid D = 0, R = 0]$ & 32 & 19 & 0.007 & 0.012 & -0.016 & 0.031\\
$\E[Y(10)-Y(00) \mid D = 0, R = 1]$ & 13 & 5 & 0.034 & 0.037 & -0.039 & 0.107\\
$\E[Y(11)-Y(01) \mid D = 1]$ & 2 & 1 & 0.019 & - & - & -\\
$\E[Y(11)-Y(01) \mid D = 1, R = 0]$ & 2 & 1 & 0.019 & - & - & - \\
$\E[Y(11)-Y(01) \mid D = 1, R = 1]$ & 0 & 0 & - & - & - & -\\
\bottomrule
\end{tabular}
\end{table}
\raggedright

\end{document}